\documentclass[a4paper,11pt]{article}

\usepackage[utf8]{inputenc} 
\usepackage[T1]{fontenc}
\usepackage[english]{babel}
\usepackage[utopia,cal]{mathdesign}\usepackage{amsmath,amsthm} 
\usepackage{erewhon}

\usepackage{natbib}
\bibliographystyle{chicago} 

\usepackage{xcolor}
\oddsidemargin0cm \evensidemargin0cm
\textheight22cm \textwidth15.6cm 

\usepackage{setspace}
\onehalfspacing
\usepackage{enumitem}\setlist[enumerate]{label={\rm(\roman*)}}

\usepackage{mathtools}
\usepackage{dsfont} 


\theoremstyle{plain}
\newtheorem{theorem}{Theorem}

\newtheorem{lemma}{Lemma}

\theoremstyle{definition}
\newtheorem{definition}{Definition}

\newtheoremstyle{key}{3pt}{3pt}{}{}{\itshape}{:}{.5em}{}
\theoremstyle{key}
\newtheorem*{JEL}{JEL subject classification}
\newtheorem*{keywords}{Keywords}


\newcommand{\abs}[1]{\lvert#1\rvert}
\newcommand{\absd}[1]{\left\lvert#1\right\rvert}

\newcommand{\indi}[1]{\mathbf{1}_{\{#1\}}}


\newcommand{\R}{\mathbb{R}}


\newcommand{\A}{{\mathcal A}}

\newcommand{\F}{{\mathcal F}}



\newcommand{\cA}{{\cal A}}

\newcounter{enum+} 
\newcommand{\comment}[1]{}



\title{\textbf{Strategic Irreversible Investment}\thanks{Financial support by the German Research Foundation (DFG) [CRC 1283/2 2021 – 317210226] is gratefully acknowledged.}}
\author{Jan-Henrik Steg\thanks{Center for Mathematical Economics, Bielefeld University, Germany. E-mail: jsteg@uni-bielefeld.de.}}

\date{October 19, 2024}

\begin{document}

\maketitle

\begin{abstract}
This paper studies oligopolistic irreversible investment with closed-loop strategies. These permit fully dynamic interactions that result in much richer strategic behavior than previous studies with open-loop strategies allow. The tradeoff between preemption incentives and the option value of waiting becomes distinctly visible. Strategies that depend on present capital stocks enable credible reactions that deter from excessive preemption and support positive option values in equilibrium. Simpler strategies lead into a ``preemption trap'' with perfectly competitive outcome and zero net present values. To obtain these results, a novel concept of Markov perfect equilibrium is developed that copes with optimal investment taking the form of singular control.
\end{abstract}

\begin{JEL}
C61, C73, D25, D43, G31, L11, L13
\end{JEL}

\begin{keywords}
Irreversible investment, Oligopoly, Markov perfect equilibrium, Singular control
\end{keywords}

\section{Introduction}

Investment in real assets like production capacity has a strong strategic value in competitive environments, because it constitutes a credible commitment. But long-term commitments are also risky, as future returns depend on many uncertain factors. Trading off these implications is crucial for deciding how much to invest at which points in time—i.e., for assessing which commitments are in fact valuable. Thus, it is already well understood that the real options approach to investment, which focuses on uncertainty that cannot be influenced, needs to be enriched by a game theoretic perspective to analyze the strategic consequences.

However, attempts to do so have been restricted by the fact that the familiar game theoretic concepts cannot easily be applied to the standard setting of real options theory. The real option models used in both theory and practice are formulated in continuous time, because then one has powerful tools that reduce the problems of determining the option value and the optimal investment strategy to solving deterministic differential equations. This permits to consider many different features of real options, whereas modeling highly dynamic strategic interactions quickly leads to conceptual problems or intractability.

As a result, most studies of strategic real options consider only a single investment opportunity for each firm. This is enough to show that preemption can significantly reduce the option value of waiting, but this also strongly limits strategic reactions---once a firm has invested, the other is left with a pure optimization problem.\footnote{There is a large literature on real option games with indivisible investment, which includes, to give just some typical examples, \cite{Grenadier96}, \cite{Weeds02}, and \cite{Bustamante15}. Many of these papers obtain similar results as \cite{FudenbergTirole85} did in a deterministic setting.}
A smaller strand of the literature has considered divisible investments. In these models, however, the firms are restricted to using open-loop strategies, which do not react to deviations from planned investment.\footnote{Divisible investment with open-loop strategies is studied by, e.g., \cite{Baldursson98}; \cite{Grenadier02}; \cite{Aguerrevere03}; \cite{Aguerrevere09}; \cite{BackPaulsen09}; \cite{Steg12}; and \cite{MorellecZhdanov19}.} 
In the case of capacity investment, the game is then effectively a one-shot Cournot competition, where each firm commits to an entire investment path instead of a single investment size. This rules out strategic moves like any preemptive investment by assumption, because the other firms plainly cannot react---they must follow their initial plans also when these are no longer optimal. Such models seem unsatisfactory for many applications.\footnote{The plausibility of precommitment strategies was already challenged by \cite{Spence79} in the context of a deterministic capital accumulation game and also by \cite{FudenbergTirole85} in the context of a technology adoption timing game.}

This paper studies divisible investments with closed-loop strategies. At each point in time, the firms can make arbitrary investments to increase their capital stocks. The strategies allow the firms to condition their investments not only on the evolution of the aggregate uncertainty factor but also on the actual capital stocks. This permits the study of fully dynamic interactions. As a result, a novel and much richer strategic behavior can be observed. In equilibrium, any additional investment would also increase opponent investment, which discourages any firm from investing more than planned. These reactions are credible and gradual, because the strategies depend only on the current state---the equilibrium is Markov perfect. There is no threat to switch to an extreme punishment regime, which would require to keep track of the whole investment path of every firm. 

This new type of equilibrium investment exists with different levels to which preemption is moderated. Hence, there is also a range for the option values that remain in equilibrium. More aggressive investment implies lower equilibrium payoffs. In the limit, the net present value of any investment is zero—like under perfect competition. This represents extreme preemption in consequence of the divisibility of investment and is the closed-loop equilibrium with the most simple form. But there is also a limit to how permissive the more differentiating strategies can be in equilibrium, which then implies an upper bound for the payoffs that can be sustained.

These equilibrium strategies are in sharp contrast to the open-loop equilibrium. There, since the firms are assumed not to react to actual investments, it is not possible to compete for any option values and, thus, also not necessary to sustain them by particular strategies. In fact, it is even optimal to behave myopically and ignore all future investments.\footnote{The optimality of myopic behavior was first observed by \cite{Leahy93} in a model of perfect competition and then reinforced by \cite{BaldurssonKaratzas97}. It was transferred to an oligopoly model by \cite{Grenadier02} and \cite{BackPaulsen09}, and a general proof was given by \cite{Steg12}.}
The only relation between the firms’ investment decisions in open-loop equilibrium is the same as in static Cournot competition, and this alone determines option values and profits.

Establishing closed-loop equilibria requires to overcome the conceptual difficulties discussed by \cite{BackPaulsen09}---which open-loop strategies simply evade by the lack of interaction. Therefore, this paper develops an appropriate concept of Markov perfect equilibrium for the relevant class of games. It allows the firms to follow the kind of investment paths that are optimal in all related models of divisible investment under uncertainty. These take the form of singular control, which is ubiquitous in optimal control problems under uncertainty with linear cost and means that one cannot quantify the rate at which optimal investment happens. Instead, one needs to characterize the cumulative investment (or control in general), and doing this with feedback strategies is conceptually much more difficult in a game than for a single agent. Nevertheless, the concept developed here permits equilibrium strategies that have an intuitive representation and are easy to interpret.

The paper is organized as follows. Section~\ref{sec:model} presents the basic model. The novel concept of closed-loop equilibrium is developed in Section~\ref{sec:eql}. Section~\ref{sec:approach} describes the solution approach and develops a central verification theorem. Since the approach relies on explicit solutions, Section~\ref{sec:application} specifies the model more concretely for the subsequent results. Section~\ref{sec:static} proves that the zero-NPV rule is supported by a closed-loop equilibrium, while Section~\ref{sec:dynamic} proves that more reactive strategies support also positive option values in closed-loop equilibria. These results are further discussed in Section~\ref{sec:discussion}. Finally, three appendices contain most of the details for the formal proofs.


\section{Model}\label{sec:model}

Two firms accumulate capital by irreversible investment under uncertainty. Time $t\geq 0$ is continuous and there is a given probability space $(\Omega,\F,P)$ with a filtration $(\F_t)$ that represents the dynamic information about the state of the world. Denote the capital stock of firm $i$ at time $t$ by $Q_t^{i}$. It is assumed that foresight is impossible, investment is irreversible, but new capital is installed instantaneously. Thus, a \emph{capital stock process} $Q^{i}=(Q_t^{i})$ is adapted to the given filtration, nondecreasing, and right-continuous. 

The cost of an additional unit of capital is normalized to one at any time, and there is a constant required rate of return $r>0$. Thus, the expected net present value of the investment cost for $Q^{i}$ is $E\int_0^{\infty}e^{-rt}\,dQ_t^{i}$. Given some initial capital stock $q^{i}\geq 0$, a capital stock process is \emph{admissible} if the investment cost is finite and $Q_0^{i}\geq q^{i}$. Denote the set of admissible capital stock processes by $\A(q^{i})$.

Each firm's operating profit flow depends on the capital stocks of both firms. Additionally, there is an exogenous economic shock $X=(X_t)$ that solves a stochastic differential equation
\begin{equation*}
dX_t=\mu(X_t)\,dt+\sigma(X_t)\,dB_t,
\end{equation*}
where $(B_t)$ is a one-dimensional Brownian motion on the filtered probability space. Assume the functions $\mu(x)$ and $\sigma(x)$ satisfy the standard conditions for existence of a unique strong solution for any initial condition $X_0=x$.\footnote{That is, there exist constants $c$ and $d$ such that $\abs{\mu(x)}+\abs{\sigma(x)}\leq c(1+\abs{x})$ and $\abs{\mu(x)-\mu(y)}+\abs{\sigma(x)-\sigma(y)}\leq d\abs{x-y}$ for all $x$ and $y$.}
Referring to the respective other firm by $-i$, the operating profit of firm $i$ at time $t$ is $\pi(X_t,Q_t^{i},Q_t^{-i})$. Assume the function $\pi$ is bounded from below.\footnote{Alternatively, one can assume that $e^{-rt}\pi(X_t,Q_t^{i},Q_t^{-i})$ is $P\otimes dt$-integrable for admissible $Q^{i}$ and $Q^{-i}$.}
Then firm $i$'s \emph{payoff}
\begin{equation*}
\Pi(Q^{i},Q^{-i})=E\int_0^{\infty}e^{-rt}\pi(X_t,Q_t^{i},Q_t^{-i})\,dt-E\int_0^{\infty}e^{-rt}\,dQ_t^{i}
\end{equation*}
is well defined for any pair of admissible capital stock processes.\footnote{At this point, $\Pi(Q^{i},Q^{-i})$ potentially takes the value $\infty$, but appropriate assumptions on the parameters in the considered applications will ensure finite equilibrium payoffs.}

Note that, in order to focus on the pure strategic effect of capital commitment, the firms' payoffs differ only by the attribution of the capital stock processes.



\section{Equilibrium concept}\label{sec:eql}

If a firm were to directly choose a capital stock process, it would use an open-loop strategy. It would condition investment on the evolution of the exogenous shock but not react to the actual capital stock of the other firm. \cite{BackPaulsen09} showed that equilibria with such strategies, like in \cite{Grenadier02}, are not subgame perfect---the lack of reactions is not credible. But \cite{BackPaulsen09} also argued that formulating closed-loop strategies that allow reactions to actual behavior raises fundamental conceptual difficulties. The novel concept developed in this section overcomes these issues.

The first issue is that optimal investment in all the related models of monopoly, oligopoly with open-loop strategies, and perfect competition takes the form of singular control. This means that $Q_t^{i}$ increases at a rate $dQ_t^{i}$ that is zero almost everywhere and otherwise undefined. Therefore, it is impossible to quantify the capital increments or investments. Instead, strategies need to relate to the levels of the capital stocks or cumulative investments. However, as capital stocks must be nondecreasing, it is not viable either to specify $Q_t^{i}$ by a function of some current state.\footnote{Besides, choosing actions in continuous time would raise the issues discussed by \cite{SimonStinchcombe89}.} 

The solution proposed here is inspired by a general characterization of optimal irreversible investment that holds in the cases of monopoly and open-loop equilibrium.\footnote{See, respectively, \cite{RiedelSu11} and \cite{Steg12}.}
It involves a ``base capacity'' process such that each firm just invests enough to keep its capital stock above that base capacity. Here, the base capacity will be a function $\phi^{i}$ of the state relevant for firm $i$, which consists of the current level of the exogenous shock and the other firm's capital stock. This means
\begin{equation}\label{phi_i}
Q_t^{i}=q^{i}\vee\sup_{0\leq s\leq t}\phi^{i}(X_s,Q_s^{-i}).
\end{equation}
$\phi^{i}$ is firm $i$'s strategy.

However, these strategies 
are still subject to the second issue discussed in \cite{BackPaulsen09}: intuitively appealing strategies often fail to imply a unique outcome. This is in fact a general issue for continuous-time games.\footnote{This issue was already discussed by \cite{Anderson84} and \cite{SimonStinchcombe89}.}
Here, an \emph{outcome} for a given initial state $(x,q^{1},q^{2})$ 
is an admissible pair $(Q^{1},Q^{2})\in\cA(q^{1})\times\cA(q^{2})$. Requiring the system of equations consisting of \eqref{phi_i} for each $i$ to define a unique outcome would rule out even simple trigger strategies.

Instead, the solution proposed here is to require an outcome that is optimal for both firms in the following strong sense: No firm $i$ can find any better outcome that could occur given the other firm's strategy $\phi^{-i}$. This guarantees each firm $i$ an optimal outcome. In particular, there is no need to worry that the outcome might be ambiguous by using any strategy $\phi^{i}$ of the given form.  

Formally, an outcome $(Q^{1},Q^{2})$ is said to be \emph{consistent with} a strategy $\phi^{-i}$ for the other firm if equation \eqref{phi_i} holds for that firm, which means
\begin{equation*}
Q_t^{-i}=q^{-i}\vee\sup_{0\leq s\leq t}\phi^{-i}(X_s,Q_s^{i}).
\end{equation*}
These are the outcomes that firm $i$ is allowed to consider. An equilibrium outcome must be consistent with both firms' strategies simultaneously. This means it needs to satisfy the system of equations 
\begin{equation*}
\left.\begin{aligned}
Q_t^{1}&=q^{1}\vee\sup_{0\leq s\leq t}\phi^{1}(X_s,Q_s^{2}), \\
Q_t^{2}&=q^{2}\vee\sup_{0\leq s\leq t}\phi^{2}(X_s,Q_s^{1}).
\end{aligned}\;\;\right\}
\end{equation*}
Any such outcome is said to be an \emph{outcome from} the strategy pair $(\phi^{1},\phi^{2})$.

Given that the exogenous shock $X$ is a time-homogeneous Markov process and that the past evolution of capital stocks does not matter for future profits, the equilibrium condition is required to hold for any relevant state without explicit consideration of time.

\begin{definition}\label{MPE}
A pair of strategies $(\phi^{1},\phi^{2})$ is a \emph{Markov perfect equilibrium} for initial capital stocks $(q_0^{1},q_0^{2})$ if for every state $(x,q^1,q^2)$ with $q^{1}\geq q_0^{1}$ and $q^{2}\geq q_0^{2}$ there exists an outcome from $(\phi^{1},\phi^{2})$ that, for each $i$, maximizes $\Pi(Q^{i},Q^{-i})$ among all outcomes that are consistent with $\phi^{-i}$.
\end{definition}


\section{Solution approach}\label{sec:approach}

A Markov perfect equilibrium requires to solve the optimization problems
\begin{equation*}
\begin{aligned}
&\max\; \Pi(Q^{i},Q^{-i}),\quad &&(Q^{i},Q^{-i})\in\A(q^{i})\times\A(q^{-i}), \\[6pt]
&\text{s.t.} &&Q_t^{-i}=q^{-i}\vee\sup_{0\leq s\leq t}\phi^{-i}(X_s,Q_s^{i}).
\end{aligned}
\end{equation*}
In the case of open-loop strategies, firm $i$ takes some $Q^{-i}\in\A(q^{-i})$ as given and chooses only $Q^{i}\in\A(q^{i})$. Then it turns out that optimal investment is \emph{myopic}: ignoring all future investments by any firm, just determine the optimal time to add a marginal unit.\footnote{Formally this means that the connection between singular control problems of the monotone follower type and optimal stopping problems established by \cite{KaratzasShreve84} still holds in open-loop equilibrium.} 
Such a reduction to optimal stopping is no longer possible with closed-loop strategies, because they imply path dependence: Now the ``continuation payoff'' at any future time $t$ is not only affected by $Q_t^{i}$ but, via $Q_t^{-i}$, by the whole previous path $(Q_s^{i})_{0\leq s\leq t}$. Therefore, a dynamic programming approach will be followed here, based on a central verification theorem that takes into account the other firm's strategy $\phi^{-i}$.

\subsection{Reflection strategies}\label{sec:reflection}

In order to develop the verification theorem, suppose the firms invest when the value of the economic shock is sufficiently high, and where the critical value depends on the current capital stocks. This means firm $i$ invests in any state such that $x>\bar X^{i}(q^{i},q^{-i})$. The function $\bar X^{i}$ is firm $i$'s investment trigger. Assume it is continuous, strictly increasing in $q^{i}$, and such that $\bar X^{i}\to\infty$ as $q^{i}\to\infty$. This implies that the investment trigger rises above $x$ whenever firm $i$ invests enough. Further, to prevent investment circles, assume $\bar X^{i}$ is nondecreasing in $q^{-i}$. Given these properties, it is possible to define a strategy by the minimal capital level such that no further investment is triggered, i.e.,
\begin{equation}\label{phiX}
\phi^{i}(x,q^{-i})=\inf\{q\geq 0\mid x\leq\bar X^{i}(q,q^{-i})\}.
\end{equation}
Then indeed
\begin{equation}\label{q<phi}
q^{i}<\phi^{i}(x,q^{-i}) \iff x>\bar X^{i}(q^{i},q^{-i})
\end{equation}
and
\begin{equation}\label{q=phi}
q^{i}=\phi^{i}(x,q^{-i}) \iff x=\bar X^{i}(q^{i},q^{-i}).
\end{equation}
Alternatively, firm $i$ may also consider a ``trigger'' $\bar X^{i}$ identically equal to $\infty$. Then equation \eqref{phiX} implies that $\phi^{i}$ is identically equal to zero, so that firm $i$ never invests and equivalence \eqref{q<phi} trivially remains true.

Any strategy of the form \eqref{phiX} keeps the state in the region $\{x\leq\bar X^{i}(q^{i},q^{-i})\}$ with minimal effort. After a single discrete investment if initially $x>\bar X^{i}(q^{i},q^{-i})$, the state is reflected at the boundary $\bar X^{i}$ by singular control. Hence, any such $\phi^{i}$ is said to be a \emph{reflection strategy} if the corresponding boundary $\bar X^{i}$ has the assumed properties and, for technical reasons, is continuously differentiable in any $q^{i}>0$ (if $\bar X^{i}$ is finite).

\subsection{Verification theorem}\label{sec:ver}

The following Theorem~\ref{thm:ver} is the main tool for verifying Markov perfect equilibria with reflection strategies. As usual, it provides sufficient conditions for a function $V$ defined on the state space to yield the payoff from a given strategy and also to dominate the payoff from other strategies. But it differs from other verification theorems by the presence of the other firm's strategy $\phi^{-i}$ and further by the fact that firm $i$ can choose its preferred outcome.

The starting point is the typical differential equation that $V$ should solve in the region in which the firms do not invest, because then the payoff evolves like an asset whose price is a function of $X$ alone and that generates a dividend flow $\pi$ (condition~\ref{V_PDE}). In firm $i$'s investment region, the function $V$ anticipates that firm $i$ invests: the change in value just offsets the investment cost, which is one per unit of capital (condition~\ref{V_qi=1}). In contrast, it is not expected that the other firm necessarily invests in its investment region; this will depend on the outcome chosen by firm $i$. It is only anticipated that the other firm invests when its investment region does not intersect that of firm $i$, and then this does not affect firm $i$'s value (condition~\ref{V_q-i=0}).
 
The investments that firm $i$ ``allows'' the other firm to make in the joint investment region in the chosen outcome are anticipated if the implications \eqref{V_q-i_0} and \eqref{V_q-i_t} hold, where the former concerns a discrete initial investment and the latter reflection investments. In both cases, it is enough to consider states on the boundary $\bar X^{i}$ as the theorem shows.

Finally, $V$ needs to satisfy some technical integrability conditions when applied to an outcome, and the further sufficient conditions imply that firm $i$ cannot gain from choosing a different outcome.

\begin{theorem}[Verification]\label{thm:ver}
Let $\phi^{1}$ and $\phi^{2}$ be reflection strategies with corresponding boundaries $\bar X^{1}$ and $\bar X^{2}$. Suppose there exists a differentiable function $V(x,q^{i},q^{-i})$ such that
\begin{enumerate}[label=\arabic*.]
\item\label{V_PDE} on $\{x\leq\bar X^{1}(q^{1},q^{2})\wedge\bar X^{2}(q^{2},q^{1})\}$, $V$ is continuously differentiable, twice with respect to $x$, and satisfies the differential equation
\begin{equation*}
-rV+\pi+\mu V_x+\frac12{\sigma}^2V_{xx}=0,
\end{equation*}
\item\label{V_qi=1} on $\{x\geq\bar X^{i}(q^{i},q^{-i})\}$, $V_{q^{i}}=1$, 
\item\label{V_q-i=0} on $\{\bar X^{i}(q^{i}q^{-i})>\bar X^{-i}(q^{-i},q^{i})\}\cap\{\bar X^{i}(q^{i},q^{-i})\geq x\geq\bar X^{-i}(q^{-i},q^{i})\}$, $V_{q^{-i}}=0$. 
\end{enumerate}
Then, for any given state $(x,q^{1},q^{2})$ and $X_0=x$,
\begin{equation*}
V(x,q^{i},q^{-i})=\Pi(Q^{i},Q^{-i})
\end{equation*}
for any outcome from $(\phi^{1},\phi^{2})$ that satisfies
\begin{equation}\label{V_q-i_0}
\forall q\in(q^{-i},Q_0^{-i}):\quad Q_0^{i}<\phi^{i}(x,q)\implies V_{q^{-i}}(x,\phi^{i}(x,q),q)=0,
\end{equation}
\begin{equation}\label{V_q-i_t}
\forall t>0:\quad \indi{X_t=\bar X^{i}(Q_t^{i},Q_t^{-i})=\bar X^{-i}(Q_t^{-i},Q_t^{i})}\,dQ_t^{-i}>0\implies V_{q^{-i}}(X_t,Q_t^{i},Q_t^{-i})=0,
\end{equation}
\begin{equation}\label{Vintegrable}
\forall T>0:\quad E\left[\sup_{t\leq T}\left\lvert V(X_t,Q_t^{i},Q_t^{-i})\right\rvert\right]<\infty,
\end{equation}
and
\begin{equation}\label{limV=0}
\lim_{T\to\infty}E\left[e^{-rT}V(X_T,Q_T^{i},Q_T^{-i})\right]=0.
\end{equation}
Furthermore, if $V$ has the additional properties that
\begin{enumerate}[resume*]
\item\label{V_PDE_leq} on $\{x\leq\bar X^{-i}(q^{-i},q^{i})\}$, $V$ is continuously differentiable, twice with respect to $x$, and
\begin{equation*}
-rV+\pi+\mu V_x+\frac12{\sigma}^2V_{xx}\leq 0,
\end{equation*}
\item\label{V_qi<1} on $\{x\leq\bar X^{1}(q^{1},q^{2})\wedge\bar X^{2}(q^{2},q^{1})\}$, $V_{q^{i}}\leq 1$,
\item\label{V_q-i<0} on $\{x=\bar X^{i}(q^{i},q^{-i})\leq\bar X^{-i}(q^{-i},q^{i})\}$, $V_{q^{-i}}\leq 0$,
\end{enumerate}
then, for any given state $(x,q^{1},q^{2})$ and $X_0=x$,
\begin{equation*}
V(x,q^{i},q^{-i})\geq\Pi(Q^{i},Q^{-i})
\end{equation*}
for any outcome that is consistent with $\phi^{-i}$ and satisfies \eqref{Vintegrable} and \eqref{limV=0}.
\end{theorem}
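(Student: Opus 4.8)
The plan is to prove both parts of Theorem~\ref{thm:ver} by applying It\^o's formula to the discounted value process $e^{-rt}V(X_t,Q_t^i,Q_t^{-i})$ along a given outcome, and then reading off the payoff $\Pi$ from the resulting decomposition. The process $(X_t,Q_t^i,Q_t^{-i})$ is a semimartingale: $X$ is the diffusion, $Q^i$ and $Q^{-i}$ are nondecreasing and right-continuous, and by the reflection structure \eqref{phiX}--\eqref{q=phi} they decompose into a (possibly) discrete initial jump at $t=0$ plus a continuous singular part that reflects the state at the relevant boundary $\bar X^i$ (for $Q^i$) or $\bar X^{-i}$ (for $Q^{-i}$). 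First I would write the generalized It\^o/change-of-variables formula, separating (a) the $dt$-drift term, which on the no-investment region equals $e^{-rt}(rV-\pi)\,dt$ by condition~\ref{V_PDE} after incorporating the $-rV\,dt$ from discounting; (b) the $dB_t$-martingale term; (c) the absolutely-continuous-in-$q^i$ term $e^{-rt}V_{q^i}\,dQ_t^{i,c}$ and the jump contributions $e^{-rt}\Delta_{q^i}V$; and likewise the $q^{-i}$ terms.

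The key reductions are then: on firm $i$'s investment region, $V_{q^i}=1$ (condition~\ref{V_qi=1}), so the continuous-reflection increments of $Q^i$ contribute exactly $-e^{-rt}\,dQ^{i,c}_t$, matching the investment-cost term in $\Pi$; the initial discrete jump in $Q^i$ is handled by noting $V_{q^i}=1$ throughout the jump interval (the state starts at or above $\bar X^i$ and $\phi^i$ is exactly the level where the trigger is first reached), giving $-\Delta Q^i_0$. The $Q^{-i}$-increments are the delicate ones: by construction of a reflection strategy for firm $-i$, any continuous reflection of $Q^{-i}$ happens where $X_t=\bar X^{-i}(Q^{-i}_t,Q^i_t)$; condition~\ref{V_q-i=0} kills the contribution when firm $-i$'s trigger is strictly below firm $i$'s (so the joint region is empty there and firm $i$ is not investing), and hypotheses \eqref{V_q-i_0}--\eqref{V_q-i_t} are precisely what is needed to kill the contribution on the joint boundary where both triggers coincide. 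Integrating from $0$ to $T$, taking expectations (the stochastic integral is a true martingale with zero mean, using \eqref{Vintegrable} to justify via localization and dominated convergence), and letting $T\to\infty$ using \eqref{limV=0} yields $V(x,q^i,q^{-i})=E\int_0^\infty e^{-rt}\pi\,dt-E\int_0^\infty e^{-rt}\,dQ^i_t=\Pi(Q^i,Q^{-i})$.

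For the inequality part, I would repeat the computation along an \emph{arbitrary} outcome consistent with $\phi^{-i}$ (not necessarily one of firm $i$'s preferred ones). Now the drift term is only bounded by $e^{-rt}(rV-\pi)\,dt$ on firm $-i$'s no-investment region by~\ref{V_PDE_leq} (and with equality on the smaller joint no-investment region by~\ref{V_PDE}); the $dQ^{i,c}$-term is bounded below by $-e^{-rt}\,dQ^{i,c}_t$ because $V_{q^i}\le 1$ on the no-investment region (condition~\ref{V_qi<1}) and $V_{q^i}=1$ on firm $i$'s investment region (condition~\ref{V_qi=1}) — so in all cases $V_{q^i}\le 1$, and the discrete jumps likewise contribute at least $-\Delta Q^i$ by integrating $V_{q^i}\le 1$; and the $dQ^{-i}$-term is $\le 0$: wherever $Q^{-i}$ increases under a strategy of firm $-i$ we must have $x$ at or above $\bar X^{-i}$, and at such points either $\bar X^{-i}<\bar X^i$ so $V_{q^{-i}}=0$ by~\ref{V_q-i=0}, or $\bar X^{-i}\ge \bar X^i$ so we are on the set of condition~\ref{V_q-i<0} giving $V_{q^{-i}}\le0$ (initial jumps handled by integrating this bound along the jump). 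Collecting terms gives $e^{-rT}V(X_T,\dots)-V(x,q^i,q^{-i})+\text{martingale}\le -\int_0^T e^{-rt}\pi\,dt+\int_0^T e^{-rt}\,dQ^i_t$, and taking expectations and $T\to\infty$ via \eqref{Vintegrable}--\eqref{limV=0} yields $V(x,q^i,q^{-i})\ge\Pi(Q^i,Q^{-i})$.

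\textbf{The main obstacle} I anticipate is the careful bookkeeping of the $Q^{-i}$ terms at $t=0$ and along the joint boundary: one must verify that, for an outcome from $(\phi^1,\phi^2)$, the support of $dQ^{-i}$ genuinely lies in the set where \eqref{V_q-i_0}/\eqref{V_q-i_t} apply (using that an outcome from the strategy pair satisfies \eqref{phi_i} for firm $-i$, hence $Q^{-i}$ only increases on $\{x=\bar X^{-i}(Q^{-i},Q^i)\}$ after the initial jump, and the initial jump ranges over $q\in(q^{-i},Q_0^{-i})$ exactly as in \eqref{V_q-i_0}), and dually that in the inequality part \emph{any} consistent-with-$\phi^{-i}$ outcome has $dQ^{-i}$ supported on $\{x\ge\bar X^{-i}\}$ so that~\ref{V_q-i=0}/\ref{V_q-i<0} cover every case. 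A secondary technical point is justifying the interchange of limit, expectation and stochastic integration — standard localization by stopping times reducing $X$ to a compact set, combined with \eqref{Vintegrable} for uniform integrability and \eqref{limV=0} for the terminal term — which I would treat briskly since it is routine.
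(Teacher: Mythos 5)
Your overall architecture is the same as the paper's: apply It\^o's formula to $e^{-rt}V(X_t,Q_t^{i},Q_t^{-i})$, treat the initial jump separately by integrating the partial derivatives, use conditions \rref{V_PDE}--\rref{V_q-i=0} and \eqrref{V_q-i_0}--\eqrref{V_q-i_t} to identify the $dQ^{i}$-term with the investment cost and to annihilate the $dQ^{-i}$-term, then localize, take expectations, and pass to the limit via \eqrref{Vintegrable} and \eqrref{limV=0}; the inequality part repeats this with the one-sided conditions. The paper additionally isolates, as a lemma, the facts that for an outcome consistent with a reflection strategy the state stays weakly below the boundary, the capital process is continuous for $t>0$, and it increases only on the boundary; you assert these without proof, which is acceptable at the level of a plan.

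There is, however, one genuine gap: you repeatedly apply the derivative conditions on sets where they are not hypothesized, and closing this requires an extra idea that your plan does not contain. Concretely, in the initial-jump integral $\int_{q^{-i}}^{Q_0^{-i}}V_{q^{-i}}(X_0,Q_0^{i},q)\,dq$ the integrand for small $q$ sits at a state with $X_0>\bar X^{i}(Q_0^{i},q)$, i.e.\ strictly inside firm $i$'s investment region, where neither condition \rref{V_q-i=0} nor \eqrref{V_q-i_0} applies directly (\eqrref{V_q-i_0} controls $V_{q^{-i}}$ only at the lifted capital level $\phi^{i}(x,q)$, not at $Q_0^{i}$). Similarly, in the inequality part you claim ``in all cases $V_{q^{i}}\leq 1$,'' but conditions \rref{V_qi=1} and \rref{V_qi<1} leave the strip $\{\bar X^{-i}(q^{-i},q^{i})<x<\bar X^{i}(q^{i},q^{-i})\}$ uncovered, and you invoke condition \rref{V_q-i<0} ``when $\bar X^{-i}\geq\bar X^{i}$,'' whereas that condition holds only on the exact boundary $\{x=\bar X^{i}(q^{i},q^{-i})\}$ and not at states with $x>\bar X^{i}(q^{i},q^{-i})$, which an arbitrary $Q^{i}$ can produce. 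The missing ingredient is a propagation argument: since $V_{q^{i}}=1$ on $\{x\geq\bar X^{i}(q^{i},q^{-i})\}$, the function $V$ is affine in $q^{i}$ there, so $V_{q^{-i}}(x,q^{i},q^{-i})=V_{q^{-i}}(x,\phi^{i}(x,q^{-i}),q^{-i})$ (and dually for $V_{q^{i}}$ using condition \rref{V_q-i=0}), which transports the boundary conditions into the interior of the investment regions; making this rigorous also needs differentiability of $\phi^{i}$ via the implicit function theorem, which is exactly why reflection boundaries are assumed continuously differentiable. This is the content of the paper's Lemmas \rref{lem:V_q} and \rref{lem:V_q_leq}, and without it the term-by-term estimates you describe do not go through as stated.
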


The proof of Theorem~\ref{thm:ver} is given in Appendix~\ref{app:ver}.


\section{Application}\label{sec:application}

As the approach taken here relies on explicit solutions, consider from now on the following application from \cite{Grenadier02}. The firms produce a homogeneous good at full capacity and sell it on a common market. Inverse demand has constant elasticity and is multiplicatively affected by the exogenous shock. With zero variable cost, the operating profit for firm $i$ then is
\begin{equation*}
\pi(x,q^{i},q^{-i})=xP(q^{i}+q^{-i})q^{i}=x(q^{i}+q^{-i})^{-\frac{1}{\gamma}}q^{i}.
\end{equation*}
Assume $\gamma>1$, which implies that marginal profit with respect to $q^{i}$ is positive, increasing in $x$, but decreasing in $q^{i}$. 

Further, the shock process $X$ is now a geometric Brownian motion, so it solves
\begin{equation*}
dX_t=\mu X_t\,dt+\sigma X_t\,dB_t
\end{equation*}
for some constants $\mu$ and $\sigma\neq 0$, and the state space for $X$ is $\{x>0\}$. In order to ensure finite equilibrium values, assume in addition to $r>0$ that
\begin{equation}\label{r>mu_gamma}
r>\gamma\mu+\gamma(\gamma-1)\frac12\sigma^2.
\end{equation} 
Since $\gamma>1$, this can also be regarded as a restriction on the growth rate $\mu$ or the volatility $\sigma$. It implies that $r>\mu$, and it is in fact equivalent to $\beta>\gamma$, where $\beta$ is the unique positive number that, in place of $\gamma$, turns \eqref{r>mu_gamma} into an equality. This $\beta$ is well known from (real) option pricing models and will also play an important role in the value functions here.


\section{Static price triggers}\label{sec:static}

Since investment is fully divisible and precommitment not credible, and the only externality is that inverse demand decreases, it is natural to expect that the firms will preempt any profitable investments. This would mean that in equilibrium they make no profit and effectively follow the simple zero-NPV rule. Thus, competition would completely eliminate the option value of waiting that is still present in open-loop equilibrium. 

The aim of this section is to formally prove that this reasoning is correct if the firms use a very simple type of reflection strategy: to invest whenever inverse demand $xP(q^{i}+q^{-i})$ rises above a constant threshold $p>0$. 

%

Such a rule is known from perfect competition between firms that correspond to marginal units of capital; see \cite{Leahy93}. In a perfectly competitive equilibrium, the net present value of a marginal investment at the threshold must be zero. This determines the equilibrium threshold
\begin{equation*}
p^{*}=(r-\mu)\frac{\beta}{\beta-1}.
\end{equation*}

The basic idea why this constant threshold $p^{*}$ also yields a Markov perfect equilibrium in the present setting is the following. No firm can gain by investing, because the other firm's strategy implies that inverse demand never exceeds $p^{*}$; so the net present value is at best zero. But each firm is also indifferent to make the necessary investments, because then the other firm will not invest; so inverse demand is indeed on the boundary when investment happens, and then the net present value is exactly zero. This reasoning can now be formalized with the equilibrium concept developed in Section~\ref{sec:eql}, by showing that the given strategies indeed admit suitable outcomes.

At any threshold $p>p^{*}$, however, investments have positive net present value. Then it is clearly more profitable for each firm to make all necessary investments than to abstain. Thus, in equilibrium, both firms would have to make some investments. But it turns out that also such outcomes cannot be optimal for both firms.

\begin{theorem}\label{thm:static}
There exists a unique Markov perfect equilibrium with a constant price threshold, which is $p=p^{*}$. Any investment in this equilibrium has zero net present value.
\end{theorem}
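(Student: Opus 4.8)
The plan is to split the statement into two parts: (a) the constant threshold $p=p^{*}$ does yield a Markov perfect equilibrium in which every investment has zero NPV, and (b) no constant threshold $p\neq p^{*}$ can support a Markov perfect equilibrium. For part (a), the natural route is to apply the verification theorem (Theorem \ref{thm:ver}). First I would translate the constant-price rule ``invest whenever $x(q^{i}+q^{-i})^{-1/\gamma}>p^{*}$'' into a reflection strategy by setting $\bar X^{i}(q^{i},q^{-i})=p^{*}(q^{i}+q^{-i})^{1/\gamma}$; one checks directly that this boundary is continuous, strictly increasing in $q^{i}$, nondecreasing in $q^{-i}$, tends to infinity, and is $C^1$ in $q^{i}>0$, so both firms indeed use reflection strategies, and that $\bar X^{i}=\bar X^{-i}$ identically on the relevant set. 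Next I would write down the candidate value function $V$. In the no-investment region $\{x\le p^{*}(q^{i}+q^{-i})^{1/\gamma}\}$ it solves the ODE in condition \ref{V_PDE}; the general solution is a particular solution coming from the discounted profit flow $\pi$ plus a homogeneous term $A(q^{i},q^{-i})\,x^{\beta}$, where $\beta$ is the positive root from Section \ref{sec:application}. The particular solution is proportional to $x(q^{i}+q^{-i})^{-1/\gamma}q^{i}/(r-\gamma\mu-\tfrac12\gamma(\gamma-1)\sigma^2)$ by condition \eqref{r>mu_gamma}. The coefficient $A$ is pinned down by smooth fit at the boundary $x=\bar X^{i}$: value matching plus the condition $V_{q^{i}}=1$ there (from \ref{V_qi=1}). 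In the investment region $V$ is then extended by $V_{q^{i}}=1$, i.e.\ $V$ grows one-for-one with own capital. The key economic point, which I expect to verify by direct computation, is that with $p^{*}=(r-\mu)\beta/(\beta-1)$ the smooth-fit system forces the marginal value of own capital on the boundary to equal exactly the marginal cost $1$ with zero surplus, so the homogeneous coefficient $A$ vanishes and $V$ equals just the discounted-profit particular solution; this is precisely the ``zero-NPV'' content.

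With $V$ in hand, I would verify the remaining hypotheses of Theorem \ref{thm:ver}. Conditions \ref{V_q-i=0}, \ref{V_PDE_leq}, \ref{V_qi<1}, \ref{V_q-i<0} should follow from the explicit form: since $\bar X^{i}\equiv\bar X^{-i}$ here, the set in \ref{V_q-i=0} is empty and the set in \ref{V_PDE_leq} coincides with that in \ref{V_PDE}, so the PDE holds with equality there; monotonicity of $\pi$ in $q^{i}$ (from $\gamma>1$) and concavity give $V_{q^{i}}\le 1$ before the boundary, and $V_{q^{-i}}\le 0$ on the boundary because extra opponent capital depresses inverse demand. The substantive step is to exhibit, for every initial state, an \emph{outcome} from $(\phi^{1},\phi^{2})$ satisfying \eqref{V_q-i_0}, \eqref{V_q-i_t}, \eqref{Vintegrable}, \eqref{limV=0}. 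The outcome I have in mind is the one where aggregate capital is driven by the running maximum of $X$: set $Q_t^{1}+Q_t^{2}$ equal to the smallest level consistent with $x\cdot(\text{aggregate})^{-1/\gamma}\le p^{*}$ along the path, i.e.\ reflection of the price at $p^{*}$, and split the reflection investment between the firms in any way respecting both $Q^{i}_t\ge q^i_0\vee\sup_s\phi^i(X_s,Q^{-i}_s)$ equations — for instance attributing all reflection increments to one firm, or sharing them; any such split is an equilibrium outcome. On the boundary $\{X_t=\bar X^{i}=\bar X^{-i}\}$ one has $V_{q^{-i}}=0$ by the smooth-fit / zero-$A$ property, so \eqref{V_q-i_0} and \eqref{V_q-i_t} hold automatically; integrability \eqref{Vintegrable} and \eqref{limV=0} reduce to moment bounds on geometric Brownian motion controlled by \eqref{r>mu_gamma} (equivalently $\beta>\gamma$), since $V$ is of the order $x(q^{i}+q^{-i})^{-1/\gamma}q^{i}$ and along the reflected path $x$ and aggregate capital are comoving. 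Finally, uniqueness of the equilibrium \emph{outcome} in this class is not claimed, only that $p=p^{*}$ is the unique threshold; but each firm's \emph{payoff} is pinned down to $V$ by the theorem, which is what matters.

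For part (b), take a constant threshold $p>p^{*}$ (and note $p<p^{*}$ is even easier, since then investing has strictly positive NPV and no firm would ever want to be at the boundary — or, dually, the induced strategy would mandate investment that destroys value, contradicting optimality of any consistent outcome). Fix $p>p^{*}$. Here I would argue directly from Definition \ref{MPE} rather than via verification. The opponent's strategy caps inverse demand at $p$, so any unilateral investment by firm $i$ has NPV at most the NPV of a marginal unit installed when price is exactly $p$, which is strictly positive because $p>p^{*}$. Consequently, in any outcome that is to be optimal for firm $i$, firm $i$ must itself make every investment that keeps the price from exceeding $p$ whenever doing so has positive marginal value — but by symmetry the same is true for firm $-i$. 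The contradiction comes from examining what happens at a reflection instant: for the outcome to be consistent with both strategies, the aggregate must reflect at price $p$, yet each firm strictly prefers that the \emph{other} firm supply the marginal unit (since each unit has positive NPV, but only the investing firm pays the cost while both suffer the demand depression). More precisely, I would show that if firm $i$'s outcome is optimal against $\phi^{-i}$ then firm $i$ invests nothing at the reflection (it strictly prefers to free-ride), hence $Q^{i}$ stays constant after time $0$; by symmetry $Q^{-i}$ also stays constant; but then the price path $X_t(Q_0^1+Q_0^2)^{-1/\gamma}$ is unbounded above with positive probability (geometric Brownian motion), so it exceeds $p$, violating consistency with the strategies — no admissible pair satisfies both fixed-point equations while being individually optimal. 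This yields the asserted uniqueness.

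The main obstacle I anticipate is part (a)'s construction and verification of a valid \emph{outcome}: one must check that the reflected-price process genuinely solves the coupled system of running-supremum equations \eqref{phi_i} for \emph{both} firms simultaneously (a Skorokhod-type fixed-point argument in the state variable, complicated by the fact that each firm's trigger depends on the other's stock), and that the chosen attribution of reflection increments keeps each $Q^i$ equal to — not merely above — the prescribed running supremum. The zero-NPV algebra ($A\equiv 0$) is the conceptual crux but should be a short computation once $p^{*}$ is substituted; the integrability conditions are routine given \eqref{r>mu_gamma}. For part (b) the delicate point is making the ``free-riding'' argument rigorous at a singular-control reflection instant, i.e.\ showing that optimality of firm $i$'s outcome forces $dQ^{i}_t=0$ there — this requires comparing $\Pi(Q^{i},Q^{-i})$ across outcomes differing only in how the reflection is shared, which the linearity of investment cost and the sign $V_{q^{-i}}\le 0$ make tractable, but it needs care to phrase within Definition \ref{MPE}.
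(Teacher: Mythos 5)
Your part~(a) follows the paper's architecture (verification theorem, explicit $V$, reflected aggregate capital with an arbitrary split of the increments), but the computation you predict is wrong in a way that matters. The particular solution of the ODE in condition~\ref{V_PDE} is $xP(q^{i}+q^{-i})q^{i}/(r-\mu)$, not $xP(q^{i}+q^{-i})q^{i}/(r-\gamma\mu-\tfrac12\gamma(\gamma-1)\sigma^{2})$: the profit flow is linear in $x$ for fixed capital stocks, and the $\gamma$-adjusted rate in \eqref{r>mu_gamma} only enters the integrability estimates for $\sup_{s}X_s^{\gamma}$ (Lemma~\ref{lem:p_admissible}). More importantly, the homogeneous coefficient does \emph{not} vanish at $p=p^{*}$: the correct value function \eqref{Vinfty} is $\frac{p}{r-\mu}\bigl(y-\beta^{-1}y^{\beta}\bigr)q^{i}$ with $y=xP/p$, and the $-\beta^{-1}y^{\beta}$ term is precisely what encodes the erosion of future profits by the opponent's reflection investment and what delivers $V_{q^{i}}=1$ and $V_{q^{-i}}=0$ on the boundary. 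With $A\equiv 0$ you would get the perpetuity value of $q^{i}$ units against a \emph{frozen} aggregate, smooth fit would fail, and the zero-NPV verification would not go through. The conclusion you want ($V_{q^{i}}=p/p^{*}$ on the boundary, hence zero marginal surplus exactly at $p=p^{*}$) is Lemma~\ref{lem:Vinfty_qi}, and it comes from the nonzero $x^{\beta}$ term, not from its absence.

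Part~(b) rests on a premise that is the opposite of what is true. For $p>p^{*}$ each firm does \emph{not} prefer to free-ride: the paper computes both asymmetric values, $V^{\text{abs}}$ in \eqref{Vinfty} (firm $i$ abstains) and $V^{\text{inv}}$ in \eqref{Vp} (firm $i$ does all the investing), and Lemma~\ref{lem:Vp>Vinfty} shows $V^{\text{inv}}>V^{\text{abs}}$ precisely when $p>p^{*}$; replacing an opponent unit by one's own at the boundary has NPV $p/p^{*}-1>0$ per unit, so each firm strictly wants to supply every marginal unit itself. Your claimed lemma ``optimality against $\phi^{-i}$ forces $dQ^{i}_{t}=0$ at the reflection'' is therefore false, and the chain ``no one invests $\Rightarrow$ price unbounded $\Rightarrow$ inconsistency'' collapses. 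The actual contradiction runs the other way: since $\pi(x,q^{1},q^{2})+\pi(x,q^{2},q^{1})$ and the total investment cost depend only on $Q^{1}+Q^{2}$, aggregate surplus is conserved across consistent outcomes with the same aggregate (Lemma~\ref{lem:indifference}); Lemma~\ref{lem:p_eql_payoff} then pins every equilibrium payoff at $V^{\text{abs}}$, while each firm has available the consistent all-investing outcome worth $V^{\text{inv}}>V^{\text{abs}}$, and both cannot get it. Your treatment of $p<p^{*}$ also has the sign reversed (investing at a lower threshold has strictly \emph{negative} NPV), although your parenthetical ``dually'' remark is the correct argument there, matching the paper's use of $V^{\text{abs}}_{q^{i}}<1$ from Lemma~\ref{lem:Vinfty_qi}.
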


\begin{proof}
The main arguments are given here, and details are worked out in a sequence of lemmas in Appendix~\ref{app:static}.
Let $\phi^{1}$ and $\phi^{2}$ be reflection strategies that correspond to a constant price threshold $p>0$. Then Lemma~\ref{lem:p_admissible} shows that, for each $i$, there exists an admissible outcome from $(\phi^{1},\phi^{2})$ such that firm $i$ does not invest. In particular, the investment cost falling to the other firm is finite by \eqref{r>mu_gamma}. Using the Verification Theorem~\ref{thm:ver}, Lemma~\ref{lem:Vinfty} shows that the value of firm $i$'s payoff from abstaining is given by the function
\begin{equation}\label{Vinfty}
V^{\text{abs}}(x,q^{i},q^{-i})=\begin{cases} \dfrac{p}{r-\mu}\left(\dfrac{xP(q^{i},q^{-i})}{p}-\dfrac{1}{\beta}\left(\dfrac{xP(q^{i},q^{-i})}{p}\right)^{\beta}\right)q^{i}
 &\text{if }xP(q^{i},q^{-i})\leq p, \\[16pt] \dfrac{p}{r-\mu}\dfrac{\beta-1}{\beta}q^{i} &\text{if }xP(q^{i},q^{-i})>p. \end{cases}
\end{equation}
It solves the differential equation in condition~\ref{V_PDE}, because it is of the form $Ax+Bx^{\beta}$ for $x\leq\bar X^{-i}=p/P$, where $Ax=\pi/(r-\mu)$. The other firm's investments are anticipated, since by construction $V_{q^{-i}}^{\text{abs}}=0$ for $x\geq \bar X^{-i}=p/P$. This is easy to verify also on the boundary, where $xP/p=1$. Further, if $p\leq p^*$, then this outcome is optimal among all that are consistent with $\phi^{-i}$. 

Again using the Verification Theorem~\ref{thm:ver}, Lemma~\ref{lem:Vp} shows that the value of firm $i$'s payoff from the converse outcome, such that only firm $i$ invests, is given by the function
\begin{flalign}\label{Vp}
&& V^{\text{inv}}(x,q^{i},q^{-i})&=\begin{cases} 
\dfrac{xP(q^{i},q^{-i})q^{i}}{r-\mu}+B(q^{i},q^{-i})x^{\beta}
 &\text{if }xP(q^{i},q^{-i})\leq p, \\[10pt] 
 V^{\text{inv}}(x,\phi^{i}(x,q^{-i}),q^{-i})-\phi^{i}(x,q^{-i})+q^{i} &\text{if }xP(q^{i},q^{-i})>p, 
\end{cases} && \\[10pt]
&\text{where}& B(q^{i},q^{-i})x^{\beta}&=\dfrac{\gamma}{\beta-\gamma}\left(\left(\dfrac{p(\gamma-1)}{(r-\mu)\gamma}-1\right)q^{i}+\left(\dfrac{p(\beta-1)}{(r-\mu)\beta}-1\right)q^{-i}\right)\left(\dfrac{xP(q^{i},q^{-i})}{p}\right)^{\beta}. &\hphantom{\text{where}}& \nonumber
\end{flalign}
$V^{\text{inv}}$ solves the differential equation in condition~\ref{V_PDE} for the same reason as $V^{\text{abs}}$, but to anticipate that firm $i$ invests, the function $B(q^{i},q^{-i})$ is now constructed such that $V_{q^{i}}^{\text{inv}}=1$ on the boundary $x=\bar X^{i}=p/P$. This is again easy to verify by $xP/p=1$, and clearly also $V_{q^{i}}^{\text{inv}}=1$ for $x>\bar X^{i}=p/P$.

Lemma~\ref{lem:Vp>Vinfty} implies that $V^{\text{inv}}=V^{\text{abs}}$ for $p=p^{*}$, so that firm $i$ is indifferent who abstains. Since abstaining was shown to be optimal for $p\leq p^{*}$ and $i$ was arbitrary, it follows that both outcomes are optimal for both firms if $p=p^{*}$. Hence, either one supports a Markov perfect equilibrium. 

A threshold $p>p^{*}$ cannot be an equilibrium, because then $V^{\text{inv}}>V^{\text{abs}}$ by Lemma~\ref{lem:Vp>Vinfty}, whereas Lemma~\ref{lem:p_eql_payoff} shows that the payoffs in any equilibrium with a constant price threshold must be equal to the value of $V^{\text{abs}}$.

A threshold $p<p^{*}$ cannot be an equilibrium, because then the return to any investment will be strictly negative, but there is no outcome from $(\phi^{1},\phi^{2})$ such that both firms abstain.\footnote{This follows formally from the fact that $V_{q^{i}}^{\text{abs}}<1$ for any $p<p^{*}$ by Lemma~\ref{lem:Vinfty_qi}, so that the inequality $V(x,q^{i},q^{-i})\geq\Pi(Q^{i},Q^{-i})$ in the Verification Theorem~\ref{thm:ver} will be strict for any $Q^{i}$ that ever increases with positive probability.}

Finally, note that the net present value of any investment is at most equal to $(V^{\text{abs}}/q^{i})-1$ per unit. Indeed, the latter is the net present value of an investment that does not affect inverse demand, which means that it only replaces an investment by the other firm. It is also an upper bound for the net present value of any other investment, because any effect on inverse demand is always negative. Further, $V^{\text{abs}}\leq (p/p^{*})q^{i}$ by Lemma~\ref{lem:Vinfty_qi}, where equality holds if and only if $x\geq\bar X^{-i}$. It follows that the net present value of any optimal investment is exactly zero if $p=p^{*}$. 
\end{proof}

The outcomes chosen in the proof are such that one firm abstains and only the other invests. But Lemmas \ref{lem:reflection_sum} and \ref{lem:indifference} in Appendix~\ref{app:static} imply that it is also possible to choose any other outcome that generates the same aggregate capital $Q^{1}+Q^{2}$.
This is a direct consequence of the fact that the reflection boundaries $\bar X^{i}=\bar X^{-i}$ depend only on the sum $q^{i}+q^{-i}$ (as the proofs of these lemmas reveal).


\section{Dynamic price triggers}\label{sec:dynamic}

This section is going to develop the main result of the paper. Although each firm has the opportunity to perfectly preempt the other firm's investment plan, it is possible to sustain positive profits in equilibrium. This requires strategies that depend on the individual capital stocks, and not only on the sum, but the representation will still be a similarly simple.

With a constant price threshold $p$, the preemption decision is static. Preemptive investments have no consequences for future investment opportunities, because they do not affect the dynamics of inverse demand. Therefore, to break preemption incentives, a firm must face an investment trigger that depends on its own capital---preemptive investments need to have an adverse effect on future profits.

The following investment triggers result from the consideration that in order to prevent excessive capital accumulation through preemption, the bigger firm should be indifferent to invest. This is not required for the smaller firm, who may well have a strict preference for investment.\footnote{It is possible to show that if one requires a firm to be indifferent at the entire boundary $\bar X^{-i}$, then $\bar X^{-i}$ must be the constant price boundary $p^{*}/P$ from Theorem~\ref{thm:static}.} 

Fix any $c\geq 0$ and suppose the firms use reflection strategies with boundaries
\begin{equation}\label{barXc}
\bar X^{i}(q^{i},q^{-i})=\bar X^{-i}(q^{-i},q^{i})=\left(p^{*}+\frac{c}{q^{i}\vee q^{-i}}\right)\left(P(q^{i}+q^{-i})\right)^{-1},\qquad q^{i},q^{-i}\geq c\frac{2\gamma-1}{p^{*}},
\end{equation}
where the lower bound on capital stocks ensures that the boundaries are indeed strictly increasing in $q^{i}$ and $q^{-i}$ (see Lemma~\ref{lem:barXc} in Appendix~\ref{app:dynamic}).

For any $c>0$, these boundaries imply a higher investment threshold for inverse demand than the zero-NPV threshold $p^{*}$. This means that the corresponding investments have strictly positive returns. But investments by the bigger firm decrease the threshold at which future investments will happen and, accordingly, decrease future inverse demand. This impact deters from preemption and permits that, in equilibrium, first only the smaller firm invests and then both firms grow symmetrically.

\begin{theorem}\label{thm:dynamic}
The pair of reflection strategies with boundaries given by \eqref{barXc} is a Markov perfect equilibrium for any $c\geq 0$ and initial capital stocks $q_0^{1},q_0^{2}\geq c(2\gamma-1)/p^{*}$.
\end{theorem}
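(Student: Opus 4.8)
The strategy is to apply the Verification Theorem~\ref{thm:ver} to the reflection strategies with boundaries \eqref{barXc}, exhibiting for each firm $i$ a candidate value function $V$ together with an explicit outcome from $(\phi^1,\phi^2)$ that realizes it and that is optimal among all outcomes consistent with $\phi^{-i}$. The first task is to guess the right outcome. Since the boundaries \eqref{barXc} depend on $q^i\vee q^{-i}$ as well as on the sum, the symmetry exploited in Theorem~\ref{thm:static} is broken, and the natural candidate outcome—suggested by the discussion preceding the theorem—is: starting from a state with $q^i<q^{-i}$, only the smaller firm $i$ invests (reflecting at $\bar X^i$) until the capital stocks equalize, after which both firms grow symmetrically, each reflecting at the common boundary evaluated on the diagonal. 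I would first make this outcome precise as a solution of the system \eqref{phi_i} for both firms (using the Skorokhod-type reflection that underlies reflection strategies), verifying admissibility, i.e.\ that the investment cost is finite—this should follow from \eqref{r>mu_gamma}/$\beta>\gamma$ much as in Lemma~\ref{lem:p_admissible}, since the triggers \eqref{barXc} dominate $p^*/P$.

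Next I would construct the candidate $V(x,q^i,q^{-i})$ region by region. In the no-investment region $\{x\le \bar X^i\wedge\bar X^{-i}\}$, $V$ must solve the ODE in condition~\ref{V_PDE}, so $V=Ax+B(q^i,q^{-i})x^\beta$ with $Ax=\pi/(r-\mu)=x(q^i+q^{-i})^{-1/\gamma}q^i/(r-\mu)$; the coefficient $B(q^i,q^{-i})$ is pinned down by the boundary conditions dictated by the chosen outcome. Concretely: where firm $i$ is the smaller firm and invests, I impose $V_{q^i}=1$ on $\{x=\bar X^i\}$ (condition~\ref{V_qi=1} and the reflection it induces); on the diagonal $q^i=q^{-i}$ where both invest, I impose the pair of conditions $V_{q^i}=1$ and $V_{q^{-i}}=0$ (the latter being exactly the content of \eqref{V_q-i_t}, "allowing" the opponent's symmetric reflection), and in the region where firm $i$ is the larger firm and does not invest while firm $-i$ does, I impose $V_{q^{-i}}=0$ along $\{x=\bar X^{-i}\}$ as in \eqref{V_q-i_0}. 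These are first-order linear conditions on $B$ along characteristics in $(q^i,q^{-i})$-space, so $V$ is obtained by integrating them; in the investment regions $V$ extends by the cost-offsetting identity $V(x,q^i,q^{-i})=V(x,\phi^i(x,q^{-i}),q^{-i})-\phi^i(x,q^{-i})+q^i$ exactly as in \eqref{Vp}. Then I would check the integrability and transversality conditions \eqref{Vintegrable}, \eqref{limV=0}, which reduce to standard geometric-Brownian-motion moment estimates given $\beta>\gamma$ and the growth bound on $\pi$, so that the first half of Theorem~\ref{thm:ver} yields $V=\Pi$ along the chosen outcome.

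For optimality I would then verify the three inequality conditions~\ref{V_PDE_leq}, \ref{V_qi<1}, \ref{V_q-i<0}. Condition~\ref{V_qi<1} ($V_{q^i}\le1$ below the boundaries) says marginal investment is never strictly profitable before the trigger; this is the counterpart of $V^{\mathrm{abs}}_{q^i}\le1$ in Theorem~\ref{thm:static} and should follow from the explicit form of $B$ together with the fact that the triggers lie above $p^*/P$, since the term $Ax\le A\bar X^{-i}$ is controlled and the $x^\beta$-coefficient has a sign forced by the construction. Condition~\ref{V_q-i<0} ($V_{q^{-i}}\le0$ on the part of firm $i$'s boundary where it lies below firm $-i$'s) encodes that opponent investment hurts firm $i$ weakly even off the diagonal, and should again reduce to a sign check on $B_{q^{-i}}$. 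Condition~\ref{V_PDE_leq} (the ODE becomes an inequality below $\bar X^{-i}$) is typically the routine "supersolution" check. I expect the \textbf{main obstacle} to be the bookkeeping in the non-symmetric region: because \eqref{barXc} involves $q^i\vee q^{-i}$, the state space splits into $\{q^i<q^{-i}\}$, $\{q^i>q^{-i}\}$, and the diagonal, and one must check that the piecewise-defined $V$ is $C^1$ in $(q^i,q^{-i})$ across the diagonal (so that "differentiable $V$" in Theorem~\ref{thm:ver} is legitimate) and that the chosen outcome does not leave the domain $q^i,q^{-i}\ge c(2\gamma-1)/p^*$ on which the boundaries are admissible. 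Matching $V_{q^{-i}}=0$ from the $\{q^i>q^{-i}\}$ side with $V_{q^i}=1,\;V_{q^{-i}}=0$ on the diagonal, consistently with the smaller-firm-only dynamics, is where the precise form of the lower bound $c(2\gamma-1)/p^*$ and of the constant $\gamma/(\beta-\gamma)$ in $B$ will have to be used, and is the step most likely to require the detailed lemmas deferred to Appendix~\ref{app:dynamic}.
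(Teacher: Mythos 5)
Your plan coincides with the paper's proof: the same outcome (smaller firm invests alone until the stocks equalize, then symmetric growth governed by the diagonal boundary), the same ansatz $V=Ax+B(q^{i},q^{-i})x^{\beta}$ with $B$ pinned down by $V_{q^{i}}=1$ on the boundary and the compatibility condition $V_{q^{-i}}=0$ there for $q^{i}\geq q^{-i}$ (which is exactly what the specific form of \eqref{barXc} and the lower bound $c(2\gamma-1)/p^{*}$ are designed to deliver, as in Lemmas \ref{lem:Vc_q-i} and \ref{lem:Vc}), and the same verification-theorem route to optimality. You have also correctly identified the genuine crux—the consistency of the two boundary conditions across the $q^{i}\vee q^{-i}$ kink—so the proposal is essentially the paper's argument in outline.
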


\begin{proof}
Let $(\phi^{1},\phi^{2})$ be a pair of reflection strategies with boundaries given by \eqref{barXc} for some $c\geq 0$, and consider any state such that $q^{1},q^{2}\geq c(2\gamma-1)/p^{*}$. The proof is going to determine an outcome from $(\phi^{1},\phi^{2})$ that is optimal for both firms. The main steps are outlined here, and the details are worked out in the referenced lemmas in Appendix~\ref{app:dynamic}. 

The outcome is formally defined by
\begin{equation}\label{c_eql_outcome}
Q_t^{i}=q^{i}\vee\sup_{0\leq s\leq t}\left(\phi^{i}(X_s,q^{-i})\wedge \psi^{c}(X_s)\right),
\end{equation}
where
\begin{equation*}
\psi^{c}(x)=\inf\{q\geq c(2\gamma-1)/p^{*}\mid\bar X^{i}(q,q)\geq x\}
\end{equation*}
is the minimal capital level that symmetric firms need to have to keep $\bar X^{i}$ above $x$. Lemma~\ref{lem:c_eql_outcome} proves that equation \eqref{c_eql_outcome} indeed defines an admissible outcome, and---most importantly---that this outcome is consistent with the given reflection strategies.
Along the proof, it is also shown that this outcome has the described property that first only the smaller firm invests until it catches up to the bigger firm, and from then on both firms grow symmetrically.

In order to use the Verification Theorem~\ref{thm:ver} to prove optimality of this outcome for each firm $i$, the candidate value function is constructed by
\begin{flalign}\label{Vc}
&& V^{c}(x,q^{i},q^{-i})&=\begin{cases} 
\dfrac{xP(q^{i},q^{-i})q^{i}}{r-\mu}+B(q^{i},q^{-i})x^{\beta}
 &\text{if }x\leq\bar X^{-i}(q^{-i},q^{i}), \\[10pt] 
 V^{c}(x,\phi^{i}(x,q^{-i}),q^{-i})-\phi^{i}(x,q^{-i})+q^{i} &\text{if }x>\bar X^{-i}(q^{-i},q^{i}), 
\end{cases} && \\[10pt]
&\text{where}& B(q^{i},q^{-i})&=-\int_{q^{i}}^{\infty}\left(1-\bar X^{-i}(q,q^{-i})\dfrac{P'(q,q^{-i})q+P(q,q^{-i})}{r-\mu}\right)\left(\bar X^{-i}(q,q^{-i})\right)^{-\beta}\,dq. &\hphantom{\text{where}}& \nonumber
\end{flalign}
Lemma~\ref{lem:Vc_q-i} verifies that this function is well defined by the given integral. $V^{c}$ solves the differential equation in condition~\ref{V_PDE}, because it has the same general form for $x\leq\bar X^{-i}$ as $V^{\text{inv}}$ in the proof of Theorem~\ref{thm:static}. And like $V^{\text{inv}}$, it anticipates investments by firm $i$ given that $\bar X^{i}=\bar X^{-i}$, because $B(q^{i},q^{-i})$ is still constructed such that $V_{q^{i}}^{c}=1$ when $x=\bar X^{-i}$, and clearly $V_{q^{i}}^{c}=1$ for $x>\bar X^{-i}$. To anticipate additionally the other firm's investments, however, $\bar X^{-i}$ is now constructed such that $V_{q^{-i}}^{c}=0$ when $x=\bar X^{-i}$ and $q^{i}\geq q^{-i}$, which as well is verified in Lemma~\ref{lem:Vc_q-i}.

Using these properties and the Verification Theorem~\ref{thm:ver}, Lemma~\ref{lem:Vc} then proves that the outcome given by \eqref{c_eql_outcome} is indeed optimal among all that are consistent with $\phi^{-i}$.
\end{proof}


\section{Discussion}\label{sec:discussion}

Fully divisible investment and closed-loop strategies enable highly dynamic interaction. Theorems \ref{thm:static} and \ref{thm:dynamic} give differentiating answers to the question whether this implies that the strategic value of irreversibility eliminates the option value of waiting.

In (symmetric) open-loop equilibrium, the firms are still able to maximize their option values by investing at a constant price threshold, which is the same type of strategy that is optimal in monopoly.\footnote{The monopoly version of the application considered here was analyzed by \cite{Bertola98}. Concerning open-loop equilibrium, see \cite{Grenadier02} and \cite{BackPaulsen09}.}
Competition decreases the investment threshold and the option values, but only because the Cournot effect implies lower marginal revenue. The precommitment to open-loop strategies implies that preemption is no concern.

Theorem~\ref{thm:static} shows that closed-loop strategies, which indeed enable the firms to preempt each other's investments and expropriate the growth options, imply that it is no longer possible to support positive option values with such simple threshold rules; then the net present value of any investment is driven down to zero in equilibrium, like under perfect competition. 

However, the main driver for this result is that a constant price threshold makes the strategic preemption problem essentially static, since it does not affect the value of future growth options. Thus, the option to preempt any profitable investment is a rational one, but it is not impelling.

Indeed, there is a way out of the mutual preemption trap, by using more reactive strategies. Theorem~\ref{thm:dynamic} proves that it is enough to consider strategies that depend on the individual current capital stocks. It is not necessary to introduce history-dependent punishment regimes as in other dynamic games. The theorem shows that it is a credible threat to react to excessive investments by using a lower price threshold in the future. Hence, investing more than planned induces also the other firm to invest more---preemption will not be successful.

The parameter $c$ controls monotonically which option values are sustained in equilibrium. Thus, the equilibria can be Pareto-ranked by $c$. There is an upper limit that depends on the firms' initial capital stocks $(q_0^{1},q_0^{2})$ before the game starts. Low values of $c$ mean that the firms invest more aggressively, and in the limit the option values vanish again. 

In the equilibria with positive $c$, the strategic tradeoff between preemption and the option value of waiting shows much more distinctly. Investments have positive returns, deferred investments will be preempted, but further preemption is not worthwhile. In contrast, in the equilibrium with perfectly competitive outcome, the expected return of every invested unit is zero. Investment happens purely out of indifference and not to steal any profit. 

The latter behavior seems much less reasonable. It is similar to playing the preemptive equilibrium in a real option game with a single investment opportunity for each firm when there exists also an equilibrium of delayed simultaneous investment, as it was first noted by \cite{FudenbergTirole85} in a deterministic setting. However, the alternatives are quite different, because in the present setting the firms cannot simply coordinate on not preempting each other. Instead, the level to which preemption is moderated is determined by dynamic interaction.


\appendix
\renewcommand{\theequation}{\thesection\arabic{equation}}

\section{Proof of Theorem~\ref{thm:ver}}\label{app:ver}
\setcounter{equation}{0}

Recall the two following equivalences for any (finite) reflection boundary, which will be used repeatedly throughout the proof:
\begin{equation*}\tag{\ref{q<phi}}
q^{i}<\phi^{i}(x,q^{-i}) \iff x>\bar X^{i}(q^{i},q^{-i})
\end{equation*}
and
\begin{equation*}\tag{\ref{q=phi}}
q^{i}=\phi^{i}(x,q^{-i}) \iff x=\bar X^{i}(q^{i},q^{-i}).
\end{equation*}
Their first application is in proving the following key properties of outcomes that are consistent with reflection strategies.

\begin{lemma}\label{lem:reflection_outcome}
For any outcome that is consistent with $\phi^{i}$, i.e., which satisfies
\begin{equation*}
Q_t^{i}=q^{i}\vee\sup_{0\leq s\leq t}\phi^{i}(X_s,Q_s^{-i}),
\end{equation*}
the following hold true for all $t$:
\begin{enumerate}
\item\label{reflection_state} $X_t\leq\bar X^{i}(Q_t^{i},Q_t^{-i})$.
\item\label{reflection_cont} $Q_t^{i}$ is continuous in all $t>0$ and right-continuous in $t=0$.
\item\label{reflection_incr} $dQ_t^{i}>0 \implies X_t=\bar X^{i}(Q_t^{i},Q_t^{-i})$.
\end{enumerate}
\end{lemma}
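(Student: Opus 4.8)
The plan is to read off all three assertions directly from the two equivalences \eqref{q<phi} and \eqref{q=phi}, together with the regularity built into the data: $X$ has continuous paths, $Q^{-i}$ is nondecreasing and right-continuous (so its left limits $Q_{t-}^{-i}$ exist), and $\bar X^{i}$ is continuous, strictly increasing in its first argument, and nondecreasing in its second. The case $\bar X^{i}\equiv\infty$ is trivial, since then $\phi^{i}\equiv 0$ by \eqref{phiX}, so $Q^{i}\equiv q^{i}$ and all three claims hold vacuously; so assume $\bar X^{i}$ finite, and use throughout the contrapositive of \eqref{q<phi}, i.e.\ $q^{i}\ge\phi^{i}(x,q^{-i})\iff x\le\bar X^{i}(q^{i},q^{-i})$. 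Part~\ref{reflection_state} is then immediate: taking $s=t$ in the consistency equation $Q_t^{i}=q^{i}\vee\sup_{0\le s\le t}\phi^{i}(X_s,Q_s^{-i})$ gives $Q_t^{i}\ge\phi^{i}(X_t,Q_t^{-i})$, and the contrapositive of \eqref{q<phi} yields $X_t\le\bar X^{i}(Q_t^{i},Q_t^{-i})$.

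The other two parts rest on a single observation about upward jumps of $Q^{i}$. If $Q_t^{i}>Q_{t-}^{i}$ (with the convention $Q_{0-}^{i}:=q^{i}$), then the supremum defining $Q_t^{i}$ is not already reached over $[0,t)$, hence is attained at $s=t$, so $Q_t^{i}=\phi^{i}(X_t,Q_t^{-i})$ and \eqref{q=phi} gives $X_t=\bar X^{i}(Q_t^{i},Q_t^{-i})$. For part~\ref{reflection_cont}, right-continuity is part of the definition of a capital stock process, so I only need to exclude an upward jump at some $t>0$. For every $s<t$ one has $\phi^{i}(X_s,Q_s^{-i})\le Q_{t-}^{i}$, equivalently $X_s\le\bar X^{i}(Q_{t-}^{i},Q_s^{-i})$ by the contrapositive of \eqref{q<phi}; letting $s\uparrow t$ and using continuity of $X$, existence of $Q_{t-}^{-i}$, and continuity and monotonicity of $\bar X^{i}$ gives $X_t\le\bar X^{i}(Q_{t-}^{i},Q_t^{-i})$. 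Combined with $X_t=\bar X^{i}(Q_t^{i},Q_t^{-i})$ from the jump observation, this contradicts the strict monotonicity of $\bar X^{i}$ in its first argument, since $Q_t^{i}>Q_{t-}^{i}$.

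Part~\ref{reflection_incr} asserts that the measure $dQ^{i}$ is carried by $\{t:X_t=\bar X^{i}(Q_t^{i},Q_t^{-i})\}$; by part~\ref{reflection_state} this is the statement that $dQ^{i}$ does not charge $R:=\{t:X_t<\bar X^{i}(Q_t^{i},Q_t^{-i})\}$. The jump observation already shows $R$ contains no jump of $Q^{i}$. For any $t\in R$, continuity of $X$ at $t$, right-continuity of $Q^{-i}$ at $t$, and continuity of $\bar X^{i}$ produce an $\varepsilon>0$ with $X_s<\bar X^{i}(Q_t^{i},Q_s^{-i})$, hence $\phi^{i}(X_s,Q_s^{-i})\le Q_t^{i}$, for all $s\in[t,t+\varepsilon)$; feeding this back into the consistency equation shows $Q^{i}\equiv Q_t^{i}$ on $[t,t+\varepsilon)$. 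Thus $R$ is open from the right, and on each of its right-open components $Q^{i}$ is continuous (by part~\ref{reflection_cont}, as $R\subseteq(0,\infty)$) and locally constant from the right, hence constant, so $dQ^{i}(R)=0$. I expect this last step---upgrading ``$Q^{i}$ is flat immediately to the right of every point of $R$'' to ``$dQ^{i}(R)=0$''---to be the only genuinely delicate point; it relies on the fact, established along the way, that off the boundary $Q^{i}$ has no jumps, so that right-local flatness propagates across each component of $R$. Everything else is a direct translation through \eqref{q<phi} and \eqref{q=phi}.
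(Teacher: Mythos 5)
Your proof is correct and follows essentially the same route as the paper: part (i) directly from \eqref{q<phi}, part (ii) by playing $X_t\leq\bar X^{i}(Q_{t-}^{i},Q_t^{-i})$ (obtained by letting $s\uparrow t$) against what a jump would force via \eqref{q=phi}, and part (iii) by showing $Q^{i}$ is flat on a right-neighborhood of any point strictly below the boundary. The only differences are presentational: you invoke right-continuity of $Q^{i}$ from the definition of a capital stock process, whereas the paper deliberately re-derives it from consistency (this is the point of the remark following the lemma), and you make explicit the measure-theoretic step upgrading right-local flatness to $dQ^{i}(R)=0$, which the paper leaves implicit.
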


\noindent{\itshape Remark.}\; As the proof of the lemma does not use right-continuity of $Q^{i}$ (or $Q^{-i}$), item \ref{reflection_cont} shows that it is in fact necessary.

\begin{proof}
All items are clearly true if $\bar X^{i}=\infty$, because then $Q^{i}$ is constant. So suppose $\bar X^{i}$ is finite. 

\ref{reflection_state} This holds by $Q_t^{i}\geq\phi^{i}(X_t,Q_t^{-i})$ and equivalence \eqref{q<phi}. 

\ref{reflection_cont} First consider $t>0$. Then, for any $s<t$, $Q_{t-}^{i}\geq\phi^{i}(X_s,Q_s^{-i})$, so $X_s\leq\bar X^{i}(Q_{t-}^{i},Q_s^{-i})$ again by \eqref{q<phi}. This implies $X_t\leq\bar X^{i}(Q_{t-}^{i},Q_t^{-i})$ by continuity of $X$ in $t$ and the fact that $\bar X^{i}$ is nondecreasing in $q^{-i}$. Using the two latter facts once more and also the strict monotonicity of $\bar X^{i}$ in $q^{i}$, it follows that for every $\varepsilon>0$ there exists some $\delta>0$ such that $X_s\leq\bar X^{i}(Q_{t-}^{i}+\varepsilon,Q_s^{-i})$ for all $s\in[t,t+\delta)$. By \eqref{q<phi} it follows that $Q_{t-}^{i}+\varepsilon\geq Q_{t+}^{i}$. Letting $\varepsilon$ vanish yields continuity in $t$, because $Q^{i}$ is nondecreasing. For $t=0$, just skip the reasoning for $s<t$ and use $Q_t^{i}$ in place of $Q_{t-}^{i}$ to obtain right-continuity. 

\ref{reflection_incr} To prove the contrapositive, it is by \ref{reflection_state} enough to consider $X_t<\bar X^{i}(Q_t^{i},Q_t^{-i})$. Then continuity of $X_t$, the fact that $\bar X^{i}$ is nondecreasing in $q^{-i}$, and equivalences \eqref{q<phi} and \eqref{q=phi} imply that $Q_t^{i}>\phi^{i}(X_s,Q_s^{-i})$ for all $s$ in some nonempty interval $[t,t+\varepsilon)$, so that $Q^{i}$ must remain constant there.
\end{proof}

The equivalences \eqref{q<phi} and \eqref{q=phi} are also useful for proving the following implications concerning the partial derivatives of $V$, which will be needed to prove the theorem.

\begin{lemma}\label{lem:V_q}
The hypothesis that $V_{q^{i}}=1$ on $\{x\geq\bar X^{i}(q^{i},q^{-i})\}$ implies that in this region
\begin{equation*}
V_{q^{-i}}(x,q^{i},q^{-i})=V_{q^{-i}}(x,\phi^{i}(x,q^{-i}),q^{-i}).
\end{equation*}
The hypothesis that $V_{q^{-i}}=0$ on $\{\bar X^{i}(q^{i},q^{-i})>\bar X^{-i}(q^{-i},q^{i})\}\cap\{\bar X^{i}(q^{i},q^{-i})\geq x\geq\bar X^{-i}(q^{-i},q^{i})\}$ implies that, whenever $\bar X^{i}(q^{i},q^{-i})>x\geq\bar X^{-i}(q^{-i},q^{i})$,
\begin{equation*}
V_{q^{i}}(x,q^{i},q^{-i})=V_{q^{i}}(x,q^{i},\phi^{-i}(x,q^{i})).
\end{equation*}
\end{lemma}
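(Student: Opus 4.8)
The plan is to prove each of the two implications by the same mechanism: a partial derivative is shown to be ``frozen'' along a direction in which the value function is flat, because moving in that direction only pushes the state deeper into a region where the relevant derivative is pinned down by hypothesis. The two equivalences \eqref{q<phi} and \eqref{q=phi} are exactly what translate ``moving in a capital direction'' into ``staying in the region $\{x\geq\bar X\}$'' (or its interior).

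For the first implication, fix a point $(x,q^{i},q^{-i})$ with $x\geq\bar X^{i}(q^{i},q^{-i})$. First I would observe that, since $\bar X^{i}$ is nondecreasing in $q^{-i}$, the inequality $x\geq\bar X^{i}(q^{i},q')$ persists when $q'$ decreases from $q^{-i}$; and by \eqref{q<phi}–\eqref{q=phi} one has $\phi^{i}(x,q')\leq q^{i}$, with $\phi^{i}(x,q')$ the smallest capital at which $\bar X^{i}$ catches up to $x$. So the whole segment from $q^{i}$ down to $\phi^{i}(x,q')$ lies in $\{x\geq\bar X^{i}\}$ for each fixed $q'$ near $q^{-i}$. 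On this region the hypothesis gives $V_{q^{i}}\equiv 1$, hence $V(x,q,q')=V(x,\phi^{i}(x,q'),q')+\bigl(q-\phi^{i}(x,q')\bigr)$ for all $q\geq\phi^{i}(x,q')$. Now differentiate this identity with respect to $q'$ at $q'=q^{-i}$, $q=q^{i}$: the term $q$ contributes nothing, and if $\phi^{i}(x,\cdot)$ is differentiable the two $\phi^{i}$-terms cancel by the chain rule together with $V_{q^{i}}=1$ at the base point — leaving exactly $V_{q^{-i}}(x,q^{i},q^{-i})=V_{q^{-i}}(x,\phi^{i}(x,q^{-i}),q^{-i})$. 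Where $\phi^{i}(x,\cdot)$ fails to be differentiable one argues instead directly: $V(x,q^{i},\cdot)-V(x,\phi^{i}(x,\cdot),\cdot)$ equals the affine term $q^{i}-\phi^{i}(x,\cdot)$, whose $q^{-i}$-derivative, when it exists, matches that of $-\phi^{i}(x,\cdot)$, and a short limit computation shows the claimed equality of the $V_{q^{-i}}$ values holds whenever $V$ itself is differentiable (which is assumed).

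For the second implication, fix $(x,q^{i},q^{-i})$ with $\bar X^{i}(q^{i},q^{-i})>x\geq\bar X^{-i}(q^{-i},q^{i})$. By continuity and strict monotonicity of $\bar X^{-i}$ in its first argument, decreasing the opponent capital from $q^{-i}$ to $\phi^{-i}(x,q^{i})$ keeps us in the region $\{\bar X^{i}>x\geq\bar X^{-i}\}$ where the hypothesis gives $V_{q^{-i}}\equiv 0$ (here one must also check that the strict inequality $\bar X^{i}>x$ is preserved, which follows since $\bar X^{i}$ is nondecreasing in $q^{-i}$, so lowering $q^{-i}$ can only lower $\bar X^{i}$ — wait, that goes the wrong way, so instead use that $\bar X^{i}(q^{i},\cdot)$ is continuous and the strict inequality is open, hence survives on a neighbourhood, and one restricts attention to opponent capitals in $[\phi^{-i}(x,q^{i}),q^{-i}]$ on which both boundary inequalities continue to hold after a further short argument, or simply note $\phi^{-i}(x,q^{i})\le q^{-i}$ and that on this whole interval $\bar X^{-i}\le x$ while $\bar X^{i}$ stays $>x$ by the hypothesis's own domain being invoked at each such point). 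On this interval $V(x,q^{i},q')$ is constant in $q'$, so $V(x,q^{i},q^{-i})=V(x,q^{i},\phi^{-i}(x,q^{i}))$ as an identity in a neighbourhood of $q^{i}$ in the first slot; differentiating in $q^{i}$ and using differentiability of $V$ yields $V_{q^{i}}(x,q^{i},q^{-i})=V_{q^{i}}(x,q^{i},\phi^{-i}(x,q^{i}))$.

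The main obstacle is the regularity bookkeeping at the ``kinks'' of $\phi^{i}$ and at the boundary $x=\bar X^{-i}$: the hypotheses only give $V$ differentiable and the boundaries $C^1$ in the own-capital variable, so one cannot freely differentiate $\phi^{i}(x,\cdot)$ and must phrase the cancellation argument through difference quotients of $V$ rather than through the chain rule, carefully using that the offending one-sided derivatives of $\phi^{i}$ are finite (by strict monotonicity and continuity of $\bar X^{i}$) so the error terms vanish. Everything else — persistence of the region memberships — is a direct consequence of monotonicity of the boundaries together with \eqref{q<phi} and \eqref{q=phi}.
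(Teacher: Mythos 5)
Your overall strategy is the same as the paper's: on the segment of own (resp.\ opponent) capitals between the current level and $\phi^{i}(x,q^{-i})$ (resp.\ $\phi^{-i}(x,q^{i})$) the relevant partial derivative is pinned down by hypothesis, integrating gives the identity $V(x,q^{i},q^{-i})=V(x,\phi^{i}(x,q^{-i}),q^{-i})-\phi^{i}(x,q^{-i})+q^{i}$ (resp.\ $V(x,q^{i},q^{-i})=V(x,q^{i},\phi^{-i}(x,q^{i}))$), this identity persists in a neighborhood, and differentiating it in the other capital variable makes the $\phi$-terms cancel because the pinned derivative also holds at the endpoint. The paper additionally disposes of the boundary case $x=\bar X^{i}(q^{i},q^{-i})$ first (there $q^{i}=\phi^{i}(x,q^{-i})$ by \eqref{q=phi} and the claim is trivial), so that in the remaining case the strict inequality gives a genuine two-sided neighborhood, and it settles your regularity worry about $\phi^{i}$ by the implicit function theorem, using the standing assumption that reflection boundaries are continuously differentiable in the own-capital argument; your difference-quotient fallback is not needed.

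However, your write-up has the geometry backwards in a way that leaves a real gap. By \eqref{q<phi}, $x>\bar X^{i}(q^{i},q^{-i})$ is \emph{equivalent} to $q^{i}<\phi^{i}(x,q^{-i})$, so on $\{x\geq\bar X^{i}\}$ you have $\phi^{i}(x,q')\geq q^{i}$, not $\leq q^{i}$: the segment runs \emph{upward} from $q^{i}$ to $\phi^{i}(x,q')$ (every $q$ in it satisfies $x\geq\bar X^{i}(q,q')$, again by \eqref{q<phi}--\eqref{q=phi}). Likewise in the second claim $x\geq\bar X^{-i}(q^{-i},q^{i})$ forces $\phi^{-i}(x,q^{i})\geq q^{-i}$, so you must \emph{increase} the opponent's capital from $q^{-i}$ to $\phi^{-i}(x,q^{i})$. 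This is not merely notational: the step you visibly struggle with (``wait, that goes the wrong way'') is exactly the verification that every $q\in[q^{-i},\phi^{-i}(x,q^{i})]$ still lies in the hypothesis region, i.e.\ that $\bar X^{i}(q^{i},q)>x\geq\bar X^{-i}(q,q^{i})$. The second inequality follows from \eqref{q<phi}--\eqref{q=phi} applied to firm $-i$; the first follows precisely because $q\geq q^{-i}$ and $\bar X^{i}$ is nondecreasing in $q^{-i}$, so $\bar X^{i}(q^{i},q)\geq\bar X^{i}(q^{i},q^{-i})>x$ — the monotonicity works in your favor only once the interval points in the correct (increasing) direction. Your closing clause ``by the hypothesis's own domain being invoked at each such point'' assumes exactly what has to be shown, so as written the second implication is not established.
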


\begin{proof}
To prove the first claim, consider any state such that $x\geq\bar X^{i}(q^{i},q^{-i})$, which implies that $\bar X^{i}$ is finite. If $x=\bar X^{i}(q^{i},q^{-i})$, then $q^{i}=\phi^{i}(x,q^{-i})$ by equivalence \eqref{q=phi}, so the claim is true. Thus, assume $x>\bar X^{i}(q^{i},q^{-i})$. Then $q^{i}<\phi^{i}(x,q^{-i})$ by \eqref{q<phi}, and still $x\geq\bar X^{i}(q,q^{-i})$ for all $q\leq\phi^{i}(x,q^{-i})$ by \eqref{q<phi} and \eqref{q=phi}. Thus, $V_{q^{i}}(x,q,q^{-i})=1$ for all such $q$ by hypothesis, which implies that
\begin{equation*}
V(x,q^{i},q^{-i})=V(x,\phi^{i}(x,q^{-i}),q^{-i})-\phi^{i}(x,q^{-i})+q^{i}.
\end{equation*}
By continuity of $\bar X^{i}$, this equation holds in a neighborhood of the present state. Thus, using that $V_{q^{i}}(x,q,q^{-i})=1$ for $q=\phi^{i}(x,q^{-i})$ by hypothesis, it follows that
\begin{equation*}
V_{q^{-i}}(x,q^{i},q^{-i})=\partial_{q^{-i}}\left(V(x,\phi^{i}(x,q^{-i}),q^{-i})-\phi^{i}(x,q^{-i})+q^{i}\right)=V_{q^{-i}}(x,\phi^{i}(x,q^{-i}),q^{-i}),
\end{equation*}
noting that $\phi^{i}$ is differentiable in a neighborhood of $(x,q^{-i})$ by the implicit function theorem, because it presently takes a value $q>q^{i}\geq 0$, $\bar X^{i}(q,q^{-i})$ is continuously differentiable whenever $q>0$, and, by equivalence \eqref{q=phi}, $x=\bar X^{i}(q,q^{-i})$ if and only if $q=\phi^{i}(x,q^{-i})$.

To prove the second claim, consider any state such that $\bar X^{i}(q^{i},q^{-i})>x\geq\bar X^{-i}(q^{-i},q^{i})$, which now implies that $\bar X^{-i}$ is finite. As before, if $x=\bar X^{-i}(q^{-i},q^{i})$, then $q^{-i}=\phi^{-i}(x,q^{i})$, so the claim is true. Thus, assume $\bar X^{i}(q^{i},q^{-i})>x>\bar X^{-i}(q^{-i},q^{i})$. Then $q^{-i}<\phi^{-i}(x,q^{i})$ by \eqref{q<phi}, and still $\bar X^{i}(q^{i},q)>x\geq\bar X^{-i}(q^{-i},q)$ for all $q\in[q^{-i},\phi^{-i}(x,q^{i})]$ by \eqref{q<phi}, \eqref{q=phi}, and the fact that $\bar X^{i}$ is nondecreasing in $q^{-i}$. Thus, $V_{q^{-i}}(x,q^{i},q)=0$ for all such $q$ by hypothesis, which implies that
\begin{equation*}
V(x,q^{i},q^{-i})=V(x,q^{i},\phi^{-i}(x,q^{i})).
\end{equation*}
By continuity of $\bar X^{i}$ and $\bar X^{-i}$, this equation holds in a neighborhood of the present state. Thus, using that $V_{q^{-i}}(x,q^{i},q)=0$ for $q=\phi^{-i}(x,q^{i})$ by hypothesis, it follows that
\begin{equation*}
V_{q^{i}}(x,q^{i},q^{-i})=\partial_{q^{i}}V(x,q^{i},\phi^{-i}(x,q^{i}))=V_{q^{i}}(x,q^{i},\phi^{-i}(x,q^{i})),
\end{equation*}
noting that $\phi^{-i}$ is differentiable in a neighborhood of $(x,q^{i})$ for the same reason as $\phi^{i}$ in the proof of the first claim.
\end{proof}

Now, in order to prove the first claim of the theorem, fix an arbitrary state $(x,q^{1},q^{2})$, let $X_0=x$, and suppose $(Q^{1},Q^{2})$ is an outcome from $(\phi^{1},\phi^{2})$ that satisfies \eqref{V_q-i_0}--\eqref{limV=0} for some firm $i$. The effect of the initial investments on $V$ is
\begin{align*}
&V(X_0,Q_0^{i},Q_0^{-i})-V(X_0,q^{i},q^{-i}) \\
={}&V(X_0,Q_0^{i},Q_0^{-i})-V(X_0,Q_0^{i},q^{-i})+V(X_0,Q_0^{i},q^{-i})-V(X_0,q^{i},q^{-i}) \\
={}&\int_{q^{-i}}^{Q_0^{-i}}V_{q^{-i}}(X_0,Q_0^{i},q)\,dq+\int_{q^{i}}^{Q_0^{i}}V_{q^{i}}(X_0,q,q^{-i})\,dq.
\end{align*}
In the second integral, since the outcome is consistent with $\phi^{i}$, $Q_0^{i}>q^{i}$ only if $Q_0^{i}=\phi^{i}(X_0,Q_0^{-i})>q^{i}$, which implies that $\bar X^{i}$ is finite and, thus, $X_0=\bar X^{i}(Q_0^{i},Q_0^{-i})$ by \eqref{q=phi}. Then $X_0>\bar X^{i}(q,q^{-i})$ for all $q\in(q^{i},Q_0^{i})$, since $\bar X^{i}$ is strictly increasing in $q^{i}$ and nondecreasing in $q^{-i}$. Hence, $V_{q^{i}}=1$ inside the integral by condition~\ref{V_qi=1}, so its value is in any case $Q_0^{i}-q^{i}$.

In the first integral, analogously $X_0>\bar X^{-i}(q,Q_0^{i})$ for all $q\in(q^{-i},Q_0^{-i})$. Let $\tilde q=\sup\{q\geq 0\mid\bar X^{i}(Q_0^{i},q)<X_0\}\vee q^{-i}$. Then, since $\bar X^{i}(Q_0^{i},q)\geq X_0$ for all $q>\tilde q$, $V_{q^{-i}}=0$ inside the integral for all $q\in(\tilde q,Q_0^{-i})$ by condition~\ref{V_q-i=0}. For any $q<\tilde q$, $\bar X^{i}(Q_0^{i},q)<X_0$. Then the integrand is equal to $V_{q^{-i}}(X_0,\phi^{i}(X_0,q),q)$ by Lemma~\ref{lem:V_q}, where $\phi^{i}(X_0,q)>Q_0^{i}$ by \eqref{q<phi}. By \eqref{V_q-i_0} it follows that $V_{q^{-i}}=0$ also for all $q\in(q^{-i},\tilde q\vee Q_0^{-i})$, so the first integral vanishes entirely. Thus, in summary
\begin{equation}\label{DeltaV_0}
V(X_0,Q_0^{i},Q_0^{-i})-V(X_0,q^{i},q^{-i})=Q_0^{i}-q^{i}=\Delta Q_0^{i}.
\end{equation}

To evaluate the effect of the further evolution of the state on $V$, note that $X_t\leq\bar X^{i}(Q_t^{i},Q_t^{-i})\wedge\bar X^{-i}(Q_t^{-i},Q_t^{i})$ for all $t$ by Lemma~\ref{lem:reflection_outcome}, i.e., when starting from $Q_0^{i}$ and $Q_0^{-i}$, the state stays in a region in which $V$ is by condition~\ref{V_PDE} sufficiently differentiable to apply It\^{o}'s lemma. Since $Q^{i}$ and $Q^{-i}$ are continuous by Lemma~\ref{lem:reflection_outcome} when starting from $Q_0^{i}$ and $Q_0^{-i}$, this yields, for any $T\in\R_+$ and any stopping time $\tau$,
\begin{align*}
&e^{-r(\tau\wedge T)}V(X_{\tau\wedge T},Q_{\tau\wedge T}^{i},Q_{\tau\wedge T}^{-i})-V(X_0,Q_0^{i},Q_0^{-i}) \\
={}&\int_0^{\tau\wedge T}e^{-rt}\left(-rV(X_t,Q_t^{i},Q_t^{-i})+\mu(X_t)V_x(X_t,Q_t^{i},Q_t^{-i})+\frac12\sigma(X_t)^2V_{xx}(X_t,Q_t^{i},Q_t^{-i})\right)\,dt \\
+{}&\int_0^{\tau\wedge T}e^{-rt}\sigma(X_t)V_{x}(X_t,Q_t^{i},Q_t^{-i})\,dB_t \\
+{}&\int_{(0,\tau\wedge T]}e^{-rt}V_{q^{i}}(X_t,Q_t^{i},Q_t^{-i})\,dQ_t^{i} \\
+{}&\int_{(0,\tau\wedge T]}e^{-rt}V_{q^{-i}}(X_t,Q_t^{i},Q_t^{-i})\,dQ_t^{-i}.
\end{align*}
By condition~\ref{V_PDE}, the integral with respect to $dt$ equals $-\int_0^{\tau\wedge T}e^{-rt}\pi(X_t,Q_t^{i},Q_t^{-i})\,dt$. By Lemma~\ref{lem:reflection_outcome}, $dQ_t^{i}>0$ only when $X_t=\bar X^{i}(Q_t^{i},Q_t^{-i})$. Then $V_{q^{i}}=1$ by condition~\ref{V_qi=1}, so the value of the integral with respect to $dQ^{i}$ is $\int_{(0,\tau\wedge T]}e^{-rt}\,dQ_t^{i}$. Lemma~\ref{lem:reflection_outcome} further implies that $dQ_t^{-i}>0$ only when $X_t=\bar X^{-i}(Q_t^{-i},Q_t^{i})\leq\bar X^{i}(Q_t^{i},Q_t^{-i})$. When the latter inequality is strict, $V_{q^{-i}}=0$ by condition~\ref{V_q-i=0}, and when it binds, $V_{q^{-i}}=0$ by \eqref{V_q-i_t}. Thus, the integral with respect to $dQ^{-i}$ is zero.

The integral with respect to $dB$ is a local martingale. Consider a corresponding localizing sequence of stopping times $\tau_n\to\infty$, so that the integral is zero in expectation for every $\tau=\tau_n$. Thus,
\begin{align*}
V(X_0,Q_0^{i},Q_0^{-i})={}&E\left[\int_0^{\tau_n\wedge T}e^{-rt}\pi(X_t,Q_t^{i},Q_t^{-i})\,dt\right]-E\left[\int_{(0,\tau_n\wedge T]}e^{-rt}\,dQ_t^{i}\right] \\
+{}&E\left[e^{-r(\tau_n\wedge T)}V(X_{\tau_n\wedge T},Q_{\tau_n\wedge T}^{i},Q_{\tau_n\wedge T}^{-i})\right].
\end{align*}
By the assumption that $\pi$ is bounded from below, $dQ^{i}\geq 0$, and \eqref{Vintegrable}, it is possible to pass to the limit as $n\to\infty$, so that
\begin{align*}
V(X_0,Q_0^{i},Q_0^{-i})={}&E\left[\int_0^{T}e^{-rt}\pi(X_t,Q_t^{i},Q_t^{-i})\,dt\right]-E\left[\int_{(0,T]}e^{-rt}\,dQ_t^{i}\right] \\
+{}&E\left[e^{-rT}V(X_T,Q_T^{i},Q_T^{-i})\right].
\end{align*}
For the same reason as before, it is possible to pass to the limit in the two integrals as $T\to\infty$. By \eqref{limV=0} and \eqref{DeltaV_0}, it then follows that
\begin{equation*}
V(x,q^{i},q^{-i})=V(X_0,q^{i},q^{-i})=\Pi(Q^{i},Q^{-i}).
\end{equation*}

Next, to prove the second claim of the theorem, assume $V$ has also the additional properties.

\begin{lemma}\label{lem:V_q_leq}
The additional property that $V_{q^{i}}\leq 1$ on $\{x\leq\bar X^{1}(q^{1},q^{2})\wedge\bar X^{2}(q^{2},q^{1})\}$ implies that $V_{q^{i}}\leq 1$ for all states. The additional property that $V_{q^{-i}}\leq 0$ on $\{x=\bar X^{i}(q^{i},q^{-i})\leq\bar X^{-i}(q^{-i},q^{i})\}$ implies that $V_{q^{-i}}\leq 0$ on $\{x\geq\bar X^{-i}(q^{-i},q^{i})\}$.
\end{lemma}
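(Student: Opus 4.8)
The plan is to prove both claims by partitioning the state space according to the position of $x$ relative to the two boundaries $\bar X^{i}(q^{i},q^{-i})$ and $\bar X^{-i}(q^{-i},q^{i})$, and, in the region not covered directly by a hypothesis or by conditions~\ref{V_qi=1} and~\ref{V_q-i=0}, to invoke Lemma~\ref{lem:V_q} in order to reduce the relevant partial derivative to its value at a state lying on a boundary, where the desired bound does hold. The degenerate case $\bar X^{i}=\infty$ or $\bar X^{-i}=\infty$ is trivial throughout, since it empties the regions that require an argument.

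For the first claim I would split into three cases. If $x\geq\bar X^{i}(q^{i},q^{-i})$, then $V_{q^{i}}=1$ by condition~\ref{V_qi=1}. If instead $x<\bar X^{i}(q^{i},q^{-i})$ and also $x\leq\bar X^{-i}(q^{-i},q^{i})$, then $x\leq\bar X^{1}(q^{1},q^{2})\wedge\bar X^{2}(q^{2},q^{1})$ and the hypothesis applies directly. The remaining case is $\bar X^{-i}(q^{-i},q^{i})<x<\bar X^{i}(q^{i},q^{-i})$, so $\bar X^{-i}$ is finite; here the hypothesis $V_{q^{-i}}=0$ of condition~\ref{V_q-i=0} is in force, so Lemma~\ref{lem:V_q} gives $V_{q^{i}}(x,q^{i},q^{-i})=V_{q^{i}}(x,q^{i},\phi^{-i}(x,q^{i}))$. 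Writing $\tilde q^{-i}=\phi^{-i}(x,q^{i})$, equivalences \eqref{q<phi} and \eqref{q=phi} give $\tilde q^{-i}>q^{-i}$ and $x=\bar X^{-i}(\tilde q^{-i},q^{i})$, while monotonicity of $\bar X^{i}$ in $q^{-i}$ gives $\bar X^{i}(q^{i},\tilde q^{-i})\geq\bar X^{i}(q^{i},q^{-i})>x$. Hence the state $(x,q^{i},\tilde q^{-i})$ lies in $\{x\leq\bar X^{1}\wedge\bar X^{2}\}$, so the hypothesis yields $V_{q^{i}}(x,q^{i},\tilde q^{-i})\leq 1$.

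For the second claim I would split $\{x\geq\bar X^{-i}(q^{-i},q^{i})\}$ according to the sign of $x-\bar X^{i}(q^{i},q^{-i})$. If $x<\bar X^{i}(q^{i},q^{-i})$, then $\bar X^{i}>x\geq\bar X^{-i}$ places the state in the region of condition~\ref{V_q-i=0}, so $V_{q^{-i}}=0$. If $x=\bar X^{i}(q^{i},q^{-i})$, then either $\bar X^{i}>\bar X^{-i}$ (again condition~\ref{V_q-i=0}) or $\bar X^{i}\leq\bar X^{-i}$, which together with $x\geq\bar X^{-i}$ forces $x=\bar X^{i}=\bar X^{-i}$ so that the hypothesis applies; in both sub-cases $V_{q^{-i}}\leq 0$. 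Finally, if $x>\bar X^{i}(q^{i},q^{-i})$, then $\bar X^{i}$ is finite and $V_{q^{i}}=1$ on $\{x\geq\bar X^{i}\}$ by condition~\ref{V_qi=1}, so Lemma~\ref{lem:V_q} gives $V_{q^{-i}}(x,q^{i},q^{-i})=V_{q^{-i}}(x,\phi^{i}(x,q^{-i}),q^{-i})$. With $\tilde q^{i}=\phi^{i}(x,q^{-i})$ one gets $x=\bar X^{i}(\tilde q^{i},q^{-i})$ by \eqref{q=phi}; comparing $\bar X^{i}(\tilde q^{i},q^{-i})=x$ with $\bar X^{-i}(q^{-i},\tilde q^{i})$ puts the state $(x,\tilde q^{i},q^{-i})$ either in the region of condition~\ref{V_q-i=0} (if the former is the larger) or in the region $\{x=\bar X^{i}\leq\bar X^{-i}\}$ of the hypothesis (otherwise), so $V_{q^{-i}}(x,\tilde q^{i},q^{-i})\leq 0$ in either case.

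I expect the only delicate point to be verifying, in each ``push to the boundary'' step, that the reduced state genuinely lies in the region where the desired inequality is already known; this is exactly where the monotonicity of the boundaries---strict in the own capital, nondecreasing in the opponent's---must be combined carefully with the equivalences \eqref{q<phi} and \eqref{q=phi}. Everything else is bookkeeping over the finitely many cases.
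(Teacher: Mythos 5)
Your proof is correct and follows essentially the same route as the paper's: the same case decomposition, the same use of Lemma~\ref{lem:V_q} to push the derivative to a boundary state, and the same dichotomy between condition~\ref{V_q-i=0} and the additional property at that boundary state. The only cosmetic imprecision is that when $\bar X^{i}=\infty$ the region $\bar X^{-i}<x<\bar X^{i}$ in your first claim is not empty, but your case-3 argument covers it unchanged since Lemma~\ref{lem:V_q} only needs $\bar X^{-i}$ finite there.
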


\begin{proof}
Since $V_{q^{i}}=1$ on $\{x\geq\bar X^{i}(q^{i},q^{-i})\}$ by condition~\ref{V_qi=1}, consider the region $\{\bar X^{i}(q^{i},q^{-i})>x>\bar X^{-i}(q^{-i},q^{i})\}$ to prove the first claim. There, by Lemma~\ref{lem:V_q}, $V_{q^{i}}=V_{q^{i}}(x,q^{i},\phi^{-i}(x,q^{i}))$. To see that the latter is at most equal to one by the additional property, note that $x\leq\bar X^{-i}(\phi^{-i}(x,q^{i}),q^{i})$ by \eqref{q<phi}, and that also $x\leq\bar X^{i}(q^{i},\phi^{-i}(x,q^{i}))$ in the considered region, since $\phi^{-i}(x,q^{i})>q^{-i}$ by \eqref{q<phi} and $\bar X^{i}$ is nondecreasing in $q^{-i}$.

As to the second claim, recall that $V_{q^{-i}}=0$ on $\{\bar X^{i}(q^{i},q^{-i})>\bar X^{-i}(q^{-i},q^{i})\}\cap\{\bar X^{i}(q^{i},q^{-i})\geq x\geq\bar X^{-i}(q^{-i},q^{i})\}$ by condition~\ref{V_q-i=0}. This implies the claim if $\bar X^{i}=\infty$, so assume also $\bar X^{i}$ is finite. By condition~\ref{V_q-i=0} and the additional property, it is enough to show that $V_{q^{-i}}\leq 0$ on $\{x>\bar X^{i}(q^{i},q^{-i})\}$. There, $V_{q^{-i}}=V_{q^{-i}}(x,\phi^{i}(x,q^{-i}),q^{-i})$ by Lemma~\ref{lem:V_q}. But $\{q^{i}=\phi^{i}(x,q^{-i})\}=\{x=\bar X^{i}(q^{i},q^{-i})\}$ by \eqref{q=phi}, and indeed $V_{q^{-i}}\leq 0$ on $\{x=\bar X^{i}(q^{i},q^{-i})\}$ by condition~\ref{V_q-i=0} and the additional property.
\end{proof}

Now fix again an arbitrary state $(x,q^{1},q^{2})$ and let $X_0=x$, but consider any outcome that is consistent with $\phi^{-i}$ for some firm $i$ and satisfies \eqref{Vintegrable} and \eqref{limV=0}. The effect of the initial investments on $V$ is, as before,
\begin{align*}
&V(X_0,Q_0^{i},Q_0^{-i})-V(X_0,q^{i},q^{-i}) \\
={}&\int_{q^{-i}}^{Q_0^{-i}}V_{q^{-i}}(X_0,Q_0^{i},q)\,dq+\int_{q^{i}}^{Q_0^{i}}V_{q^{i}}(X_0,q,q^{-i})\,dq.
\end{align*}
In the first integral, since the outcome is consistent with $\phi^{-i}$, still $X_0>\bar X^{-i}(q,Q_0^{i})$ for all $q\in(q^{-i},Q_0^{-i})$, which implies that the integrand is at most equal to zero by Lemma~\ref{lem:V_q_leq}. Thus, because furthermore $V_{q^{i}}\leq 1$ for all states by Lemma~\ref{lem:V_q_leq}, now
\begin{equation}\label{DeltaV_0_leq}
V(X_0,Q_0^{i},Q_0^{-i})-V(X_0,q^{i},q^{-i})\leq Q_0^{i}-q^{i}=\Delta Q_0^{i}.
\end{equation}

Concerning the further evolution of the state, $X_t\leq\bar X^{-i}(Q_t^{-i},Q_t^{i})$ for all $t$ by Lemma~\ref{lem:reflection_outcome}. Thus, given the first additional property of $V$, the state still stays in a region in which $V$ is sufficiently differentiable to apply It\^{o}'s lemma when starting from $Q_0^{i}$ and $Q_0^{-i}$. Since $Q^{-i}$ is continuous by Lemma~\ref{lem:reflection_outcome} when starting from $Q_0^{-i}$, but $Q^{i}$ can now have jumps, this yields, for any $T\in\R_+$ and any stopping time $\tau$,
\begin{align*}
&e^{-r(\tau\wedge T)}V(X_{\tau\wedge T},Q_{\tau\wedge T}^{i},Q_{\tau\wedge T}^{-i})-V(X_0,Q_0^{i},Q_0^{-i}) \\
={}&\int_0^{\tau\wedge T}e^{-rt}\left(-rV(X_t,Q_{t-}^{i},Q_t^{-i})+\mu(X_t)V_x(X_t,Q_{t-}^{i},Q_t^{-i})+\frac12\sigma(X_t)^2V_{xx}(X_t,Q_{t-}^{i},Q_t^{-i})\right)\,dt \\
+{}&\int_0^{\tau\wedge T}e^{-rt}\sigma(X_t)V_{x}(X_t,Q_{t-}^{i},Q_t^{-i})\,dB_t \\
+{}&\int_{(0,\tau\wedge T]}e^{-rt}V_{q^{i}}(X_t,Q_{t-}^{i},Q_t^{-i})\,dQ_t^{i}-\sum_{0<t\leq\tau\wedge T}e^{-rt}V_{q^{i}}(X_t,Q_{t-}^{i},Q_t^{-i})\,\Delta Q_t^{i} \\
+{}&\int_{(0,\tau\wedge T]}e^{-rt}V_{q^{-i}}(X_t,Q_{t-}^{i},Q_t^{-i})\,dQ_t^{-i} \\
+{}&\sum_{0<t\leq\tau\wedge T}e^{-rt}\Delta V(X_t,Q_t^{i},Q_t^{-i}).
\end{align*}
By condition~\ref{V_PDE_leq}, the integral with respect to $dt$ is at most equal to $-\int_0^{\tau\wedge T}e^{-rt}\pi(X_t,Q_{t-}^{i},Q_t^{-i})\,dt$, which is equal to $-\int_0^{\tau\wedge T}e^{-rt}\pi(X_t,Q_t^{i},Q_t^{-i})\,dt$ since $Q^{i}$ has at most countably many jumps. Since $dQ^{i}_t\geq\Delta Q_t^{i}$, and $V_{q^{i}}\leq 1$ by Lemma~\ref{lem:V_q_leq}, the value of the difference between the integral with respect to $dQ^{i}$ and the adjacent sum is at most equal to $\int_{(0,\tau\wedge T]}e^{-rt}\,dQ_t^{i}-\sum_{0<t\leq\tau\wedge T}e^{-rt}\,\Delta Q_t^{i}$. By Lemma~\ref{lem:reflection_outcome}, $dQ_t^{-i}>0$ only when $X_t=\bar X^{-i}(Q_t^{-i},Q_t^{i})$. Then $V_{q^{-i}}\leq 0$ by Lemma~\ref{lem:V_q_leq}, so the value of the integral with respect to $dQ^{-i}$ is nonpositive. Since $X$ and $Q^{-i}$ are continuous in any $t>0$, $\Delta V(X_t,Q_t^{i},Q_t^{-i})=\int_{Q_{t-}^{i}}^{Q_t^{i}}V_{q^{i}}(X_t,q,Q_t^{-i})\,dq$, which by Lemma~\ref{lem:V_q_leq} is at most equal to $Q_t^{i}-Q_{t-}^{i}=\Delta Q_t^{i}$.

Imposing all these estimates, the sums cancel, and the remaining terms are the same that appeared in the proof of the first claim of the theorem. Therefore, it is possible to follow the same steps from this point on, just with an inequality instead of the equality and using \eqref{DeltaV_0_leq} instead of \eqref{DeltaV_0}. This leads to the conclusion that
\begin{flalign*}
&& &V(x,q^{i},q^{-i})=V(X_0,q^{i},q^{-i})\geq\Pi(Q^{i},Q^{-i}). &&\mathllap{\qed}
\end{flalign*}

\section{Lemmas for the proof of Theorem~\ref{thm:static}}\label{app:static}
\setcounter{equation}{0}

\begin{lemma}\label{lem:Qintegrable}
For any right-continuous and nondecreasing process $Q$ with $Q_0\geq 0$, and any $r>0$, it holds true that
\begin{equation*}
E\left[Q_0\right]+E\left[\int_{(0,\infty)}e^{-rt}\,dQ_t\right]=E\left[r\int_0^{\infty} e^{-rt}Q_t\,dt\right]
\end{equation*}
and, if either side is finite, then
\begin{equation}\label{Qintegrable}
\forall T>0:\quad E\left[Q_T\right]<\infty
\end{equation}
and
\begin{equation}\label{limQ=0}
\lim_{T\to\infty}E\left[e^{-rT}Q_T\right]=0.
\end{equation}
\end{lemma}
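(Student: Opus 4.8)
The plan is to obtain the identity by a Fubini--Tonelli argument based on the elementary continuous analogue of Abel summation, $e^{-rt}=r\int_t^{\infty}e^{-rs}\,ds$ for $r>0$. First I would work pathwise: fixing an $\omega$, I interpret $Q$ with the convention $Q_{0-}=0$ as generating a nonnegative Lebesgue--Stieltjes measure on $[0,\infty)$ with an atom of mass $Q_0$ at the origin, so that $\int_{[0,\infty)}e^{-rt}\,dQ_t=Q_0+\int_{(0,\infty)}e^{-rt}\,dQ_t$ and, by right-continuity, $\int_{[0,s]}dQ_t=Q_s$. Substituting $e^{-rt}=r\int_t^{\infty}e^{-rs}\,ds$ into $\int_{[0,\infty)}e^{-rt}\,dQ_t$ and swapping the order of the two nonnegative integrations by Tonelli's theorem gives, pathwise,
\[
Q_0+\int_{(0,\infty)}e^{-rt}\,dQ_t=\int_{[0,\infty)}\!\int_{[0,\infty)}r e^{-rs}\mathbf{1}_{\{t\le s\}}\,ds\,dQ_t=r\int_0^{\infty}e^{-rs}Q_s\,ds.
\]
Both sides are nonnegative and $(\omega,t)\mapsto Q_t(\omega)$ is jointly measurable (the process is right-continuous and adapted, hence progressively measurable), so one further application of Tonelli's theorem — now interchanging $E$ with the integrals — yields the stated equality as an identity in $[0,\infty]$, with no finiteness needed at this stage.

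For the two consequences I would assume one (hence, by the identity just proved, either) side is finite. Then $r\int_0^{\infty}e^{-rt}E[Q_t]\,dt=E\big[r\int_0^{\infty}e^{-rt}Q_t\,dt\big]<\infty$ by Tonelli, and since $Q$ is nondecreasing the map $t\mapsto E[Q_t]$ is nondecreasing (a priori $[0,\infty]$-valued). For a fixed $T>0$, monotonicity gives $\int_T^{T+1}e^{-rt}E[Q_t]\,dt\ge E[Q_T]\int_T^{T+1}e^{-rt}\,dt$; the left side is finite and $\int_T^{T+1}e^{-rt}\,dt$ is a strictly positive constant, so $E[Q_T]<\infty$, which is \eqref{Qintegrable}. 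Likewise $\int_T^{\infty}e^{-rt}E[Q_t]\,dt\ge E[Q_T]\int_T^{\infty}e^{-rt}\,dt=E[Q_T]e^{-rT}/r$, hence $E\big[e^{-rT}Q_T\big]=e^{-rT}E[Q_T]\le r\int_T^{\infty}e^{-rt}E[Q_t]\,dt$, and the right-hand side is the tail of a convergent integral and therefore tends to $0$ as $T\to\infty$; this is \eqref{limQ=0}.

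I expect the argument to be essentially routine; there is no genuine obstacle. The only points needing a little care are the bookkeeping of the atom of $dQ$ at $t=0$ (so that the decomposition into $E[Q_0]$ and $\int_{(0,\infty)}$ exactly matches $\int_{[0,\infty)}e^{-rt}\,dQ_t$), and checking that Tonelli's theorem applies, i.e.\ joint measurability of $(\omega,t)\mapsto Q_t(\omega)$ and measurability of the random Stieltjes measure $dQ$ in $\omega$, both of which are standard consequences of right-continuity and adaptedness.
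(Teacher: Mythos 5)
Your proof is correct and follows essentially the same route as the paper: the Tonelli swap using $e^{-rt}=r\int_t^{\infty}e^{-rs}\,ds$ is just the Fubini form of the integration by parts the paper performs on $[0,T]$, and your tail bound $e^{-rT}E[Q_T]\leq r\int_T^{\infty}e^{-rt}E[Q_t]\,dt$ is the expectation of the paper's pathwise estimate $e^{-rT}Q_T\leq r\int_T^{\infty}e^{-rt}Q_t\,dt$. The only cosmetic difference is that you derive \eqref{Qintegrable} directly from monotonicity of $t\mapsto E[Q_t]$, whereas the paper deduces it from \eqref{limQ=0}; both are fine.
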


\begin{proof}
Integrating $e^{-rt}Q_t$ by parts yields, for any finite $T$,
\begin{equation*}
Q_0+\int_{(0,T]}e^{-rt}\,dQ_t=r\int_0^T e^{-rt}Q_t\,dt+e^{-rT}Q_T,
\end{equation*}
where $0\leq e^{-rT}Q_T\leq r\int_T^{\infty} e^{-rt}Q_t\,dt$ since $Q_0\geq 0$, $Q$ is nondecreasing, and $r>0$. From this, the first claimed equation and, if either side is finite, also \eqref{limQ=0} follow by monotone convergence. Moreover, \eqref{limQ=0} and the fact that $Q$ is nondecreasing imply \eqref{Qintegrable}.
\end{proof}

\begin{lemma}\label{lem:p_admissible}
Let $\phi^{-i}$ be a reflection strategy that corresponds to a constant price threshold $p>0$. Then there is a unique outcome such that firm $i$ does not invest at all and that is consistent with $\phi^{-i}$. Further, this outcome is consistent with any reflection strategy $\phi^{i}$ that uses a boundary $\bar X^{i}\geq\bar X^{-i}$.
\end{lemma}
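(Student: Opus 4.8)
The plan is first to observe that the outcome in question is \emph{forced}, and then to check its two defining properties, admissibility and consistency with $\phi^i$. If firm $i$ makes no investment at all, then $Q_t^i\equiv q^i$ for all $t$, so the requirement that the outcome be consistent with $\phi^{-i}$ reads $Q_t^{-i}=q^{-i}\vee\sup_{0\le s\le t}\phi^{-i}(X_s,q^i)$, which is an explicit formula: there is no fixed-point coupling left, since $Q^i$ does not depend on $Q^{-i}$. Hence such an outcome is unique, and the right-hand side is adapted, nondecreasing and right-continuous (indeed continuous for $t>0$) because $X$ is continuous and $x\mapsto\phi^{-i}(x,q^i)$ is continuous and nondecreasing. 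For the constant price threshold $p$ the reflection boundary is $\bar X^{-i}(q,q^i)=p/P(q+q^i)=p(q+q^i)^{1/\gamma}$, so by \eqref{phiX} one computes $\phi^{-i}(x,q^i)=\bigl((x/p)^\gamma-q^i\bigr)^+$ and therefore $Q_t^{-i}=q^{-i}\vee\bigl((M_t/p)^\gamma-q^i\bigr)^+$, where $M_t:=\sup_{0\le s\le t}X_s$. What remains is to show that this pair is an admissible outcome, i.e.\ that firm $-i$'s investment cost $E\int_{(0,\infty)}e^{-rt}\,dQ_t^{-i}$ is finite, and that it is also consistent with any reflection strategy $\phi^i$ whose boundary satisfies $\bar X^i\ge\bar X^{-i}$.

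The finiteness of the investment cost is the main obstacle, and it is where assumption \eqref{r>mu_gamma} enters. I would reduce it to hitting times of $X$. For $q>q^{-i}$ the level $q$ is first reached by $Q^{-i}$ exactly at the time $\sigma_q:=\inf\{t:X_t\ge p(q+q^i)^{1/\gamma}\}$, so integrating ``by the inverse'' (or, equivalently, reducing via Lemma~\ref{lem:Qintegrable} to $\int_0^\infty e^{-rt}E[Q_t^{-i}]\,dt$ and applying Fubini) gives $E\int_{(0,\infty)}e^{-rt}\,dQ_t^{-i}\le\int_{q^{-i}}^\infty E\bigl[e^{-r\sigma_q}\bigr]\,dq$. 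For the $q$ with $p(q+q^i)^{1/\gamma}\le x$ the integrand equals $1$ and contributes at most $(x/p)^\gamma$; for the remaining $q$ the classical hitting-time identity for geometric Brownian motion---obtained by applying optional stopping to the martingale $e^{-rt}X_t^\beta$ (whose martingale property is precisely the defining equation of $\beta$) and using that $X$ stays below $p(q+q^i)^{1/\gamma}$ before $\sigma_q$---gives $E\bigl[e^{-r\sigma_q}\bigr]=(x/p)^\beta(q+q^i)^{-\beta/\gamma}$. Hence the cost is bounded by $(x/p)^\gamma+(x/p)^\beta\int_{q^{-i}}^\infty(q+q^i)^{-\beta/\gamma}\,dq$, and this integral converges exactly because $\beta/\gamma>1$, i.e.\ because $\beta>\gamma$, which the text records as being equivalent to \eqref{r>mu_gamma}. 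Thus $Q^{-i}\in\A(q^{-i})$ and the pair $(q^i,Q^{-i})$ is an admissible outcome.

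Finally, consistency with $\phi^i$ follows from Lemma~\ref{lem:reflection_outcome}. Since the outcome is consistent with $\phi^{-i}$, item~\ref{reflection_state} of that lemma, applied with $Q_s^i=q^i$, gives $X_s\le\bar X^{-i}(Q_s^{-i},q^i)$ for all $s$; the hypothesis $\bar X^i\ge\bar X^{-i}$ then yields $X_s\le\bar X^i(q^i,Q_s^{-i})$, so $\phi^i(X_s,Q_s^{-i})\le q^i$ for all $s$ by equivalence \eqref{q<phi}. Consequently $q^i\vee\sup_{0\le s\le t}\phi^i(X_s,Q_s^{-i})=q^i=Q_t^i$, which is precisely the statement that the outcome is consistent with $\phi^i$. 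I expect no real difficulty in this last step beyond carefully tracking which firm's capital stock occupies which argument slot of $\bar X^i$ versus $\bar X^{-i}$.
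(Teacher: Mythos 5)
Your proof is correct, and its skeleton (uniqueness because the no-fixed-point formula $Q^{-i}_t=q^{-i}\vee\sup_{s\le t}\phi^{-i}(X_s,q^{i})$ is forced; consistency with $\phi^{i}$ via the chain $X_t\le\bar X^{-i}(Q_t^{-i},q^{i})\le\bar X^{i}(q^{i},Q_t^{-i})$ and \eqref{q<phi}) matches the paper's. Where you genuinely diverge is the finiteness of firm $-i$'s investment cost. The paper reduces, via Lemma~\ref{lem:Qintegrable}, to the integrability of $\int_0^\infty e^{-rt}\sup_{s\le t}X_s^{\gamma}\,dt$, notes that $X^{\gamma}$ is again a geometric Brownian motion with drift parameter $\mu_{\gamma}=\gamma\mu+\gamma(\gamma-1)\tfrac12\sigma^2$, and cites a known result on exponential functionals of the supremum of a L\'evy process to conclude under $r>\mu_{\gamma}$. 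You instead disintegrate the Stieltjes integral over capital levels, identify the time at which level $q$ is reached with the hitting time of the barrier $p(q+q^{i})^{1/\gamma}$, and use the explicit Laplace transform $E[e^{-r\sigma_q}]=(x/b)^{\beta}$; this is more self-contained (no external citation), yields the condition in the form $\beta>\gamma$ directly rather than via $r>\mu_{\gamma}$, and as a bonus produces an explicit upper bound on the cost. One small notational slip: your final integral should run only over $\{q\ge q^{-i}:p(q+q^{i})^{1/\gamma}>x\}$ (as your own case split dictates), since $\int_{q^{-i}}^{\infty}(q+q^{i})^{-\beta/\gamma}\,dq$ as literally written diverges at the lower end when $q^{i}=q^{-i}=0$; with that restriction the lower limit is bounded away from the singularity and convergence for $\beta/\gamma>1$ is immediate.
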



\begin{proof}
There is only one possible outcome such that firm $i$ does not invest at all and that is consistent with $\phi^{-i}$, because these requirements fully determine the only candidate by $Q_t^{i}=q^{i}$ and $Q_t^{-i}=q^{-i}\vee\sup_{0\leq s\leq t}\phi^{-i}(X_s,q^{i})$. This candidate clearly satisfies all conditions for an admissible outcome except for finiteness of the investment cost for $Q^{-i}$. Putting the latter aside, the candidate is consistent with any reflection strategy $\phi^{i}$ using a boundary $\bar X^{i}\geq\bar X^{-i}$ by the following chain of implications. Consistency with $\phi^{-i}$ implies $Q_t^{-i}\geq\phi^{-i}(X_t,q^{i})$, which is equivalent to $X_t\leq\bar X^{-i}(Q_t^{-i},q^{i})$ by \eqref{q<phi}, which by $\bar X^{i}\geq\bar X^{-i}$ implies $X_t\leq\bar X^{i}(q^{i},Q_t^{-i})$, which is again equivalent to $q^{i}\geq\phi^{i}(X_t,Q_t^{-i})$ by \eqref{q<phi}; and since this holds for all $t$, it follows that $q^{i}\vee\sup_{0\leq s\leq t}\phi^{i}(X_s,Q_s^{-i})=q^{i}=Q_t^{i}$. Therefore, it only remains to show that the investment cost for $Q^{-i}$ is finite. 

By Lemma~\ref{lem:Qintegrable}, the investment cost is equal to $rE[\int_0^{\infty}e^{-rt}Q_t^{-i}\,dt]-q^{-i}$, where presently $Q_t^{-i}=q^{-i}\vee\sup_{0\leq s\leq t}\phi^{-i}(X_s,q^{i})$ and $\phi^{-i}(x,q^{i})=(x/p)^{\gamma}-q^{i}$. Thus, the investment cost is finite if and only if
\begin{equation}\label{integrable_Z}
E\left[\int_0^{\infty}e^{-rt}\sup_{0\leq s\leq t}X_s^{\gamma}\,dt\right]<\infty.
\end{equation}
Since $X_t^{\gamma}=X_0^{\gamma}\exp((\mu_{\gamma}-\frac12\sigma_{\gamma}^2)t+\sigma_{\gamma}B_t)$ for $\mu_{\gamma}=\gamma\mu+\gamma(\gamma-1)\frac12\sigma^2$ and $\sigma_{\gamma}=\gamma\sigma$, \eqref{integrable_Z} holds if and only if $r>\mu_{\gamma}$ \citep[see, e.g.,][Chapter VII]{Bertoin96}, which is the assumed inequality \eqref{r>mu_gamma}.
\end{proof}

\begin{lemma}\label{lem:Vinfty_qi}
Fix any $p>0$ and let $\bar X^{-i}=p/P$. Then the function $V^{\text{abs}}$ defined in \eqref{Vinfty} has the properties that
\begin{enumerate}
\item $V^{\text{abs}}<\frac{p}{p^{*}}q^{i}$ for all $x<\bar X^{-i}$ and $q^{i}>0$,
\item $V^{\text{abs}}_{q^{i}}$ is strictly increasing in $x\leq\bar X^{-i}$, and
\item $V^{\text{abs}}_{q^{i}}=\frac{p}{p^{*}}$ for all $x\geq\bar X^{-i}$.
\end{enumerate}
\end{lemma}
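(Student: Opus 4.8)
The plan is to collapse $x$ and $q^{-i}$ (together with $q^{i}$) into a single scalar variable $u:=xP(q^{i}+q^{-i})/p$, where $P(Q)=Q^{-1/\gamma}$; since $\bar X^{-i}=p/P$, we have $x\le\bar X^{-i}$ iff $u\le 1$ and $x>\bar X^{-i}$ iff $u>1$. Writing $Q:=q^{i}+q^{-i}$, the first branch of \eqref{Vinfty} becomes $V^{\text{abs}}=\frac{p}{r-\mu}\,q^{i}g(u)$ with $g(u):=u-u^{\beta}/\beta$, and the second branch becomes $\frac{p}{r-\mu}\frac{\beta-1}{\beta}q^{i}$, which is constant in $u$; recalling $p^{*}=(r-\mu)\frac{\beta}{\beta-1}$, the target quantity is $\frac{p}{p^{*}}=\frac{p(\beta-1)}{(r-\mu)\beta}$.

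For (i), I would note that $g'(u)=1-u^{\beta-1}>0$ on $(0,1)$ because $\beta>1$, so $g$ is strictly increasing there and $g(u)<g(1)=\frac{\beta-1}{\beta}$; since $x<\bar X^{-i}$ with $x,P>0$ forces $u\in(0,1)$, multiplying by $\frac{p}{r-\mu}q^{i}>0$ gives $V^{\text{abs}}<\frac{p}{r-\mu}\frac{\beta-1}{\beta}q^{i}=\frac{p}{p^{*}}q^{i}$.

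For (ii), the one step that needs care is differentiating in $q^{i}$ on $\{u\le 1\}$ while respecting that $u$ itself depends on $q^{i}$ through $P$: from $P'(Q)=-\frac1\gamma P(Q)/Q$ one obtains $\partial_{q^{i}}u=-u/(\gamma Q)$, and then, with $\lambda:=q^{i}/(\gamma Q)$,
\[
V^{\text{abs}}_{q^{i}}=\frac{p}{r-\mu}\Bigl[(1-\lambda)u+\bigl(\lambda-\tfrac1\beta\bigr)u^{\beta}\Bigr]=:\frac{p}{r-\mu}F(u).
\]
Because $q^{i}\le Q$ (as $q^{-i}\ge 0$) we have $\lambda\le1/\gamma<1$. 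Now $F'(u)=(1-\lambda)+(\beta\lambda-1)u^{\beta-1}$ is monotone on $[0,1]$ (since $u\mapsto u^{\beta-1}$ is increasing), with $F'(0)=1-\lambda>0$ and $F'(1)=(\beta-1)\lambda\ge 0$; hence $F'>0$ on $[0,1)$, so $F$ — and therefore $V^{\text{abs}}_{q^{i}}$ — is strictly increasing in $u$, i.e.\ in $x$, on $\{x\le\bar X^{-i}\}$.

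Finally, (iii) falls out of the same expressions at and beyond the boundary: for $x>\bar X^{-i}$ the second branch is linear in $q^{i}$ with no residual $P$-dependence, so $V^{\text{abs}}_{q^{i}}=\frac{p}{r-\mu}\frac{\beta-1}{\beta}=\frac{p}{p^{*}}$; and at $x=\bar X^{-i}$ (i.e.\ $u=1$) the formula from (ii) gives $F(1)=(1-\lambda)+(\lambda-\tfrac1\beta)=\frac{\beta-1}{\beta}$, hence again $V^{\text{abs}}_{q^{i}}=\frac{p}{p^{*}}$ — which incidentally shows that $V^{\text{abs}}$ is $C^{1}$ in $q^{i}$ across the boundary, so the partial derivative is unambiguous. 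The main obstacle is thus purely bookkeeping: carrying the chain rule correctly through $P(q^{i}+q^{-i})$ and pinning down the sign of $F'$, which rests exactly on $\lambda<1$ (equivalently $q^{-i}\ge 0$) together with $\beta>1$.
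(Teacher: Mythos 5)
Your proof is correct and follows essentially the same route as the paper: part (i) from the strict monotonicity of $y\mapsto y-\beta^{-1}y^{\beta}$ on $[0,1]$, part (ii) by computing $V^{\text{abs}}_{q^{i}}$ and showing its derivative in $x$ is positive using $\beta>1$, $P'\leq 0$, and $P+q^{i}P'>0$ (your $\lambda\leq 1/\gamma<1$), and part (iii) by evaluating both branches at and beyond the boundary. The only difference is cosmetic: you work in the normalized variables $u$ and $\lambda$ and establish $F'>0$ by monotonicity of $F'$ plus an endpoint check, where the paper uses an equivalent one-line lower bound.
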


\begin{proof}
The first property is a consequence of the second, but it also follows directly from \eqref{Vinfty} and the facts that $xP(q^{i}+q^{-i})/p<1$ for $x<\bar X^{-i}$ and that the function $y-\beta^{-1}y^{\beta}$ is strictly increasing on $\{0\leq y\leq 1\}$ by $\beta>1$. To show the other two properties, differentiate the term in the definition of $V^{\text{abs}}$ for $x\leq\bar X^{-i}$ with respect to $q^{i}$, which yields
\begin{equation}\label{Vinfty_qi}
\frac{p}{r-\mu}\left(\frac{x(P+q^{i}P')}{p}-\frac{1}{\beta}\left(\frac{xP}{p}\right)^{\beta}-\left(\frac{xP}{p}\right)^{\beta-1}\frac{xq^{i}P'}{p}\right),
\end{equation}
where $P'$ indeed exists since $0<x\leq\bar X^{-i}$ implies $q^{i}+q^{-i}>0$. A further differentiation with respect to $x$ yields
\begin{align*}
&\frac{p}{r-\mu}\left(\frac{P+q^{i}P'}{p}-\left(\frac{xP}{p}\right)^{\beta-1}\frac{P+\beta q^{i}P'}{p}\right) \\
\geq{}&\frac{p}{r-\mu}\left(\frac{P+q^{i}P'}{p}\right)\left(1-\left(\frac{xP}{p}\right)^{\beta-1}\right),
\end{align*}
where the inequality is due to $r>\mu$, $\beta>1$, and $q^{i}P'\leq 0$. The last obtained term is strictly positive for any $x<\bar X^{-i}$, however, because then $xP/p<1$ and
\begin{equation*}
P+q^{i}P'=P\left(1-\frac{q^{i}}{\gamma(q^{i}+q^{-i})}\right)>0
\end{equation*}
by $P(q)=q^{-\frac{1}{\gamma}}$ and $\gamma>1$. This implies the second property. The third property is immediate from \eqref{Vinfty} for any $x>\bar X^{-i}$, and for $x=\bar X^{-i}$ it follows from \eqref{Vinfty_qi}, where then $xP/p=1$.
\end{proof}

\begin{lemma}\label{lem:Vinfty}
Let $\phi^{-i}$ be a reflection strategy that corresponds to a constant price threshold $p>0$. Then firm $i$'s payoff from not investing at all is equal to the value of the function $V^{\text{abs}}$ defined in \eqref{Vinfty}, and this choice is optimal if $p\leq p^{*}$.
\end{lemma}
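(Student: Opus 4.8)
The plan is to apply the Verification Theorem~\ref{thm:ver}, but with firm $i$'s own reflection strategy taken to be the trivial one with boundary $\bar X^{i}\equiv\infty$ (so $\phi^{i}\equiv 0$). Then the unique outcome from $(\phi^{i},\phi^{-i})$ is precisely the one in which firm $i$ never invests; by Lemma~\ref{lem:p_admissible} it is admissible and consistent with $\phi^{-i}$ (this is where \eqref{r>mu_gamma} is used, to keep the cost falling to firm $-i$ finite). With $\bar X^{i}=\infty$ the ``no‑investment'' region $\{x\leq\bar X^{1}\wedge\bar X^{2}\}$ reduces to $\{xP(q^{i}+q^{-i})\leq p\}$, and, crucially, conditions~\ref{V_qi=1} and \ref{V_q-i<0} as well as the implications \eqref{V_q-i_0} and \eqref{V_q-i_t} all become vacuous, because no state satisfies $x\geq\bar X^{i}$. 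So only a handful of conditions remain to be checked for $V=V^{\text{abs}}$.

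\textbf{First claim (value $=V^{\text{abs}}$).} I would first note that the two branches of \eqref{Vinfty}, together with their first derivatives in $x$, $q^{i}$, $q^{-i}$, agree along $\{xP=p\}$ (where $xP/p=1$), so $V^{\text{abs}}$ is $C^{1}$; this is the routine computation behind the paper's ``easy to verify also on the boundary''. On $\{xP\leq p\}$, $V^{\text{abs}}$ has the form $Ax+Bx^{\beta}$ with $Ax=\pi/(r-\mu)$, so condition~\ref{V_PDE} holds by the standard facts that $Ax$ reproduces the dividend $\pi$ and $x^{\beta}$ lies in the kernel of the operator, and the same form makes $V^{\text{abs}}$ twice continuously differentiable in $x$ there. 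For condition~\ref{V_q-i=0}, on $\{xP>p\}$ the function $V^{\text{abs}}$ is independent of $q^{-i}$ by construction, and differentiating the first branch at $xP/p=1$ gives $V^{\text{abs}}_{q^{-i}}=0$ on the boundary too. Finally, reading off \eqref{Vinfty} and using $\beta>1$ together with $0\leq y-\beta^{-1}y^{\beta}\leq\tfrac{\beta-1}{\beta}$ for $y\in[0,1]$, one gets $0\leq V^{\text{abs}}\leq\tfrac{p}{p^{*}}q^{i}$; along the abstaining outcome this is the constant bound $\tfrac{p}{p^{*}}q^{i}$, so \eqref{Vintegrable} and \eqref{limV=0} are immediate. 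The first part of Theorem~\ref{thm:ver} then yields $V^{\text{abs}}(x,q^{i},q^{-i})=\Pi(Q^{i},Q^{-i})$ for the abstaining outcome.

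\textbf{Optimality for $p\leq p^{*}$.} Here I would add the properties in conditions~\ref{V_PDE_leq}, \ref{V_qi<1}, \ref{V_q-i<0} and invoke the second part of Theorem~\ref{thm:ver}. Condition~\ref{V_PDE_leq} holds with equality (it is just condition~\ref{V_PDE} on $\{x\leq\bar X^{-i}\}$), and condition~\ref{V_q-i<0} is again vacuous since $\bar X^{i}=\infty$. The genuine input is condition~\ref{V_qi<1}: by Lemma~\ref{lem:Vinfty_qi}, $V^{\text{abs}}_{q^{i}}$ is increasing in $x\leq\bar X^{-i}$ and equals $p/p^{*}$ at $x=\bar X^{-i}$, so on $\{xP\leq p\}$ one has $V^{\text{abs}}_{q^{i}}\leq p/p^{*}\leq 1$ exactly because $p\leq p^{*}$. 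To get optimality among \emph{all} outcomes consistent with $\phi^{-i}$, observe that any such outcome is by definition admissible, hence its investment cost for $Q^{i}$ is finite; then Lemma~\ref{lem:Qintegrable} gives $E[Q_{T}^{i}]<\infty$ and $E[e^{-rT}Q_{T}^{i}]\to0$, and the bound $0\leq V^{\text{abs}}(X_{t},Q_{t}^{i},Q_{t}^{-i})\leq\tfrac{p}{p^{*}}Q_{t}^{i}$ upgrades these to \eqref{Vintegrable} and \eqref{limV=0}. So the second part of Theorem~\ref{thm:ver} gives $V^{\text{abs}}(x,q^{i},q^{-i})\geq\Pi(Q^{i},Q^{-i})$ for every outcome consistent with $\phi^{-i}$, and combining with the first part shows abstaining is optimal.

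\textbf{Main obstacle.} There is no single hard step; the proof is essentially hypothesis‑checking. The only place where the structure of the model really matters is condition~\ref{V_qi<1}, where the comparison $p\leq p^{*}$ is precisely what makes $V^{\text{abs}}_{q^{i}}\leq 1$ via Lemma~\ref{lem:Vinfty_qi} (and is also why a strictly higher threshold cannot give an optimal abstaining outcome). The one modeling choice that keeps everything clean is assigning firm $i$ the trivial boundary $\bar X^{i}\equiv\infty$, which turns conditions~\ref{V_qi=1}, \ref{V_q-i<0}, \eqref{V_q-i_0}, and \eqref{V_q-i_t} into vacuous statements rather than things that must be arranged.
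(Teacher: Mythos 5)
Your proposal is correct and follows essentially the same route as the paper: take $\bar X^{i}\equiv\infty$ so that conditions~\rref{V_qi=1}, \rref{V_q-i<0} and the implications \eqrref{V_q-i_0}, \eqrref{V_q-i_t} are vacuous, verify conditions~\rref{V_PDE} and \rref{V_q-i=0} for $V^{\text{abs}}$ directly, obtain \eqrref{Vintegrable} and \eqrref{limV=0} from a linear-in-$q^{i}$ bound on $V^{\text{abs}}$ together with Lemma~\rref{lem:Qintegrable}, and reduce the optimality claim to condition~\rref{V_qi<1} via Lemma~\rref{lem:Vinfty_qi} and $p\leq p^{*}$. The only differences are cosmetic (you use the sharper bound $0\leq V^{\text{abs}}\leq (p/p^{*})q^{i}$ where the paper uses $\tfrac{p}{r-\mu}(1+\beta^{-1})q^{i}$, and you spell out the $C^{1}$ pasting that the paper calls ``by construction'').
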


\begin{proof}
By Lemma~\ref{lem:p_admissible}, there exists a unique outcome such that firm $i$ does not invest at all and that is consistent with $\phi^{-i}$, and this outcome is also consistent with the reflection strategy $\phi^{i}$ that uses the boundary $\bar X^{i}=\infty$. Hence, it is possible to apply Theorem~\ref{thm:ver} using this pair of strategies and the function $V=V^{\text{abs}}$.

Then conditions \ref{V_PDE} and \ref{V_q-i=0} hold by construction, and condition \ref{V_qi=1} is void by $\bar X^{i}=\infty$. Also the implications in \eqref{V_q-i_0} and \eqref{V_q-i_t} are void, since $\bar X^{i}=\infty$ implies that $\phi^{i}=0$ and $x<\bar X^{i}$ for all states.  

Already in view of the second claim, the integrability requirements \eqref{Vintegrable} and \eqref{limV=0} will now be shown to hold for any outcome that is consistent with $\phi^{-i}$, which in particular covers the one in which firm $i$ does not invest. To do so, it will be enough to show that $\abs{V^{\text{abs}}}$ is bounded by a linear function of $q^{i}$, because Lemma~\ref{lem:Qintegrable} implies that \eqref{Qintegrable} and \eqref{limQ=0} hold for $Q_T=Q_T^{i}$ by finiteness of the investment cost for $Q^{i}$. This will then yield \eqref{Vintegrable} (by the fact that $Q^{i}$ is nondecreasing) and \eqref{limV=0}. Since $X_t\leq\bar X^{-i}(Q_t^{-i},Q_t^{i})$ for all $t\geq 0$ by Lemma~\ref{lem:reflection_outcome}, it is enough to consider only $x\leq\bar X^{-i}$ to derive the bound on $\abs{V^{\text{abs}}}$. Then $0\leq xP/p\leq 1$, which implies that indeed
\begin{equation*}
\absd{V^{\text{abs}}}\leq\frac{p}{r-\mu}\left(1+\frac{1}{\beta}\right)q^{i}.
\end{equation*}
This completes the proof of the first claim, and all that remains to prove the second claim is to verify that $V=V^{\text{abs}}$ has the three additional properties in Theorem~\ref{thm:ver} if $p\leq p^{*}$. 

Given that $\bar X^{i}=\infty$, condition \ref{V_PDE_leq} is weaker than condition \ref{V_PDE}, and condition \ref{V_q-i<0} is void. Thus, the restriction $p\leq p^{*}$ is needed only for condition \ref{V_qi<1}, which then holds by Lemma~\ref{lem:Vinfty_qi}.
\end{proof}

\begin{lemma}\label{lem:Vp}
Let $\phi^{1}$ and $\phi^{2}$ be reflection strategies that correspond to a constant price threshold $p>0$. Then firm $i$'s payoff from the outcome such that the other firm does not invest at all is equal to the value of the function $V^{\text{inv}}$ defined in \eqref{Vp}.
\end{lemma}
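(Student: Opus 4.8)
The plan is to exhibit the outcome in which the other firm never invests, then compute firm $i$'s payoff from it by applying the first part of the Verification Theorem~\ref{thm:ver} with $V=V^{\text{inv}}$. First I would identify the outcome: applying Lemma~\ref{lem:p_admissible} with the roles of the two firms interchanged, there is a unique outcome in which firm $-i$ does not invest at all and that is consistent with $\phi^{i}$, namely $Q_t^{-i}=q^{-i}$ and $Q_t^{i}=q^{i}\vee\sup_{0\le s\le t}\phi^{i}(X_s,q^{-i})$. Since $\phi^{1}$ and $\phi^{2}$ correspond to the \emph{same} threshold $p$, their boundaries coincide, $\bar X^{i}=\bar X^{-i}=p/P$, so that lemma also shows this outcome is consistent with $\phi^{-i}$; hence it is an outcome from $(\phi^{1},\phi^{2})$. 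Admissibility---finiteness of the investment cost for $Q^{i}$---follows verbatim from the argument in Lemma~\ref{lem:p_admissible} together with \eqref{r>mu_gamma}.

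Second, I would check the hypotheses of Theorem~\ref{thm:ver} for $V=V^{\text{inv}}$ and $(\phi^{1},\phi^{2})$. Condition~\ref{V_PDE} holds because on $\{xP(q^{i}+q^{-i})\le p\}$ the function $V^{\text{inv}}$ has the form $Ax+B(q^{i},q^{-i})x^{\beta}$ with $Ax=\pi/(r-\mu)$, which solves the equation by $-r+(r-\mu)+\mu=0$ for the linear part and by the defining property of $\beta$ for the $x^{\beta}$ part (this also supplies the regularity needed there). Condition~\ref{V_q-i=0} is vacuous, since $\bar X^{i}=\bar X^{-i}$ makes $\{\bar X^{i}>\bar X^{-i}\}$ empty, and the implications \eqref{V_q-i_0} and \eqref{V_q-i_t} are vacuous for the chosen outcome because $Q^{-i}$ is constant ($Q_0^{-i}=q^{-i}$, so the range in \eqref{V_q-i_0} is empty, and $dQ_t^{-i}\equiv 0$). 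Condition~\ref{V_qi=1} is immediate on $\{x>\bar X^{i}\}$ from the second branch of \eqref{Vp}, where the first two summands do not depend on $q^{i}$; the boundary case $\{x=\bar X^{i}\}$ is the one genuine computation and is addressed below.

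Third, for the integrability requirements: since the outcome is consistent with $\phi^{i}$, Lemma~\ref{lem:reflection_outcome} gives $X_tP(Q_t^{i}+q^{-i})\le p$ for all $t$, so only the first branch of \eqref{Vp} is ever relevant along the outcome, and there $\abs{V^{\text{inv}}(X_t,Q_t^{i},q^{-i})}$ is bounded by an affine function of $Q_t^{i}$ (the factor $(xP/p)^{\beta}$ being at most one). Finiteness of the investment cost for $Q^{i}$ together with Lemma~\ref{lem:Qintegrable} then gives \eqref{Qintegrable} and \eqref{limQ=0} for $Q_T=Q_T^{i}$, which---using that $Q^{i}$ is nondecreasing---yield \eqref{Vintegrable} and \eqref{limV=0}. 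Theorem~\ref{thm:ver} then gives $V^{\text{inv}}(x,q^{i},q^{-i})=\Pi(Q^{i},Q^{-i})$, which is the assertion.

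The main obstacle is the boundary case of condition~\ref{V_qi=1}: one differentiates $\frac{xP(q^{i}+q^{-i})q^{i}}{r-\mu}+B(q^{i},q^{-i})x^{\beta}$ with respect to $q^{i}$---keeping in mind that both $P$ and the factor $(xP/p)^{\beta}$ inside $B\,x^{\beta}$ depend on $q^{i}$---and then sets $xP(q^{i}+q^{-i})/p=1$. The coefficient $B(q^{i},q^{-i})$ in \eqref{Vp} is chosen precisely so that the $q^{i}$-dependence through $P$ cancels and the value equals $1$; the computation hinges on the identities $\tfrac{p(\gamma-1)}{(r-\mu)\gamma}-\tfrac{p(\beta-1)}{(r-\mu)\beta}=-\tfrac{p}{r-\mu}\tfrac{\beta-\gamma}{\beta\gamma}$ and $\gamma\bigl(\tfrac{p(\gamma-1)}{(r-\mu)\gamma}-1\bigr)-\beta\bigl(\tfrac{p(\beta-1)}{(r-\mu)\beta}-1\bigr)=(\beta-\gamma)\bigl(1-\tfrac{p}{r-\mu}\bigr)$. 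This boundary condition also forces the two branches of $V^{\text{inv}}$ to fit together $C^{1}$ in $x$ (standard smooth pasting), so that $V^{\text{inv}}$ is a differentiable function as the theorem requires; everything else is either vacuous (by $\bar X^{i}=\bar X^{-i}$ and the constancy of $Q^{-i}$) or a direct consequence of Lemmas~\ref{lem:p_admissible}, \ref{lem:reflection_outcome} and \ref{lem:Qintegrable}.
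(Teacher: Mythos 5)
Your proposal is correct and follows essentially the same route as the paper: switch roles in Lemma~\ref{lem:p_admissible} to obtain the unique outcome with constant $Q^{-i}$, apply the first part of Theorem~\ref{thm:ver} with $V=V^{\text{inv}}$ (conditions~\rref{V_PDE} and \rref{V_qi=1} by construction of $B$, condition~\rref{V_q-i=0} and the implications \eqrref{V_q-i_0}--\eqrref{V_q-i_t} void by $\bar X^{i}=\bar X^{-i}$ and constancy of $Q^{-i}$), and verify \eqrref{Vintegrable}--\eqrref{limV=0} via an affine-in-$q^{i}$ bound on $\abs{V^{\text{inv}}}$ on $\{xP\leq p\}$ combined with Lemma~\ref{lem:Qintegrable}. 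Your explicit boundary computation for $V^{\text{inv}}_{q^{i}}=1$ at $xP=p$ is the content the paper leaves as ``holds by construction,'' and the identities you cite do make it check out.
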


\begin{proof}
By switching roles in Lemma~\ref{lem:p_admissible}, there indeed exists a unique outcome from $(\phi^{1},\phi^{2})$ such that the other firm does not invest at all. Consider this outcome and the function $V=V^{\text{inv}}$ in Theorem~\ref{thm:ver}. Then conditions \ref{V_PDE} and \ref{V_qi=1} hold by construction, and condition \ref{V_q-i=0} is void by $\bar X^{i}=\bar X^{-i}$. Also the implications in \eqref{V_q-i_0} and \eqref{V_q-i_t} are void, since $Q_t^{-i}=q^{-i}$ for all $t$.  

To show that the integrability requirements \eqref{Vintegrable} and \eqref{limV=0} hold for the outcome with constant $Q^{-i}$, it will be enough to show that $\abs{V^{\text{inv}}}$ is bounded by an affine function of $q^{i}$ for fixed $q^{-i}$, because Lemma~\ref{lem:Qintegrable} implies that \eqref{Qintegrable} and \eqref{limQ=0} hold for $Q_T=Q_T^{i}$ by finiteness of the investment cost for $Q^{i}$. This will then yield \eqref{Vintegrable} (by the fact that $Q^{i}$ is nondecreasing) and \eqref{limV=0} (by $r>0$). Since $X_t\leq\bar X^{-i}(Q_t^{-i},Q_t^{i})$ for all $t\geq 0$ by Lemma~\ref{lem:reflection_outcome}, it is enough to consider only $x\leq\bar X^{-i}$ to derive the bound on $\abs{V^{\text{inv}}}$. Then $0\leq xP/p\leq 1$, which implies that
\begin{equation*}
\absd{V^{\text{inv}}}\leq\frac{p}{r-\mu}q^{i}+\frac{\gamma}{\beta-\gamma}\left(\absd{\frac{p(\gamma-1)}{(r-\mu)\gamma}-1}q^{i}+\absd{\frac{p(\beta-1)}{(r-\mu)\beta}-1}q^{-i}\right). \qedhere
\end{equation*}
\end{proof}

\begin{lemma}\label{lem:Vp>Vinfty}
For any $p>0$,
\begin{equation*}
V^{\text{inv}}\lessgtr V^{\text{abs}} \iff p\lessgtr p^{*}.
\end{equation*}
\end{lemma}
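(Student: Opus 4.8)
The plan is to compute the difference $V^{\text{inv}}-V^{\text{abs}}$ directly on each of the two regions appearing in the piecewise definitions \eqref{Vinfty} and \eqref{Vp}, namely $\{xP(q^{i},q^{-i})\leq p\}$ and $\{xP(q^{i},q^{-i})>p\}$, and to show that on each region it equals a strictly positive factor times $p-p^{*}$. I will use that $p^{*}=(r-\mu)\beta/(\beta-1)$, so $\tfrac{p(\beta-1)}{\beta(r-\mu)}-1=\tfrac{\beta-1}{\beta(r-\mu)}(p-p^{*})$ and $\tfrac{\beta-1}{\beta(r-\mu)}=1/p^{*}$, and that $\beta>\gamma>1$ and $r>\mu$ (from Section~\ref{sec:application}), so the factors $\tfrac{\gamma}{\beta-\gamma}$, $\tfrac{\beta-1}{\beta(r-\mu)}$ and $1/p^{*}$ are all strictly positive; throughout $q^{i}+q^{-i}>0$ and $x>0$, hence $xP(q^{i},q^{-i})/p>0$.

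On $\{xP(q^{i},q^{-i})\leq p\}$ the two functions share the term $xP(q^{i},q^{-i})q^{i}/(r-\mu)$, so from \eqref{Vinfty} and \eqref{Vp}
\begin{equation*}
V^{\text{inv}}-V^{\text{abs}}=\left[\frac{\gamma}{\beta-\gamma}\left(\Bigl(\tfrac{p(\gamma-1)}{(r-\mu)\gamma}-1\Bigr)q^{i}+\Bigl(\tfrac{p(\beta-1)}{(r-\mu)\beta}-1\Bigr)q^{-i}\right)+\frac{p}{\beta(r-\mu)}q^{i}\right]\left(\frac{xP(q^{i},q^{-i})}{p}\right)^{\beta}.
\end{equation*}
The one nonroutine algebraic step is to check that the coefficient of $q^{i}$ inside the bracket collapses to $\tfrac{\gamma}{\beta-\gamma}\bigl(\tfrac{p(\beta-1)}{(r-\mu)\beta}-1\bigr)$, i.e.\ to the same coefficient as $q^{-i}$; this follows by combining the $p$-terms over the common denominator $\beta(\beta-\gamma)(r-\mu)$ and using the identity $\beta(\gamma-1)+(\beta-\gamma)=\gamma(\beta-1)$. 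Hence on this region
\begin{equation*}
V^{\text{inv}}-V^{\text{abs}}=\frac{\gamma}{\beta-\gamma}\cdot\frac{\beta-1}{\beta(r-\mu)}\,(p-p^{*})\,(q^{i}+q^{-i})\left(\frac{xP(q^{i},q^{-i})}{p}\right)^{\beta},
\end{equation*}
which has the sign of $p-p^{*}$, strictly so since the remaining factors are positive.

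On $\{xP(q^{i},q^{-i})>p\}$ I would use the second branches of \eqref{Vinfty} and \eqref{Vp}. Since $\bar X^{i}=\bar X^{-i}=p/P$, the level $\phi^{i}(x,q^{-i})$ satisfies $\phi^{i}(x,q^{-i})+q^{-i}=(x/p)^{\gamma}$, so $xP(\phi^{i}(x,q^{-i}),q^{-i})=p$ and, in the present region, $\phi^{i}(x,q^{-i})-q^{i}=(x/p)^{\gamma}-q^{i}-q^{-i}>0$. Evaluating the first-region identity at the boundary point $(x,\phi^{i}(x,q^{-i}),q^{-i})$ (where the bracketed power equals $1$) and using $V^{\text{abs}}(x,\phi^{i}(x,q^{-i}),q^{-i})=\tfrac{p}{p^{*}}\phi^{i}(x,q^{-i})$ gives an explicit value for $V^{\text{inv}}(x,\phi^{i}(x,q^{-i}),q^{-i})$; substituting into the second branch of \eqref{Vp} and subtracting $V^{\text{abs}}=\tfrac{p}{p^{*}}q^{i}$ yields
\begin{equation*}
V^{\text{inv}}-V^{\text{abs}}=\frac{p-p^{*}}{p^{*}}\bigl(\phi^{i}(x,q^{-i})-q^{i}\bigr)+\frac{\gamma}{\beta-\gamma}\cdot\frac{\beta-1}{\beta(r-\mu)}\,(p-p^{*})\,(x/p)^{\gamma},
\end{equation*}
again a strictly positive multiple of $p-p^{*}$. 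Combining the two regions gives the claimed equivalence $V^{\text{inv}}\lessgtr V^{\text{abs}}\iff p\lessgtr p^{*}$.

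The main obstacle is purely the algebra of the first region: verifying that the $q^{i}$-coefficient of $V^{\text{inv}}-V^{\text{abs}}$ simplifies to the $q^{-i}$-coefficient, which is what produces the clean common factor $q^{i}+q^{-i}$. Once this is done, the second region follows mechanically from continuity at $xP=p$ and the explicit form of $\phi^{i}$; the degenerate possibilities (such as $q^{i}=0$) are either excluded by the domain of $P$ or covered verbatim by the same formulas.
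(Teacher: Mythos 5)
Your proposal is correct and follows essentially the same route as the paper: the same split into the regions $\{xP\leq p\}$ and $\{xP>p\}$, the same explicit comparison of the $x^{\beta}$ coefficients in the inner region (the paper argues the bracket is strictly increasing in $p$ and vanishes at $p^{*}$ rather than factoring out $(p-p^{*})(q^{i}+q^{-i})$ explicitly, but this is cosmetic), and the same reduction to the boundary point $(x,\phi^{i}(x,q^{-i}),q^{-i})$ in the outer region, where your direct substitution of the second branch of \eqref{Vp} reproduces the paper's integral $\int_{q^{i}}^{\phi^{i}}(p/p^{*}-1)\,dq$ term for term.
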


\begin{proof}
First consider any state such that $0<xP\leq p$. Then $q^{i}+q^{-i}>0$ and, by the definitions of $V^{\text{inv}}$ and $V^{\text{abs}}$ in \eqref{Vp} and \eqref{Vinfty},
\begin{flalign*}
\left(V^{\text{inv}}-V^{\text{abs}}\right)\left(\frac{xP}{p}\right)^{-\beta}=\frac{\gamma}{\beta-\gamma}\left(\left(\frac{p(\gamma-1)}{(r-\mu)\gamma}-1\right)q^{i}+\left(\frac{p(\beta-1)}{(r-\mu)\beta}-1\right)q^{-i}\right)+\frac{p}{(r-\mu)\beta}q^{i}.
\end{flalign*}
On the left-hand side, $xP/p>0$, whereas the right-hand side is strictly increasing in $p$ by $q^{i}+q^{-i}>0$, and it takes the value zero at $p=p^{*}$. This implies the claimed equivalence.

Now consider any state such that $xP>p$, and let $\phi^{i}$ be the reflection strategy such that $\bar X^{i}=p/P$. Then
\begin{align*}
V^{\text{inv}}(x,q^{i},q^{-i})-V^{\text{abs}}(x,q^{i},q^{-i})={}&V^{\text{inv}}(x,\phi^{i},q^{-i})-V^{\text{abs}}(x,\phi^{i},q^{-i}) \\
&+\int_{q^{i}}^{\phi^{i}}\left(V_{q^{i}}^{\text{abs}}(x,q,q^{-i})-V_{q^{i}}^{\text{inv}}(x,q,q^{-i})\right)\,dq,
\end{align*}
where $q^{i}<\phi^{i}$ and $x\leq\bar X^{i}(\phi^{i},q^{-i})=p/P(\phi^{i}+q^{-i})$ by equivalence \eqref{q<phi}. Thus, the state in the first difference on the right-hand side is one for which the claim has already been proved. Inside the integral, for any $q<\phi^{i}$, still $x>\bar X^{i}(q,q^{-i})=p/P(q+q^{-i})$ by equivalence \eqref{q<phi}, so the integrand is equal to $p/p^{*}-1$ by \eqref{Vp} and \eqref{Vinfty}. It follows that the value of the integral is greater (less) than zero if and only if $p$ is greater (less) than $p^{*}$, and hence the same holds for the whole equation.
\end{proof}

\begin{lemma}\label{lem:indifference}
Consider any set of outcomes for some given state that all have the same aggregate capital $Q^{1}+Q^{2}$. If the set contains an element that maximizes each firm's profit (over the set), then each firm's profit is constant on the set.
\end{lemma}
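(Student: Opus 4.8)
The plan is to exploit that, for the operating profit specified in Section~\ref{sec:application}, the \emph{aggregate} of the two firms' payoffs depends on an outcome only through the aggregate capital process $Q^{1}+Q^{2}$. The total operating profit flow is
\begin{equation*}
\pi(X_t,Q_t^{1},Q_t^{2})+\pi(X_t,Q_t^{2},Q_t^{1})=X_t\bigl(Q_t^{1}+Q_t^{2}\bigr)^{1-\frac1\gamma},
\end{equation*}
a function of $X_t$ and $Q_t^{1}+Q_t^{2}$ alone, while the total investment cost is $E\int_0^\infty e^{-rt}\,d(Q_t^{1}+Q_t^{2})$, which likewise depends only on the aggregate capital process. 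Hence, for a fixed initial state (so $X$ is fixed), $\Pi(Q^{1},Q^{2})+\Pi(Q^{2},Q^{1})$ takes one and the same value, say $S$, for every outcome in the given set; by~\eqref{r>mu_gamma} this $S$ is finite for the outcomes relevant here.

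Given this observation, I would conclude by the elementary fact that two quantities that each cannot be increased, while their sum is held fixed, are in fact both unchanged. Concretely, let $(Q^{1},Q^{2})$ be the element of the set that maximizes each firm's profit over the set, and let $(\tilde Q^{1},\tilde Q^{2})$ be any other element. By maximality, $\Pi(Q^{i},Q^{-i})\geq\Pi(\tilde Q^{i},\tilde Q^{-i})$ for $i=1,2$. Adding the two inequalities and using that both sides equal $S$ forces equality in each, so $\Pi(Q^{i},Q^{-i})=\Pi(\tilde Q^{i},\tilde Q^{-i})$ for $i=1,2$. Since $(\tilde Q^{1},\tilde Q^{2})$ was arbitrary, each firm's profit is constant on the set.

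There is essentially no obstacle; the only point to state with care is that the payoffs being compared are finite, so that the two inequalities may legitimately be added---this is what~\eqref{r>mu_gamma} guarantees in the situation where the lemma is applied. The content of the lemma is simply that, with homogeneous-good Cournot revenue, reallocating a fixed aggregate capital path between the firms is a zero-sum transfer of profit, so that within any such family of outcomes Pareto optimality pins down a unique profit profile.
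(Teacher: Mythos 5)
Your proposal is correct and follows essentially the same route as the paper: the sum $\pi(x,q^{i},q^{-i})+\pi(x,q^{-i},q^{i})$ depends only on $x$ and $q^{i}+q^{-i}$, so aggregate payoff is constant on the set, and adding the two maximality inequalities forces equality in each. Your additional care about finiteness of the aggregate payoff is a reasonable refinement but does not change the argument.
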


\begin{proof}
Since $\pi(x,q^{i},q^{-i})+\pi(x,q^{-i},q^{i})$ is a function of $x$ and $q^{i}+q^{-i}$ in the present setting, all outcomes from the given set yield the same aggregate profit $\Pi(Q^{1},Q^{2})+\Pi(Q^{2},Q^{1})$. Now suppose $(\tilde Q^{1},\tilde Q^{2})$ maximizes each firm's profit over the set. That is, for any other element $(Q^{1},Q^{2})$,
\begin{equation*}
\Pi(\tilde Q^{1},\tilde Q^{2})\geq \Pi(Q^{1},Q^{2})
\end{equation*}
and
\begin{equation*}
\Pi(\tilde Q^{2},\tilde Q^{1})\geq \Pi(Q^{2},Q^{1}).
\end{equation*}
But since the sum of the left-hand sides is equal to the sum of the right-hand sides, both inequalities must hold with equality.
\end{proof}

\begin{lemma}\label{lem:p_eql_payoff}
Suppose $(\phi^{1},\phi^{2})$ is an equilibrium consisting of two reflection strategies that correspond to a constant price threshold $p>0$. Then the equilibrium payoffs are equal to the value of the function $V^{\text{abs}}$.
\end{lemma}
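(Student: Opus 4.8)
The plan is to use that, when both firms employ a reflection strategy with the \emph{same} constant price threshold, all outcomes from $(\phi^{1},\phi^{2})$ share one and the same aggregate capital, so that the equilibrium requirement pins each firm's payoff to the value it obtains by abstaining. Fix any state $(x,q^{1},q^{2})$ with $q^{1}\geq q_0^{1}$ and $q^{2}\geq q_0^{2}$, set $X_0=x$, and let $\S$ denote the set of all outcomes from $(\phi^{1},\phi^{2})$ for this state. Since $(\phi^{1},\phi^{2})$ is a Markov perfect equilibrium, Definition~\ref{MPE} provides an outcome $(\tilde Q^{1},\tilde Q^{2})\in\S$ that, for each $i$, maximizes $\Pi(Q^{i},Q^{-i})$ among all outcomes consistent with $\phi^{-i}$. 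As every element of $\S$ is by definition consistent with both $\phi^{1}$ and $\phi^{2}$, we have $\S\subseteq\{\text{outcomes consistent with }\phi^{-i}\}$ for each $i$, and hence $(\tilde Q^{1},\tilde Q^{2})$ maximizes both $\Pi(Q^{1},Q^{2})$ and $\Pi(Q^{2},Q^{1})$ over $\S$.

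Next, because here $\bar X^{1}=\bar X^{2}=p/P$ depends only on $q^{1}+q^{2}$, Lemma~\ref{lem:reflection_sum} shows that every outcome in $\S$ has the same aggregate capital process, namely the unique minimal nondecreasing process keeping $(X_t/p)^{\gamma}\leq Q_t^{1}+Q_t^{2}$ from the initial level $q^{1}+q^{2}$; this is a Skorokhod-type reflection at the common, sum-dependent boundary, which fixes $Q^{1}+Q^{2}$ although it leaves the split between the firms free. Along any such outcome $X_tP(Q_t^{1}+Q_t^{2})\leq p$, so the aggregate operating profit flow is at most $p(Q_t^{1}+Q_t^{2})$; together with \eqref{r>mu_gamma} and Lemma~\ref{lem:Qintegrable} this makes $\Pi(Q^{1},Q^{2})+\Pi(Q^{2},Q^{1})$ finite on $\S$, and since each firm's operating profit is nonnegative and its investment cost finite, each firm's payoff is finite on $\S$. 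Lemma~\ref{lem:indifference} therefore applies to $\S$: as $(\tilde Q^{1},\tilde Q^{2})\in\S$ maximizes each firm's payoff over $\S$, each firm's payoff is constant on $\S$.

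It remains to identify that constant. By Lemma~\ref{lem:p_admissible}, applied with $\bar X^{i}=\bar X^{-i}$, the unique outcome in which firm $i$ never invests is consistent with both $\phi^{-i}$ and $\phi^{i}$, hence belongs to $\S$, and by Lemma~\ref{lem:Vinfty} firm $i$'s payoff from it equals $V^{\text{abs}}(x,q^{i},q^{-i})$. Combining with the previous paragraph, $\Pi(\tilde Q^{i},\tilde Q^{-i})=V^{\text{abs}}(x,q^{i},q^{-i})$ for each $i$, and since the state was arbitrary the equilibrium payoffs are everywhere equal to the value of $V^{\text{abs}}$. The main obstacle is the aggregate-uniqueness input of Lemma~\ref{lem:reflection_sum}: one must verify that a boundary common to both firms and depending only on $q^{1}+q^{2}$ determines $Q^{1}+Q^{2}$ uniquely in every outcome. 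Granted that, the argument reduces to the ``fixed-aggregate, hence constant-payoff'' mechanism of Lemma~\ref{lem:indifference} together with the value of the abstaining outcome from Lemma~\ref{lem:Vinfty}.
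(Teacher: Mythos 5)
Your argument has a sound overall architecture and is genuinely different from the paper's, but it contains one real gap, which you yourself flag without closing: you need that \emph{every} outcome from $(\phi^{1},\phi^{2})$ has the same aggregate capital, and Lemma~\ref{lem:reflection_sum} does not give you that. What that lemma proves is (a) that the two outcomes in which one firm abstains share the aggregate \eqref{reflection_sum}, and (b) that every outcome with that aggregate is an outcome from $(\phi^{1},\phi^{2})$ --- i.e., the inclusion opposite to the one your argument needs. Without the missing direction you cannot apply Lemma~\ref{lem:indifference} to the full set of outcomes from $(\phi^{1},\phi^{2})$, and the identification of the equilibrium payoff with $V^{\text{abs}}$ collapses. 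The missing claim is in fact true: Lemma~\ref{lem:reflection_outcome}(i) gives $Q_t^{1}+Q_t^{2}\geq(q^{1}+q^{2})\vee\sup_{s\leq t}P^{-1}(p/X_s)$, and for the reverse inequality one can argue that whenever $Q_t^{i}>q^{i}$ the supremum defining $Q_t^{i}$ is attained at some $s^{*}\leq t$ (the map $s\mapsto\phi^{i}(X_s,Q_s^{-i})$ is continuous on $[0,t]$ by Lemma~\ref{lem:reflection_outcome}), which forces $Q_{s^{*}}^{i}=Q_t^{i}$ and $Q_{s^{*}}^{1}+Q_{s^{*}}^{2}=P^{-1}(p/X_{s^{*}})$, and iterating this for the other firm pins the aggregate at time $t$ to $P^{-1}(p/X_u)$ for some $u\leq t$. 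But that argument is nowhere in the paper and must be supplied; as written, your proof rests on an unproved assertion attributed to a lemma that does not state it.

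The paper's own proof is designed precisely to avoid this issue. Starting from an arbitrary equilibrium outcome $(Q^{1},Q^{2})$, it constructs $\tilde Q^{1}=Q^{1}+Q^{2}-q^{2}$ and $\tilde Q^{2}=q^{2}$, which has the same aggregate by construction, verifies directly (via \eqref{q<phi}) that this pair is consistent with $\phi^{2}$ --- so it is a feasible alternative for firm~1 and $\Pi(Q^{1},Q^{2})\geq\Pi(\tilde Q^{1},\tilde Q^{2})$ --- and then shows $\Pi(Q^{2},Q^{1})\geq\Pi(\hat Q^{2},\hat Q^{1})\geq\Pi(\tilde Q^{2},\tilde Q^{1})$ using the abstaining outcome $\hat Q$ from Lemma~\ref{lem:p_admissible} and monotonicity of $\pi$ in $q^{-i}$; Lemma~\ref{lem:indifference} applied to the two-element set $\{(Q^{1},Q^{2}),(\tilde Q^{1},\tilde Q^{2})\}$ then forces all these inequalities to bind, and Lemma~\ref{lem:Vinfty} identifies $\Pi(\hat Q^{2},\hat Q^{1})$ with $V^{\text{abs}}$. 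Either prove the aggregate-uniqueness claim to complete your route, or adopt this construction, which needs only the direction of Lemma~\ref{lem:reflection_sum} that is actually proved.
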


\begin{proof}
Let $(Q^{1},Q^{2})$ be an equilibrium outcome, and consider $\tilde Q^{1}=Q^{1}+Q^{2}-q^{2}$ and $\tilde Q^{2}=q^{2}$. Then also $(\tilde Q^{1},\tilde Q^{2})$ is an outcome, since all required properties are inherited from $(Q^{1},Q^{2})$, and by construction $\tilde Q^{1}+\tilde Q^{2}=Q^{1}+Q^{2}$. Further, $(\tilde Q^{1},\tilde Q^{2})$ inherits consistency with $\phi^{2}$, because $Q_t^{2}=q^{2}\vee\sup_{0\leq s\leq t}\phi^{2}(X_s,Q_s^{1})$ implies that $Q_t^{2}\geq\phi^{2}(X_t,Q_t^{1})$, which by \eqref{q<phi} implies $X_t\leq\bar X^{2}(Q_t^{2},Q_t^{1})=p/P(Q_t^{1}+Q_t^{2})=p/P(\tilde Q_t^{1}+q^{2})=\bar X^{2}(q^{2},\tilde Q_t^{1})$, so that, again by \eqref{q<phi}, $q^{2}\geq\phi^{2}(X_t,\tilde Q_t^{1})$. And since this holds for all $t$, indeed $\tilde Q_t^{2}=q^{2}\vee\sup_{0\leq s\leq t}\phi^{2}(X_s,\tilde Q_s^{1})$. Thus, firm $1$'s equilibrium payoff $\Pi(Q^{1},Q^{2})$ must be at least equal to $\Pi(\tilde Q^{1},\tilde Q^{2})$.

By Lemma~\ref{lem:p_admissible}, there exists a unique outcome such that firm $2$ does not invest and that is also consistent with $\phi^{1}$. Let $(\hat Q^{1},\hat Q^{2})$ be this outcome, so $\hat Q_t^{1}=q^{1}\vee\sup_{0\leq s\leq t}\phi^{1}(X_s,q^{2})$ and $\hat Q_t^{2}=q^{2}$ for all $t$. Hence, to show next that $\tilde Q^{1}\geq\hat Q^{1}$, it is enough to show that $\tilde Q_t^{1}\geq\phi^{1}(X_t,q^{2})$ for all $t$, because $\tilde Q^{1}$ is nondecreasing and $\tilde Q^{1}\geq q^{1}$. But it was already argued above that $X_t\leq\bar X^{2}(q^{2},\tilde Q_t^{1})$, which indeed implies $\tilde Q_t^{1}\geq\phi^{1}(X_t,q^{2})$ by $\bar X^{2}(q^{2},\tilde Q_t^{1})=\bar X^{1}(\tilde Q_t^{1},q^{2})$ and \eqref{q<phi}.

Thus, as $\tilde Q^{2}=\hat Q^{2}$ and $\pi$ is nonincreasing in $q^{-i}$, it follows that $\Pi(\hat Q^{2},\hat Q^{1})\geq\Pi(\tilde Q^{2},\tilde Q^{1})$. Moreover, firm $2$'s equilibrium payoff is at least equal to $\Pi(\hat Q^{2},\hat Q^{1})$ by consistency of $(\hat Q^{1},\hat Q^{2})$ with $\phi^{1}$, so that together also $\Pi(Q^{2},Q^{1})\geq\Pi(\tilde Q^{2},\tilde Q^{1})$.

However, it was already shown above that $\Pi(Q^{1},Q^{2})\geq\Pi(\tilde Q^{1},\tilde Q^{2})$, so that both inequalities must in fact hold with equality by Lemma~\ref{lem:indifference}. Thus, also the previous inequalities $\Pi(Q^{2},Q^{1})\geq\Pi(\hat Q^{2},\hat Q^{1})\geq\Pi(\tilde Q^{2},\tilde Q^{1})$ must hold with equality, where the value of $\Pi(\hat Q^{2},\hat Q^{1})$ is indeed given by the function $V^{\text{abs}}$ by Lemma~\ref{lem:Vinfty}. This completes the proof for firm $2$, and repeating it with switched roles proves the claim also for firm $1$.
\end{proof}

\begin{lemma}\label{lem:reflection_sum}
Let $\phi^{1}$ and $\phi^{2}$ be reflection strategies that correspond to a constant price threshold $p>0$. Then the two outcomes from $(\phi^{1},\phi^{2})$ such that one firm does not invest have the same aggregate capital, and every other outcome with the same sum is also an outcome from $(\phi^{1},\phi^{2})$.
\end{lemma}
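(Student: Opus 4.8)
The plan rests on the observation, already noted after Theorem~\ref{thm:static}, that for a constant price threshold $p>0$ both reflection boundaries coincide and depend only on aggregate capital: $\bar X^1(q^1,q^2)=\bar X^2(q^2,q^1)=p/P(q^1+q^2)$, so that $\phi^i(x,q^{-i})=\bigl((x/p)^{\gamma}-q^{-i}\bigr)^{+}$, as in the proof of Lemma~\ref{lem:p_admissible}. Abbreviate $Y_t=(X_t/p)^{\gamma}$. The two elementary facts I will use are that, for a constant $a\geq 0$, $q+\bigl(a\vee(y-q)^{+}\bigr)=(q+a)\vee y$, and that $\sup_{s\leq t}(Y_s-Q_s)\geq\bigl(\sup_{s\leq t}Y_s\bigr)-Q_t$ for any nondecreasing $Q$.

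First I would identify the aggregate capital of the two outcomes in which one firm abstains. By Lemma~\ref{lem:p_admissible}, in the unique (admissible) outcome from $(\phi^1,\phi^2)$ in which firm $1$ does not invest one has $Q_t^1=q^1$ and $Q_t^2=q^2\vee\sup_{s\leq t}\phi^2(X_s,q^1)$, whence $Q_t^1+Q_t^2=(q^1+q^2)\vee\sup_{s\leq t}Y_s=:S_t$ by the first elementary identity; the role-switched version of Lemma~\ref{lem:p_admissible} (as used already in Lemma~\ref{lem:Vp}) gives the same $S_t$ for the outcome in which firm $2$ abstains. This is the first assertion.

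Next I would take an arbitrary admissible outcome $(Q^1,Q^2)$ with $Q_t^1+Q_t^2=S_t$ for all $t$ and verify that it satisfies the defining system of the strategy pair; by symmetry it suffices to show $Q_t^1=q^1\vee\sup_{s\leq t}\phi^1(X_s,Q_s^2)=q^1\vee M_t$, where $M_t:=\sup_{s\leq t}(Y_s-Q_s^2)$ and the positive part is absorbed since $q^1\geq 0$. For $q^1\vee M_t\leq Q_t^1$: since $Y_s\leq S_s=Q_s^1+Q_s^2$ one has $Y_s-Q_s^2\leq Q_s^1\leq Q_t^1$, and also $q^1\leq Q_0^1\leq Q_t^1$. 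For the reverse inequality: if $Q_t^1=q^1$ it is immediate; and if $Q_t^1>q^1$, then $S_t=Q_t^1+Q_t^2>q^1+q^2$, so $S_t=\sup_{s\leq t}Y_s$, and the second elementary fact gives $M_t\geq\bigl(\sup_{s\leq t}Y_s\bigr)-Q_t^2=S_t-Q_t^2=Q_t^1$. Hence $Q_t^1=q^1\vee M_t$, and the symmetric computation gives the equation for $Q^2$, so $(Q^1,Q^2)$ is an outcome from $(\phi^1,\phi^2)$.

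I do not expect a genuine obstacle: the whole argument reduces to the two displayed manipulations once the explicit forms of $\bar X^i$ and $\phi^i$ are written down. The only points that need a little care are absorbing the positive part in $\phi^i$ against the outer maximum with a nonnegative constant, the remark that $S_t=\sup_{s\leq t}Y_s$ precisely when $S_t>q^1+q^2$, i.e.\ precisely when $Q_t^1>q^1$ (respectively $Q_t^2>q^2$), and invoking Lemma~\ref{lem:p_admissible} together with \eqref{r>mu_gamma} to know that the outcomes in play are genuinely admissible.
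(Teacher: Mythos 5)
Your proposal is correct and follows essentially the same route as the paper: both identify the aggregate capital as $(q^{1}+q^{2})\vee\sup_{s\leq t}P^{-1}(p/X_s)$ via the explicit form of $\phi^{i}$, establish $Q_t^{i}\geq q^{i}\vee\sup_{s\leq t}\phi^{i}(X_s,Q_s^{-i})$ from $Y_s\leq S_s$, and obtain the reverse inequality from the monotonicity of $Q^{-i}$ together with the dichotomy on whether $Q_t^{i}>q^{i}$ (the paper phrases this last step as a contradiction, you argue it directly, which is only a cosmetic difference). Your extra care with the positive part in $\phi^{i}$ and with absorbing it against $q^{i}\geq 0$ is a slight tightening of a point the paper leaves implicit.
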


\begin{proof}
For each $i$, the constant price threshold $\bar X^{i}=p/P$ implies that $\phi^{i}=P^{-1}(p/x)-q^{-i}$. Thus, it follows from consistency with $\phi^{i}$ that the outcome in which $Q_t^{-i}=q^{-i}$ for all $t$ satisfies the equation
\begin{equation}\label{reflection_sum}
Q_t^{1}+Q_t^{2}=(q^{1}+q^{2})\vee\sup_{0\leq s\leq t}P^{-1}(p/X_s),
\end{equation}
which fully determines aggregate capital. Now consider any other outcome such that \eqref{reflection_sum} holds. Then $Q_t^{i}\geq P^{-1}(p/X_t)-Q_t^{-i}=\phi^{i}(X_t,Q_t^{-i})$ for all $t$. Since $Q^{i}$ is nondecreasing and $Q_t^{i}\geq q^{i}$ for any outcome, it follows that $Q_t^{i}\geq q^{i}\vee\sup_{0\leq s\leq t}\phi^{i}(X_s,Q_s^{-i})$. Consistency with $\phi^{i}$ means the latter inequality always holds with equality, so assume by way of contradiction it is strict for some $t$. Then $Q_t^{i}>\sup_{0\leq s\leq t}P^{-1}(p/X_s)-Q_s^{-i}\geq\sup_{0\leq s\leq t} P^{-1}(p/X_s)-Q_t^{-i}$, since $Q^{-i}$ is nondecreasing for any outcome. But this contradicts \eqref{reflection_sum}, because also $Q_t^{i}>q^{i}$ by assumption, and $Q_t^{-i}\geq q^{-i}$. Thus, the outcome is indeed consistent with each $\phi^{i}$.
\end{proof}

\section{Lemmas for the proof of Theorem~\ref{thm:dynamic}}\label{app:dynamic}
\setcounter{equation}{0}

\begin{lemma}\label{lem:barXc}
Fix any $c\geq 0$ and consider the function $\bar X^{i}$ given by \eqref{barXc}.
Then $\bar X^{i}$ is strictly increasing in both arguments, and $\lim_{q^{i}\to\infty}\bar X^{i}(q^{i},q^{-i})=\infty$.
\end{lemma}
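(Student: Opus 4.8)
The plan is to use the explicit form $P(q)=q^{-1/\gamma}$, so that \eqref{barXc} becomes
$\bar X^{i}(q^{i},q^{-i})=\bigl(p^{*}+c/(q^{i}\vee q^{-i})\bigr)(q^{i}+q^{-i})^{1/\gamma}$.
The case $c=0$ is immediate, since then $\bar X^{i}=p^{*}(q^{i}+q^{-i})^{1/\gamma}$, which is manifestly strictly increasing in each argument and tends to $\infty$ as $q^{i}\to\infty$. So I would assume $c>0$; then the lower bound on the capital stocks forces $q^{i},q^{-i}>0$, and every expression below is well defined. The first observation is that $\bar X^{i}$ is \emph{symmetric} in $(q^{i},q^{-i})$, because both $q^{i}\vee q^{-i}$ and $q^{i}+q^{-i}$ are. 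Hence it suffices to prove that $q^{i}\mapsto\bar X^{i}(q^{i},q^{-i})$ is strictly increasing for each fixed $q^{-i}\geq c(2\gamma-1)/p^{*}$; strict monotonicity in $q^{-i}$ then follows by symmetry.

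Next I would split the range of $q^{i}$ at the point $q^{i}=q^{-i}$. On $\{q^{i}\leq q^{-i}\}$ the factor $p^{*}+c/(q^{i}\vee q^{-i})=p^{*}+c/q^{-i}$ is a positive constant in $q^{i}$, so $\bar X^{i}$ is a positive multiple of $(q^{i}+q^{-i})^{1/\gamma}$ and thus strictly increasing there. On $\{q^{i}\geq q^{-i}\}$ I would differentiate $\bar X^{i}=(p^{*}+c/q^{i})(q^{i}+q^{-i})^{1/\gamma}$ with respect to $q^{i}$, pull out the positive factor $(q^{i}+q^{-i})^{1/\gamma-1}$, and clear denominators by multiplying through by $\gamma(q^{i})^{2}>0$; this reduces positivity of the derivative to the inequality
\begin{equation*}
p^{*}(q^{i})^{2}>(\gamma-1)c\,q^{i}+\gamma c\,q^{-i}.
\end{equation*}
Since $q^{-i}\leq q^{i}$ on this region, the right-hand side is at most $(2\gamma-1)c\,q^{i}$, with \emph{strict} inequality whenever $q^{i}>q^{-i}$ (using $c>0$); and $(2\gamma-1)c\,q^{i}\leq p^{*}(q^{i})^{2}$ is exactly the assumed bound $q^{i}\geq c(2\gamma-1)/p^{*}$. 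Hence the derivative is strictly positive on $\{q^{i}>q^{-i}\}$.

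To conclude strict monotonicity on all of $[c(2\gamma-1)/p^{*},\infty)$, I would invoke that $\bar X^{i}$ is continuous in $q^{i}$ and strictly increasing on each of the two closed pieces meeting at $q^{i}=q^{-i}$ (a continuous function that is strictly increasing on the interior of an interval is strictly increasing on the whole interval), and that the two pieces agree at $q^{i}=q^{-i}$; gluing then gives the claim. The limit is then trivial: $\bar X^{i}(q^{i},q^{-i})\geq p^{*}(q^{i}+q^{-i})^{1/\gamma}\geq p^{*}(q^{i})^{1/\gamma}\to\infty$ as $q^{i}\to\infty$.

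The main obstacle is the derivative computation on $\{q^{i}\geq q^{-i}\}$ and, above all, recognizing that the inequality it produces is precisely what the threshold $q^{i},q^{-i}\geq c(2\gamma-1)/p^{*}$ secures---this is where the particular constant $(2\gamma-1)/p^{*}$ is forced. A minor secondary point is that $\bar X^{i}$ is not differentiable at $q^{i}=q^{-i}$; I would deal with this by the continuity/gluing argument above rather than by comparing one-sided derivatives.
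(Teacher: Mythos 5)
Your proposal is correct and follows essentially the same route as the paper's proof: reduce to monotonicity in $q^{i}$ by symmetry, treat the region $q^{i}<q^{-i}$ trivially, and on $q^{i}\geq q^{-i}$ compute the same derivative whose sign is governed by the quadratic $p^{*}(q^{i})^{2}+(1-\gamma)cq^{i}-\gamma cq^{-i}$, with the threshold $c(2\gamma-1)/p^{*}$ securing its positivity. The only cosmetic difference is how strict positivity for $q^{i}>q^{-i}$ is argued (you bound the linear terms by $(2\gamma-1)cq^{i}$; the paper uses that the quadratic has a strictly negative constant term), and both handle the kink at $q^{i}=q^{-i}$ by the same continuity/gluing observation.
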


\begin{proof}
Since $p^{*}>0$ and $c\geq 0$, it is clear that $\bar X^{i}\to\infty$ as $q^{i}\to\infty$. To show that $\bar X^{i}$ is strictly increasing on the given domain, which is obviously true for $c=0$ by $P(q)=q^{-1/\gamma}$, assume $c>0$. By symmetry, it is enough to show that $\bar X^{i}$ is strictly increasing in $q^{i}$, and by continuity it is enough to show this for the two cases $q^{i}\geq q^{-i}$ and $q^{i}<q^{-i}$. Therefore, consider first $q^{i}\geq q^{-i}\geq c(2\gamma-1)/p^{*}$. Then
\begin{equation*}
\bar X^{i}_{q^{i}}=\frac{1}{\gamma (q^{i})^2}\left(p^{*}(q^{i})^2+(1-\gamma)cq^{i}-\gamma cq^{-i}\right)\left(q^{i}+q^{-i}\right)^{\frac{1}{\gamma}-1}.
\end{equation*}
Since $\gamma cq^{-i}>0$ in the quadratic polynomial, it follows that $\bar X^{i}_{q^{i}}$ is strictly positive for all $q^{i}>q^{-i}$ if it is nonnegative for $q^{i}=q^{-i}$, which is indeed true by $q^{-i}\geq c(2\gamma-1)/p^{*}$. Thus, $\bar X^{i}$ is strictly increasing in $q^{i}\geq q^{-i}$. Next, consider the case $q^{-i}>q^{i}\geq c(2\gamma-1)/p^{*}$. Then clearly
\begin{equation*}
\bar X^{i}_{q^{i}}=\frac{1}{\gamma}\left(p^{*}+\frac{c}{q^{-i}}\right)\left(q^{i}+q^{-i}\right)^{\frac{1}{\gamma}-1}>0. \qedhere
\end{equation*}
\end{proof}

\begin{lemma}\label{lem:c_eql_outcome}
Let $(\phi^{1},\phi^{2})$ be a pair of reflection strategies with boundaries given by \eqref{barXc} for some $c\geq 0$, and consider any state such that $q^{1},q^{2}\geq c(2\gamma-1)/p^{*}$. Then the pair of processes $(Q^{1},Q^{2})$ given by \eqref{c_eql_outcome} is an outcome from $(\phi^{1},\phi^{2})$.
\end{lemma}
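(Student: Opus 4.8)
\textit{Plan.} After relabelling so that $q^{1}\le q^{2}$ (firm $1$ is the smaller firm), the argument splits into (I) checking that each $Q^{i}$ in \eqref{c_eql_outcome} is admissible, and (II) checking that the pair $(Q^{1},Q^{2})$ simultaneously solves the two consistency equations; step (II) is the substantial one. For (I): by Lemma~\ref{lem:barXc} the maps $q\mapsto\bar X^{i}(q,q^{-i})$ and $q\mapsto\bar X^{i}(q,q)$ are continuous and strictly increasing on $\{q\ge c(2\gamma-1)/p^{*}\}$, so $\phi^{i}(\cdot,q^{-i})$ and $\psi^{c}$ are their (continuous, nondecreasing) generalized inverses; hence $s\mapsto\phi^{i}(X_{s},q^{-i})\wedge\psi^{c}(X_{s})$ is continuous and adapted, and $Q^{i}$ is adapted, nondecreasing, and right-continuous (continuous on $(0,\infty)$, with at most a jump at $t=0$), and $Q^{i}_{0}\ge q^{i}$. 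Finiteness of $E\int_{0}^{\infty}e^{-rt}\,dQ^{i}_{t}$ follows exactly as in Lemma~\ref{lem:p_admissible}: from $\bar X^{i}(q,q)\ge p^{*}(2q)^{1/\gamma}$ one gets $\psi^{c}(x)\le\tfrac12(x/p^{*})^{\gamma}$, hence $Q^{i}_{t}\le q^{i}\vee\tfrac12\sup_{s\le t}(X_{s}/p^{*})^{\gamma}$, and Lemma~\ref{lem:Qintegrable} together with the integrability estimate granted by \eqref{r>mu_gamma} closes this.

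For (II), the key comparisons — all immediate from $\bar X^{i}$ being nondecreasing in its second argument (Lemma~\ref{lem:barXc}) together with the identity $\bar X^{i}(q^{i},q^{-i})=\bar X^{-i}(q^{-i},q^{i})$ in \eqref{barXc} — are: $\phi^{i}(x,q^{-i})\le\psi^{c}(x)$ whenever $\psi^{c}(x)\le q^{-i}$, and $\phi^{i}(x,q^{-i})\ge\psi^{c}(x)$ whenever $\psi^{c}(x)\ge q^{-i}$; moreover, by \eqref{q<phi} applied with first argument $q^{-i}$, both $\phi^{i}(\cdot,q^{-i})$ and $\psi^{c}$ exceed the level $q^{-i}$ exactly on $\{x>\bar X^{i}(q^{-i},q^{-i})\}$. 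Taking $i=1$, $q^{-i}=q^{2}$, this shows $\phi^{1}(x,q^{2})\wedge\psi^{c}(x)$ equals $\phi^{1}(x,q^{2})\ (\le q^{2})$ for $x\le x^{*}:=\bar X^{1}(q^{2},q^{2})$ and equals $\psi^{c}(x)\ (\ge q^{2})$ for $x>x^{*}$. I would then introduce the catch-up time $\tau:=\inf\{t\ge0:X_{t}\ge\bar X^{1}(q^{2},q^{2})\}$ (with $\tau=0$ if already $q^{1}=q^{2}$) and read off from \eqref{c_eql_outcome} that for $t<\tau$ one has $Q^{2}_{t}=q^{2}$ and $Q^{1}_{t}=q^{1}\vee\sup_{s\le t}\phi^{1}(X_{s},q^{2})<q^{2}$ (only the smaller firm invests, reflecting off $\bar X^{1}(\cdot,q^{2})$), while for $t\ge\tau$ one has $Q^{1}_{t}=Q^{2}_{t}=q^{2}\vee\sup_{s\le t}\psi^{c}(X_{s})$ (symmetric growth, reflecting off the diagonal boundary $q\mapsto\bar X^{i}(q,q)$, whose inverse is $\psi^{c}$); here one also uses that the residual contributions with $X_{s}\le x^{*}$ never exceed $q^{2}$, so they do not affect the running maxima. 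In particular the outcome has the structure asserted in the proof of Theorem~\ref{thm:dynamic}.

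It then remains to verify the two reflection identities for this $(Q^{1},Q^{2})$. For each $i$ this reduces to showing (a) $X_{t}\le\bar X^{i}(Q^{i}_{t},Q^{-i}_{t})$ for all $t$, which via \eqref{q<phi} gives $\phi^{i}(X_{t},Q^{-i}_{t})\le Q^{i}_{t}$ and hence ``$\ge$'' in the identity, and (b) that $Q^{i}$ increases only at times $t$ with $X_{t}=\bar X^{i}(Q^{i}_{t},Q^{-i}_{t})$, i.e.\ $\phi^{i}(X_{t},Q^{-i}_{t})=Q^{i}_{t}$ by \eqref{q=phi}, which together with the running-maximum structure gives ``$\le$''. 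In the idle phase, (a) for $i=1$ is Lemma~\ref{lem:reflection_outcome}\rref{reflection_state} applied to $Q^{1}_{t}=q^{1}\vee\sup_{s\le t}\phi^{1}(X_{s},q^{2})$, and (a) for $i=2$ follows from $X_{t}\le\bar X^{1}(Q^{1}_{t},q^{2})=\bar X^{2}(q^{2},Q^{1}_{t})$ by the boundary identity; in the symmetric phase, $\psi^{c}$ being the inverse of $q\mapsto\bar X^{i}(q,q)$ gives $X_{t}\le\bar X^{i}(\psi^{c}(X_{t}),\psi^{c}(X_{t}))\le\bar X^{i}(Q^{i}_{t},Q^{i}_{t})=\bar X^{i}(Q^{i}_{t},Q^{-i}_{t})$. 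Part (b) is analogous: each increment of $Q^{i}$ corresponds to a new running maximum of $\phi^{1}(\cdot,q^{2})$ (idle phase) or of $\psi^{c}$ (symmetric phase), which pins $X_{t}$ on the relevant boundary.

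The hard part will be step (II): making precise that the cap $\psi^{c}$ in \eqref{c_eql_outcome} is inert while one firm lags, but becomes the binding constraint exactly when the two firms are equal, and that the switch happens at the single threshold $x^{*}=\bar X^{1}(q^{2},q^{2})$. Everything here rests on $\bar X^{i}$ being monotone in the opponent's stock (lowering the opponent's capital lowers one's own trigger, so the lagging firm can safely grow with the opponent held fixed) together with the symmetry $\bar X^{i}=\bar X^{-i}$; by contrast the continuity of the inverse maps, the running-maximum bookkeeping for (a)/(b), and the integrability bound in (I) are routine.
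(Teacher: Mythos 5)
Your proposal is correct and follows essentially the same route as the paper's proof: the same split into admissibility and consistency, the same catch-up time $\tau$ at the threshold $\bar X^{1}(q^{2},q^{2})$, the same two-phase structure (only the smaller firm invests, then symmetric growth along the diagonal boundary), and the same comparison inequalities between $\phi^{i}(\cdot,q^{-i})$ and $\psi^{c}$ derived from the monotonicity of $\bar X^{i}$ and the equivalences \eqref{q<phi}, \eqref{q=phi}. The only (immaterial) difference is that you bound the investment cost via $\psi^{c}(x)\leq\tfrac12(x/p^{*})^{\gamma}\vee c(2\gamma-1)/p^{*}$ directly, whereas the paper dominates $Q^{i}$ by the outcome of the constant-threshold strategy from Lemma~\rref{lem:p_admissible}; both reduce to the integrability condition \eqrref{integrable_Z}.
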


\begin{proof}
Let $Q^{i}$ be given by \eqref{c_eql_outcome} for each firm $i$, which is clearly admissible if the investment cost is finite. To see that this is the case, note that $Q^{i}$ is dominated by the capital that firm $i$ would accumulate if the other firm did not invest at all, $q^{i}\vee\sup_{0\leq s\leq t}\phi^{i}(X_s,q^{-i})$. The latter, furthermore, by definition of any reflection strategy $\phi^{i}$, is itself dominated by firm $i$'s capital in the outcome that is consistent with firm $i$ using the reflection strategy with constant price threshold $p=p^{*}$ and such that the other firm does not invest, because the hypothesis that $c\geq 0$ implies that $\bar X^{i}(q,q^{-i})\geq p^{*}(q+q^{-i})^{\frac{1}{\gamma}}=p^{*}/P(q+q^{-i})$ for any $q\geq 0$. This last outcome, consistent with a constant price threshold, is admissible by Lemma~\ref{lem:p_admissible} and, thus, has finite investment cost. By Lemma~\ref{lem:Qintegrable}, it then follows that also the investment cost for the present $Q^{i}$ is finite.

For proving that the present outcome $(Q^{1},Q^{2})$ is consistent with both reflection strategies, note the following two equivalences analogous to \eqref{q<phi} and \eqref{q=phi}, which hold for any $q\geq c(2\gamma-1)/p^{*}$ since $\bar X^{i}(q,q)$ is strictly increasing and continuous on this domain:
\begin{equation}\label{q<psi}
q<\psi^{c}(x)\iff x>\bar X^{i}(q,q)\quad\text{and}\quad q=\psi^{c}(x)\iff x=\bar X^{i}(q,q).
\end{equation}
Now fix $i$ such that $q^{i}\leq q^{-i}$ and let
\begin{equation*}
\tau=\inf\{t\geq 0\mid \psi^{c}(X_t)\geq q^{-i}\}.
\end{equation*}
Then, for any $t<\tau$, $\sup_{0\leq s\leq t}\psi^{c}(X_s)\leq q^{-i}$, which implies that $Q_t^{-i}=q^{-i}$ and $Q_t^{i}\leq q^{-i}$. Further, the fact that $\psi^{c}(X_t)<q^{-i}$ for any $t<\tau$ implies that $\bar X^{i}(q^{-i},q^{-i})>X_t=\bar X^{i}(\phi^{i}(X_t,q^{-i}),q^{-i})$ by equivalences \eqref{q<psi} and \eqref{q=phi}, and then necessarily $X_t>\bar X^{i}(\phi^{i}(X_t,q^{-i}),\phi^{i}(X_t,q^{-i}))$ by the fact that $\bar X^{i}$ is strictly increasing in both arguments. Thus, $\phi^{i}(X_t,q^{-i})<\psi^{c}(X_t)$ by the first equivalence in \eqref{q<psi}. Together with the fact that $Q_t^{-i}=q^{-i}$ for all $t<\tau$ as argued before, it follows that
\begin{equation*}
Q_t^{i}=q^{i}\vee\sup_{0\leq s\leq t}\phi^{i}(X_s,q^{-i})=q^{i}\vee\sup_{0\leq s\leq t}\phi^{i}(X_s,Q_s^{-i}),
\end{equation*}
which proves consistency with $\phi^{i}$ for these $t$. Moreover, this implies that $Q_t^{i}\geq\phi^{i}(X_t,q^{-i})$, so that also $q^{-i}\geq\phi^{-i}(X_t,Q_t^{i})$ for all $t<\tau$ by equivalence \eqref{q<phi} and $\bar X^{i}=\bar X^{-i}$. It follows that
\begin{equation*}
Q_t^{-i}=q^{-i}=q^{-i}\vee\sup_{0\leq s\leq t}\phi^{-i}(X_s,Q_s^{i}),
\end{equation*}
which proves also consistency with $\phi^{-i}$ for these $t$.

Next, consider any $t$ such that $\psi^{c}(X_t)\geq q^{-i}$. Then, similarly as before,
$\phi^{i}(X_t,q^{-i})\geq\psi^{c}(X_t)$ by equivalences \eqref{q<psi} and \eqref{q=phi}, and the fact that $\bar X^{i}$ is strictly increasing. Thus, by \eqref{c_eql_outcome}, $Q_t^{i}\geq\psi^c(X_t)$ for any such $t$. Since $Q^{i}$ is nondecreasing, it follows that
\begin{equation*}
Q_t^{i}\geq\sup_{0\leq s\leq t}\psi^{c}(X_s)\geq q^{-i}
\end{equation*}
as soon as there is any $s\leq t$ such that $\psi^{c}(X_s)\geq q^{-i}$, which is the case for any $t\geq\tau$ by equivalences \eqref{q<psi} and continuity of $X_t$. The same inequalities hold true for $Q^{-i}$, because also $Q_t^{-i}\geq\psi^c(X_t)$ for any $t$ such that $\psi^{c}(X_t)\geq q^{-i}$ by the same arguments as for $Q^{i}$ and $q^{-i}\geq q^{i}$. For both $Q^{i}$ and $Q^{-i}$, however, the first inequality cannot be strict, because \eqref{c_eql_outcome} and $q^{-i}\geq q^{i}$ imply the reverse weak inequality. Hence, in fact
\begin{equation*}
Q_t^{i}=Q_t^{-i}=\sup_{0\leq s\leq t}\psi^{c}(X_s)\geq q^{-i}
\end{equation*}
for any $t\geq\tau$. To show that this implies consistency with $\phi^{i}$, first note that then $Q_t^{i}=Q_t^{-i}\geq\psi^{c}(X_t)$, so that $X_t\leq\bar X^{i}(Q_t^{i},Q_t^{-i})$ by the first equivalence in \eqref{q<psi} and hence $Q_t^{i}\geq\phi^{i}(X_t,Q_t^{-i})$ by equivalence \eqref{q<phi}. Thus, since the last inequality holds also for all $t<\tau$ by what has already been shown,
\begin{equation*}
Q_t^{i}\geq q^{i}\vee\sup_{0\leq s\leq t}\phi^{i}(X_s,Q_s^{-i})
\end{equation*}
for all $t\geq\tau$ by the fact that $Q^{i}$ is nondecreasing. For consistency with $\phi^{i}$, this inequality must not be strict. To show that the reverse weak inequality holds, note that $\psi^{c}(X_s)$ attains a maximum on any compact interval $[0,t]$, because $X_s$ is continuous in $s$, and since $\psi^{c}(x)$ is a continuous function by being the inverse of the strictly monotone function $f(q)=\bar X^{i}(q,q)$. Thus, let $s\leq t$ be such that $Q_t^{i}=\psi^{c}(X_s)$. Then, by the definition of $\tau$, $t\geq\tau$ implies that necessarily also $s\geq\tau$. Moreover, since $Q^{i}$ is nondecreasing and $Q_s^{i}\geq\psi^{c}(X_s)$ by $s\geq\tau$, already $Q_s^{i}=\psi^{c}(X_s)$. Thus, by the second equivalence in \eqref{q<psi}, $X_s=\bar X^{i}(Q_s^{i},Q_s^{i})$, where $Q_s^{i}=Q_s^{-i}$ by $s\geq\tau$. It follows that $Q_s^{i}=\phi^{i}(X_s,Q_s^{-i})$ by equivalence \eqref{q=phi}, and since $Q_t^{i}=Q_s^{i}$, this shows that indeed
\begin{equation*}
Q_t^{i}\leq\sup_{0\leq s\leq t}\phi^{i}(X_s,Q_s^{-i})\leq q^{i}\vee\sup_{0\leq s\leq t}\phi^{i}(X_s,Q_s^{-i}).
\end{equation*}
The same arguments and the fact that $\bar X^{i}=\bar X^{-i}$ yield also consistency with $\phi^{-i}$. Thus, $(Q^{1},Q^{2})$ is indeed an outcome from $(\phi^{1},\phi^{2})$.
\end{proof}

\begin{lemma}\label{lem:Vc_q-i}
Fix any $c\geq 0$ and let $\bar X^{-i}$ be given by \eqref{barXc}. Then the function $V^{c}$ given by \eqref{Vc}
is well defined and has the property that $V_{q^{-i}}^{c}=0$ whenever $x=\bar X^{-i}$ and $q^{i}\geq q^{-i}$.
\end{lemma}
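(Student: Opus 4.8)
For the first claim the only issue is convergence of the improper integral defining $B$ in \eqref{Vc}. By Lemma~\ref{lem:barXc} its integrand, call it $g(q,q^{-i})$, is continuous in $q$ on $[q^{i},\infty)$ and there $\bar X^{-i}(q,q^{-i})\ge p^{*}(q^{i}+q^{-i})^{1/\gamma}>0$, so only the behaviour as $q\to\infty$ matters. I would note that for $q\ge q^{-i}$ the maximum $q\vee q^{-i}$ in \eqref{barXc} is just $q$, so $\bar X^{-i}(q,q^{-i})=(p^{*}+c/q)(q+q^{-i})^{1/\gamma}$; inserting this and $P(q)=q^{-1/\gamma}$ shows that the factor $1-\bar X^{-i}(q,q^{-i})\frac{P'(q+q^{-i})q+P(q+q^{-i})}{r-\mu}$ tends to $1-\frac{p^{*}(1-1/\gamma)}{r-\mu}=\frac{\beta-\gamma}{\gamma(\beta-1)}$, while $\bar X^{-i}(q,q^{-i})^{-\beta}$ is of order $q^{-\beta/\gamma}$. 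Since \eqref{r>mu_gamma} is equivalent to $\beta>\gamma$, i.e. $\beta/\gamma>1$, the integral converges absolutely (the finite stretch $q\in[q^{i},q^{-i}]$, if any, being harmless), so $B$, and hence $V^{c}$, is well defined and finite.

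For the second claim, fix a state with $q^{i}\ge q^{-i}$, so that the simplification $\bar X^{-i}(q,q^{-i})=(p^{*}+c/q)(q+q^{-i})^{1/\gamma}$ holds for every $q$ in the range of integration. On $\{x\le\bar X^{-i}(q^{-i},q^{i})\}$ the first branch of \eqref{Vc} gives $V^{c}_{q^{-i}}=\frac{xP'(q^{i}+q^{-i})q^{i}}{r-\mu}+B_{q^{-i}}(q^{i},q^{-i})x^{\beta}$, so at $x=\bar X^{-i}(q^{-i},q^{i})$ the desired conclusion $V^{c}_{q^{-i}}=0$ is equivalent to
\[
B_{q^{-i}}(q^{i},q^{-i})=-\frac{P'(q^{i}+q^{-i})q^{i}}{(r-\mu)\,\bar X^{-i}(q^{-i},q^{i})^{\beta-1}}=:\tilde g(q^{i},q^{-i}),
\]
that is, $B_{q^{-i}}$ must equal $\tilde g$ evaluated at the lower endpoint of the integral that defines $B$.

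The crux is therefore the pointwise identity $\partial_{q^{-i}}g(q,q^{-i})=\partial_{q}\tilde g(q,q^{-i})$ for $q\ge q^{-i}$, where $\tilde g(q,q^{-i})=-\frac{P'(q+q^{-i})q}{(r-\mu)\bar X^{-i}(q,q^{-i})^{\beta-1}}=\frac{q(p^{*}+c/q)^{1-\beta}}{\gamma(r-\mu)}(q+q^{-i})^{-\beta/\gamma-1}$. I would establish it through the elementary decomposition $g(q,q^{-i})=\big((p^{*}+c/q)^{-\beta}-\frac{(p^{*}+c/q)^{1-\beta}}{r-\mu}\big)(q+q^{-i})^{-\beta/\gamma}+\tilde g(q,q^{-i})$, in which the first summand depends on $q^{-i}$ only through $(q+q^{-i})^{-\beta/\gamma}$, followed by direct differentiation of both sides: the surplus terms collapse precisely because $(p^{*}+c/q)-c/q=p^{*}$ and $p^{*}/(r-\mu)=\beta/(\beta-1)$. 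This is the step that uses the particular form \eqref{barXc} of the boundary, and I expect the bookkeeping of the $(p^{*}+c/q)$-factors — invoking the defining relation of $p^{*}$ only at the very end — to be the main obstacle. Once the identity is in hand, differentiation under the integral sign (legitimate because the explicit formulas give $\partial_{q^{-i}}g=O(q^{-\beta/\gamma-1})$, dominated locally uniformly in $q^{-i}$ by an integrable function of $q$) yields $B_{q^{-i}}(q^{i},q^{-i})=-\int_{q^{i}}^{\infty}\partial_{q^{-i}}g(q,q^{-i})\,dq=-\int_{q^{i}}^{\infty}\partial_{q}\tilde g(q,q^{-i})\,dq=\tilde g(q^{i},q^{-i})$, using $\tilde g(q,q^{-i})=O(q^{-\beta/\gamma})\to0$ as $q\to\infty$. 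Together with the second paragraph this gives $V^{c}_{q^{-i}}=0$ whenever $x=\bar X^{-i}$ and $q^{i}\ge q^{-i}$, completing the proof.
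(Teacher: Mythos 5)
Your proposal is correct and follows essentially the same route as the paper: integrability of the defining integral via $\beta>\gamma$, representation of the target quantity $-A_{q^{-i}}(\bar X^{-i})^{1-\beta}$ as $-\int_{q^{i}}^{\infty}\partial_{q}\tilde g\,dq$ using its decay at infinity, and verification of the pointwise identity of integrands, which in both treatments reduces in the end to the defining relation $p^{*}=(r-\mu)\beta/(\beta-1)$. The only difference is organizational: the paper keeps $\bar X^{-i}$, $P$, $P'$, $P''$ symbolic and reduces the mixed-partial equality to a first-order identity in $\bar X^{-i}$, whereas you substitute the explicit form of the boundary first and exploit your decomposition of $g$; both computations check out.
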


\begin{proof}
First, to see that the function $B(q^{i},q^{-i})$ is well defined by the integral and finite, denote the integrand by $B_{q^{i}}$. Since
\begin{equation*}
0\leq P'q+P=\left(\frac{\gamma-1}{\gamma}q+q^{-i}\right)\left(q+q^{-i}\right)^{-\frac{1}{\gamma}-1}\leq \left(q+q^{-i}\right)^{-\frac{1}{\gamma}}
\end{equation*}
and
\begin{equation*}
0\leq \bar X^{-i}=\left(p^{*}+\frac{c}{q\vee q^{-i}}\right)\left(q+q^{-i}\right)^{\frac{1}{\gamma}}\leq p^{*}\frac{2\gamma}{2\gamma-1}\left(q+q^{-i}\right)^{\frac{1}{\gamma}}
\end{equation*}
for all $q\geq q^{i}\geq c(2\gamma-1)/p^{*}>0$, then
\begin{equation*}
\absd{B_{q^{i}}}\leq \left(1+p^{*}\frac{2\gamma}{2\gamma-1}\right)\left(\bar X^{-i}\right)^{-\beta},
\end{equation*}
where moreover
\begin{equation*}
\left(\bar X^{-i}\right)^{-\beta}\leq \left(p^{*}\right)^{-\beta}\left(q+q^{-i}\right)^{-\frac{\beta}{\gamma}},
\end{equation*}
which is clearly integrable on $[q^{i},\infty)$ by $q^{i}>0$ and $\beta>\gamma$.

Next, since $V_{q^{-i}}^{c}$ is of the form $A_{q^{-i}}x+B_{q^{-i}}x^{\beta}$, the property that $V_{q^{-i}}^{c}=0$ for $x=\bar X^{-i}$ means
\begin{equation*}
B_{q^{-i}}=-A_{q^{-i}}\left(\bar X^{i}\right)^{1-\beta}.
\end{equation*}
Denote the right-hand side by $\tilde B_{q^{-i}}$. Then $\lim_{q^{i}\to\infty}\tilde B_{q^{-i}}=0$, because $\lim_{q^{i}\to\infty}\bar X^{i}=\infty$, $\beta>1$, and
\begin{equation*}
-A_{q^{-i}}=-\frac{P'q^{i}}{r-\mu}=\frac{(q^{i}+q^{-i})^{-\frac{1}{\gamma}}}{\gamma(r-\mu)}\frac{q^{i}}{q^{i}+q^{-i}},
\end{equation*}
which also vanishes as $q^{i}\to\infty$ since $\gamma>1$. Thus, it is possible to use the representation 
\begin{equation*}
\tilde B_{q^{-i}}=-\int_{q^{i}}^{\infty}\tilde B_{q^{-i}q^{i}}(q,q^{-i})\,dq
\end{equation*}
and verify that $B_{q^{i}q^{-i}}(q,q^{-i})=\tilde B_{q^{-i}q^{i}}(q,q^{-i})$ for all $q\geq q^{i}\geq q^{-i}$. Since
\begin{equation*}
\tilde B_{q^{-i}q^{i}}=-\left(\bar X^{-i}\right)^{-\beta}\left(\frac{P''q^{i}+P'}{r-\mu}\bar X^{-i}+(1-\beta)\frac{P'q^{i}}{r-\mu}\bar X_{q^{i}}^{-i}\right)
\end{equation*}
and
\begin{equation*}
B_{q^{i}q^{-i}}=-\left(\bar X^{-i}\right)^{-\beta}\left(\beta\left(\bar X^{-i}\right)^{-1}\bar X_{q^{-i}}^{-i}+(1-\beta)\bar X_{q^{-i}}^{-i}\frac{P'q^{i}+P}{r-\mu}+\bar X^{-i}\frac{P''q^{i}+P'}{r-\mu}\right),
\end{equation*}
these are equal if and only if
\begin{equation*}
P'q^{i}\bar X_{q^{i}}^{-i}+\frac{(r-\mu)\beta}{\beta-1}\left(\bar X^{-i}\right)^{-1}\bar X_{q^{-i}}^{-i}-\bar X_{q^{-i}}^{-i}\left(P'q^{i}+P\right)=0.
\end{equation*}
Using the definitions of $p^{*}$ and $P$, this is equivalent to
\begin{equation*}
-\frac{1}{\gamma}q^{i}\bar X_{q^{i}}^{-i}+p^{*}\left(\bar X^{-i}\right)^{-1}\bar X_{q^{-i}}^{-i}\left(q^{i}+q^{-i}\right)^{\frac{1}{\gamma}+1}-\bar X_{q^{-i}}^{-i}\left(\frac{\gamma-1}{\gamma}q^{i}+q^{-i}\right)=0.
\end{equation*}
Given the specification $\bar X^{-i}=(p^{*}+c/q^{i})(q^{i}+q^{-i})^{1/\gamma}$ (for $q^{i}\geq q^{-i}$), it is easy to verify that the last equation indeed holds.
\end{proof}

\begin{lemma}\label{lem:Vc}
Let $(\phi^{1},\phi^{2})$ be a pair of reflection strategies with boundaries given by \eqref{barXc} for some $c\geq 0$, and consider any state such that $q^{1},q^{2}\geq c(2\gamma-1)/p^{*}$. Then the outcome given by \eqref{c_eql_outcome}
is optimal for each firm $i$ among all outcomes that are consistent with $\phi^{-i}$, and firm $i$'s payoff from this outcome is equal to the value of the function $V^{c}$ defined in \eqref{Vc}.
\end{lemma}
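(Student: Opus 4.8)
The plan is to apply the Verification Theorem~\ref{thm:ver} to the candidate $V=V^{c}$, in the two stages it provides. Admissibility and consistency of the outcome \eqref{c_eql_outcome} are already supplied by Lemma~\ref{lem:c_eql_outcome}, so what remains is to check the theorem's hypotheses. Conditions~\ref{V_PDE} and \ref{V_qi=1} hold by construction: on $\{x\le\bar X^{-i}\}$ the function $V^{c}$ has the same shape $\pi/(r-\mu)+Bx^{\beta}$ as $V^{\text{inv}}$ in Theorem~\ref{thm:static} and hence solves the differential equation, while $B(q^{i},q^{-i})$ is precisely the antiderivative that makes $V^{c}_{q^{i}}=1$ on $\{x=\bar X^{i}=\bar X^{-i}\}$, and therefore on $\{x\ge\bar X^{i}\}$. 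Condition~\ref{V_q-i=0} is vacuous, the two boundaries being the same function of the capital pair. For the equality part I would then check \eqref{V_q-i_0}--\eqref{limV=0} for the outcome \eqref{c_eql_outcome}. The decisive input is Lemma~\ref{lem:Vc_q-i}: $V^{c}_{q^{-i}}=0$ whenever $x=\bar X^{-i}$ and $q^{i}\ge q^{-i}$. Using the description of the outcome in Lemma~\ref{lem:c_eql_outcome}, firm $-i$ invests only at such states --- through an initial jump, where $q<Q_{0}^{-i}\le\psi^{c}(x)$ together with \eqref{q=phi} and the strict monotonicity of $\bar X^{i}$ forces $\phi^{i}(x,q)>q$; during the symmetric phase $t\ge\tau$, where $Q_{t}^{i}=Q_{t}^{-i}$; or, when $i$ is the larger firm, while the smaller one catches up with $Q_{t}^{i}\ge Q_{t}^{-i}$ --- which yields \eqref{V_q-i_0} and \eqref{V_q-i_t}. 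Conditions \eqref{Vintegrable} and \eqref{limV=0} then follow, as in Lemmas~\ref{lem:Vinfty} and \ref{lem:Vp}, from an affine bound $\abs{V^{c}}\le C(1+q^{i}+q^{-i})$ on $\{x\le\bar X^{-i}\}$ --- combining the estimates on $B$ from Lemma~\ref{lem:Vc_q-i} with $\bar X^{-i}P\le p^{*}\,2\gamma/(2\gamma-1)$ and $\beta>\gamma$ --- together with Lemma~\ref{lem:Qintegrable} and the finite investment cost from Lemma~\ref{lem:c_eql_outcome}. This gives $V^{c}(x,q^{i},q^{-i})=\Pi(Q^{i},Q^{-i})$ for the outcome \eqref{c_eql_outcome}.

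For the optimality part I would verify conditions \ref{V_PDE_leq}--\ref{V_q-i<0}. Condition~\ref{V_PDE_leq} holds with equality, being the same differential equation as in \ref{V_PDE}. For condition~\ref{V_qi<1}, note that for fixed $(q^{i},q^{-i})$ one has $V^{c}_{q^{i}}=az+(1-a)z^{\beta}$ with $z=x/\bar X^{-i}\in[0,1]$ and $a=\bar X^{-i}(P'q^{i}+P)/(r-\mu)\ge 0$; an elementary argument ($z\mapsto az+(1-a)z^{\beta}$ is convex, or else concave and nondecreasing, on $[0,1]$ according as $a\le 1$ or $1<a\le\beta/(\beta-1)$) shows this stays $\le 1$ on $[0,1]$ as soon as $a\le\beta/(\beta-1)$, and $a\le\beta/(\beta-1)$ reduces to $c\bigl((\gamma-1)q^{i}+\gamma q^{-i}\bigr)\le p^{*}q^{i}(q^{i}\vee q^{-i})$, which is exactly what the lower bound $q^{i},q^{-i}\ge c(2\gamma-1)/p^{*}$ guarantees. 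Condition~\ref{V_q-i<0} requires $V^{c}_{q^{-i}}\le 0$ on $\{x=\bar X^{-i}\}$: for $q^{i}\ge q^{-i}$ this is the equality of Lemma~\ref{lem:Vc_q-i}, while for $q^{i}<q^{-i}$ one writes, in the notation of that lemma's proof, $V^{c}_{q^{-i}}\big|_{x=\bar X^{-i}}=(B_{q^{-i}}-\tilde B_{q^{-i}})(\bar X^{-i})^{\beta}$, observes that $B_{q^{-i}}-\tilde B_{q^{-i}}$ vanishes once the first argument reaches $q^{-i}$ (Lemma~\ref{lem:Vc_q-i} again), and integrates the residual $\tilde B_{q^{-i}q^{i}}-B_{q^{-i}q^{i}}$ over $[q^{i},q^{-i}]$; this residual has a definite sign, negative precisely when $q^{-i}>(\gamma-1)c/p^{*}$, and the lower bound on the capital stocks puts us on that side. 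Finally, every admissible outcome consistent with $\phi^{-i}$ already satisfies \eqref{Vintegrable} and \eqref{limV=0}: by \eqref{barXc} one has $\phi^{-i}(x,\cdot)\le(x/p^{*})^{\gamma}\vee c(2\gamma-1)/p^{*}$, so $Q^{-i}$ is dominated by a process with finite discounted cost under \eqref{r>mu_gamma} exactly as in Lemma~\ref{lem:p_admissible}, and the affine bound on $\abs{V^{c}}$ then delivers the integrability. Theorem~\ref{thm:ver} now yields $V^{c}(x,q^{i},q^{-i})\ge\Pi(Q^{i},Q^{-i})$ for all such outcomes, so the outcome \eqref{c_eql_outcome} is optimal for each firm $i$ and its payoff equals the value of $V^{c}$.

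The step I expect to be the main obstacle is verifying the two sign conditions \ref{V_qi<1} and \ref{V_q-i<0} in the regime $q^{i}<q^{-i}$ (firm $i$ the smaller firm). These are where the profile \eqref{barXc} of $\bar X^{i}$ is tight, and both rest on the lower bound $c(2\gamma-1)/p^{*}$, which is also what keeps $\bar X^{i}$ strictly increasing (Lemma~\ref{lem:barXc}). A more routine but bookkeeping-intensive point is lining up firm $-i$'s investment times along \eqref{c_eql_outcome} with the states where $V^{c}_{q^{-i}}=0$, for which the detailed phase description from Lemma~\ref{lem:c_eql_outcome} is needed.
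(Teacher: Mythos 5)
Your proposal is correct and follows essentially the same route as the paper: Theorem~\rref{thm:ver} applied to $V^{c}$, with Lemma~\rref{lem:Vc_q-i} and the phase structure of the outcome \eqrref{c_eql_outcome} delivering \eqrref{V_q-i_0}--\eqrref{V_q-i_t}, an affine-in-$(q^{i}+q^{-i})$ bound on $\abs{V^{c}}$ delivering the integrability conditions via Lemma~\rref{lem:Qintegrable}, and conditions \rref{V_qi<1} and \rref{V_q-i<0} reduced to exactly the same inequalities ($\bar X^{i}\leq\tfrac{\beta}{\beta-1}A_{q^{i}}^{-1}$, tight at $q^{-i}=q^{i}$, and $B_{q^{-i}q^{i}}\geq\tilde B_{q^{-i}q^{i}}$ for $q^{-i}\geq c(\gamma-1)/p^{*}$). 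The only deviations are cosmetic: your convexity argument for $az+(1-a)z^{\beta}$ replaces the paper's monotonicity analysis of $(1-A_{q^{i}}x)x^{-\beta}$, and your domination bound on $Q^{-i}$ is redundant since outcomes are admissible by definition.
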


\begin{proof}
The given outcome is indeed admissible and consistent with both strategies by Lemma~\ref{lem:c_eql_outcome}. Hence, it is possible to apply Theorem~\ref{thm:ver} using the given strategies and the function $V=V^{c}$, which is indeed well defined by Lemma~\ref{lem:Vc_q-i}.

Then conditions \ref{V_PDE} and \ref{V_qi=1} hold by construction, and condition \ref{V_q-i=0} is void by $\bar X^{i}=\bar X^{-i}$. To show that the implication in \eqref{V_q-i_0} holds, it is enough to show that $\phi^{i}(x,q)\geq q$ for all relevant $q$, because $\{q^{i}=\phi^{i}(x,q^{-i})\}=\{x=\bar X^{i}(q^{i},q^{-i})\}$ by equivalence \eqref{q=phi}, and $V_{q^{-i}}=0$ on $\{x=\bar X^{i}(q^{i},q^{-i})\}\cap\{q^{i}\geq q^{-i}\}$ by Lemma~\ref{lem:Vc_q-i}. And indeed, whenever $q^{-i}<q<Q_0^{-i}$, then $q<\psi^{c}(X_0)=\psi^{c}(x)$, so that $\bar X^{i}(q,q)<x$ by the first equivalence in \eqref{q<psi}, which implies that $\phi^{i}(x,q)>q$ by equivalence \eqref{q<phi}.

Next, to show that also the implication in \eqref{V_q-i_t} holds, using again the fact that $V_{q^{-i}}=0$ on $\{x=\bar X^{i}(q^{i},q^{-i})\}\cap\{q^{i}\geq q^{-i}\}$, it is enough to show that $dQ_t^{-i}=0$ on $\{X_t=\bar X^{i}(Q_t^{i},Q_t^{-i})\}\cap\{Q_t^{-i}>Q_t^{i}\}$. For the latter set, the fact that $\bar X^{i}$ is strictly increasing in $q^{i}$ implies that $\bar X^{i}(Q_t^{-i},Q_t^{-i})>X_t$, so that $Q_t^{-i}>\psi^{c}(X_t)$ by the equivalences in \eqref{q<psi}. Then, by continuity of $X_t$, it follows that indeed $dQ_t^{-i}=0$.
 
To prove optimality, the integrability requirements \eqref{Vintegrable} and \eqref{limV=0} need to be verified for any admissible outcome that is consistent with $\phi^{-i}$, which will then also cover the given one and complete the proof that firm $i$'s payoff from this outcome is equal to the value of $V^{c}$.

To do so, it will be enough to show that $\abs{V^{c}}$ is bounded by a linear function of $q^{i}+q^{-i}$, because Lemma~\ref{lem:Qintegrable} implies that \eqref{Qintegrable} and \eqref{limQ=0} hold for $Q_T=Q_T^{i}+Q_T^{-i}$ by finiteness of the investment cost for $Q^{i}$ and $Q^{-i}$. This will then yield \eqref{Vintegrable} (by the fact that $Q=Q^{i}+Q^{-i}$ is nondecreasing) and \eqref{limV=0}. Since $X_t\leq\bar X^{-i}(Q_t^{-i},Q_t^{i})$ for all $t\geq 0$ by Lemma~\ref{lem:reflection_outcome}, it is enough to consider only $x\leq\bar X^{-i}$ in \eqref{Vc} to derive the bound on $\abs{V^{c}}$. Then, since $q^{i}\vee q^{-i}\geq c(2\gamma-1)/p^{*}$ by hypothesis,
\begin{equation}\label{x<barXc}
0\leq xP\leq p^{*}\frac{2\gamma}{2\gamma-1}.
\end{equation}
This implies the following bound for the first summand in \eqref{Vc}:
\begin{equation*}
\absd{\frac{xPq^{i}}{r-\mu}}\leq\frac{p^{*}}{r-\mu}\frac{2\gamma}{2\gamma-1}q^{i}\leq\frac{p^{*}}{r-\mu}\frac{2\gamma}{2\gamma-1}\left(q^{i}+q^{-i}\right).
\end{equation*}
For the second summand $Bx^{\beta}$, consider the partial derivative $B_{q^{i}}$. Since $\bar X^{i}\geq p^{*}/P$ by $c\geq 0$, 
\begin{equation*}
B_{q^{i}}\leq\left(\bar X^{i}\right)^{-\beta}\leq\left(\frac{P}{p^{*}}\right)^{\beta}.
\end{equation*}
To obtain a lower bound, note that by $P'<0$ and $\bar X^{i}\geq p^{*}/P$,
\begin{equation*}
B_{q^{i}}\geq-\frac{P}{r-\mu}\left(\bar X^{i}\right)^{1-\beta}\geq-\frac{P}{r-\mu}\left(\frac{P}{p^{*}}\right)^{\beta-1}=-\frac{p^{*}}{r-\mu}\left(\frac{P}{p^{*}}\right)^{\beta}=-\frac{\beta}{\beta-1}\left(\frac{P}{p^{*}}\right)^{\beta}.
\end{equation*}
Combining the upper and lower bounds for $B_{q^{i}}$ and integrating $P^{\beta}=(q^{i}+q^{-i})^{-\beta/\gamma}$ yields
\begin{equation*}
\absd{B}\leq\frac{\beta}{\beta-1}\left(\frac{1}{p^{*}}\right)^{\beta}\frac{\gamma}{\beta-\gamma}\left(q^{i}+q^{-i}\right)^{\frac{\gamma-\beta}{\gamma}}=\frac{\beta}{\beta-1}\left(\frac{P}{p^{*}}\right)^{\beta}\frac{\gamma}{\beta-\gamma}\left(q^{i}+q^{-i}\right).
\end{equation*}
Together with \eqref{x<barXc}, this finally implies the following bound for the second summand in \eqref{Vc}:
\begin{equation*}
\absd{Bx^{\beta}}\leq\frac{\beta}{\beta-1}\left(\frac{xP}{p^{*}}\right)^{\beta}\frac{\gamma}{\beta-\gamma}\left(q^{i}+q^{-i}\right)\leq\frac{\beta}{\beta-1}\left(\frac{2\gamma}{2\gamma-1}\right)^{\beta}\frac{\gamma}{\beta-\gamma}\left(q^{i}+q^{-i}\right),
\end{equation*}

Now all that remains to prove is that $V=V^{c}$ has the three additional properties in Theorem~\ref{thm:ver}. Condition \ref{V_PDE_leq} holds trivially by condition \ref{V_PDE} and $\bar X^{i}=\bar X^{-i}$. 

Concerning condition \ref{V_qi<1}, note that $V_{q^{i}}^{c}$ is of the form $A_{q^{i}}x+B_{q^{i}}x^{\beta}$, and by construction $V_{q^{i}}^{c}=1$ for $x=\bar X^{i}$. Hence, the condition can be written as
\begin{equation*}
\forall x\leq\bar X^{i}\colon\quad\left(1-A_{q^{i}}x\right)x^{-\beta}\geq B_{q^{i}}=\left(1-A_{q^{i}}\bar X^{i}\right)\left(\bar X^{i}\right)^{-\beta}.
\end{equation*}
The left-hand side of this inequality, where $A_{q^{i}}>0$ by $P(q)=q^{-1/\gamma}$, is strictly decreasing in $x$ up to $x=\beta/((\beta-1)A_{q^{i}})$ and then strictly increasing. Therefore, condition \ref{V_qi<1} holds if (and only if) 
\begin{flalign*}
&& &\mathclap{\bar X^{i}\leq\frac{\beta}{\beta-1}\left(A_{q^{i}}\right)^{-1}} && \\
&\mathrlap{\qquad\iff} &&\mathclap{\left(p^{*}+\frac{c}{q^{i}\vee q^{-i}}\right)\left(P\right)^{-1}\leq\frac{\beta}{\beta-1}(r-\mu)\left(P+P'q^{i}\right)^{-1}=p^{*}\frac{q^{i}+q^{-i}}{\frac{\gamma-1}{\gamma}q^{i}+q^{-i}}\left(P\right)^{-1}} && \\
&\mathrlap{\qquad\iff} &&\mathclap{c\frac{\frac{\gamma-1}{\gamma}q^{i}+q^{-i}}{q^{i}\vee q^{-i}}\leq\frac{p^{*}}{\gamma}q^{i}.} &&
\end{flalign*}
For any fixed $q^{i}$, the left-hand side of the last inequality attains a maximum at $q^{-i}=q^{i}$, and then the inequality becomes equivalent to $c(2\gamma-1)/p^{*}\leq q^{i}$, which is indeed true by hypothesis. 

Condition \ref{V_q-i<0} holds (with equality) for $q^{i}\geq q^{-i}$ by Lemma~\ref{lem:Vc_q-i}, so consider $q^{i}\leq q^{-i}$. Since also $V_{q^{-i}}^{c}$ is of the form $A_{q^{-i}}x+B_{q^{-i}}x^{\beta}$, the condition can be written as
\begin{equation*}
B_{q^{-i}}\leq -A_{q^{-i}}\left(\bar X^{i}\right)^{1-\beta}.
\end{equation*}
Denote the right-hand side by $\tilde B_{q^{-i}}$. Because equality holds for $q^{i}=q^{-i}$, the condition is for any $q^{i}<q^{-i}$ equivalent to
\begin{equation*}
\int_{q^{i}}^{q^{-i}}B_{q^{-i}q^{i}}(q,q^{-i})\,dq \geq \int_{q^{i}}^{q^{-i}}\tilde B_{q^{-i}q^{i}}(q,q^{-i})\,dq.
\end{equation*}
Using the facts that
\begin{equation*}
-A_{q^{-i}}=-\frac{P'q^{i}}{r-\mu}=\frac{(q^{i}+q^{-i})^{-\frac{1}{\gamma}-1}q^{i}}{\gamma(r-\mu)}
\end{equation*}
and that $\bar X^{i}$ is given by \eqref{barXc}, a lengthy calculation yields that indeed $B_{q^{-i}q^{i}}\geq \tilde B_{q^{-i}q^{i}}$ for all $q^{i}<q^{-i}$ if and only if $q^{-i}\geq c(\gamma-1)/p^{*}$. Since the latter is implied by the hypothesis that $q^{-i}\geq c(2\gamma-1)/p^{*}$, this completes the proof.
\end{proof}

 \newcommand{\noop}[1]{}

\end{document}